\newtheorem{thm}{Theorem}[section]
\newtheorem{cor}[thm]{Corollary}
\newtheorem{lem}[thm]{Lemma}
\newtheorem{prop}[thm]{Proposition}
\theoremstyle{definition}
\newtheorem{defn}[thm]{Definition}
\newtheorem{example}{Example}
\theoremstyle{remark}
\newtheorem{rem}[thm]{Remark}
 \numberwithin{equation}{section}
\newcommand{\norm}[1]{\left\Vert#1\right\Vert}
\newcommand{\set}[1]{{\left\{#1\right\}}}
\newcommand{\I}{ \mathds{1} }
\newcommand{\Tr}{ \text{Tr} }
\newcommand{\diag}{ \text{diag} }
\newcommand{\mn}[1]{\begin{color}[rgb]{0.00, 0.0, 0.98}#1\end{color}}
\title
[Risk-minimization]
{
Risk-minimization and hedging claims on a jump-diffusion market model, Feynman-Kac Theorem and PIDE.
}
\author{Jacek Jakubowski}
\address{
Institute of Mathematics, University
of Warsaw, \\
 ul. Banacha 2, 02-097 Warszawa, Poland \\
 and \\
 Faculty of
Mathematics and
Information Science,  \\
Warsaw University of
Technology \\
E-mail: {\tt jakub@mimuw.edu.pl }
}
\author{Mariusz Niew\k{e}g\l owski}
\address{
Faculty of Mathematics and
Information Science, Warsaw University of Technology,  \\
ul. Plac Politechniki 1, 00-661 Warszawa, Poland \\
E-mail:  {\tt
M.Nieweglowski@mini.pw.edu.pl}
}
\begin{document}
\begin{abstract}
At first, we solve a problem of finding a risk-minimizing hedging strategy on a general market with ratings. Next,  we find a solution to this problem on Markovian market  with ratings on which prices are influenced by additional factors and rating, and behavior of this system is described by SDE driven by Wiener process and compensated Poisson random measure and claims depend on rating. To find a tool to calculate hedging strategy we prove a Feynman-Kac type theorem. This result is of independent interest and has many applications, since it enables to calculate some conditional expectations using related PIDE's.
We illustrate our theory on two examples of market. The first is  a general exponential L\'{e}vy model with stochastic volatility,  and the second is a generalization of exponential L\'{e}vy model with regime-switching.
\end{abstract}

\maketitle

\begin{center}
This version \today
\end{center}
\begin{quote}
\noindent  \textbf{Key words}: Feynman-Kac theorem, PIDE,  risk-minimization,  quadratic hedging, jump-diffusion market model, dividend process, Markovian market driven by L\'{e}vy noise.

\noindent
\textbf{AMS Subject Classification}:
Primary 60H30, 91G40, Secondary 91G80, 91G20.
\end{quote}

\section{Introduction}


F\"{o}lmer and Sondermann \cite{folsch1991}   introduced the
concept of risk-minimization as a tool for hedging and pricing in
the incomplete markets. They considered the
problem of hedging a payment at maturity
which is non-attainable, that is a claim $H$ for which there is
no self-financing strategy that replicates $H$. The idea
of F\"{o}lmer and Sondermann was to drop the self-financing condition and look for strategies
that hedges $H$ perfectly and minimize conditional variance of the
remaining cost at each time $t$. In \cite{folson1986} they proved that
there exists a unique risk-minimizing hedging strategy for an arbitrary
square integrable payoff at fixed maturity $T < \infty$ provided that
the process of discounted price of risky asset is a martingale.
 The problem of finding the explicit form of risk-minimization strategy for an European contingent claim being a function of asset price at time $T$ were solved by
 Elliott and  F\"{o}llmer \cite{ef1991}  in the one-dimensional  Markovian  case as well as in general case by using orthogonal martingale representation.
 M{\o}ller \cite{mol2001} generalized  results of F\"{o}lmer and Sondermann to the case in which
liabilities of the hedger are described by an arbitrary square integrable and c\`{a}dl\`{a}g payment process.
Subsequently, Schweizer \cite{sch1993} proved
that if a discounted
price process is not a martingale then one cannot, in general, find
a risk-minimizing strategy. To overcome this problem Schweizer
\cite{sch1993} introduced the new concept of the
local-risk-minimization which generalizes risk-minimization and
it is suitable  for general special semimartingales.
 However, to apply a local-risk-minimization strategy it is necessary to know drift coefficient, and usually it is hard to estimate it (see discussion in Tankov \cite{tankov2011}).
In this paper, we consider the risk-minimization problem.


In Section 2 we present the state of knowledge, so we recall main definitions and
 results
on existence and uniqueness of risk-minimizing  strategy for discounted cumulated payments.

 In Section 3 we solve the risk-minimization problem on  a general market with ratings. We consider a market with time horizon $T^* < \infty$ and with
$d$ primary risky assets with the price process denoted by $S$ and a money
account with the price process $B$.
The dynamics of discounted price process  $S^*_t :=  S_t / B_t$ is given by
SDE driven by a Wiener process,  a compensated   integer-valued random measure, and a martingales  $M^{i,j}$ influenced by a credit rating of some corporate or a state of
economy.
  Theorem \ref{thm:risk-min-spec} 
  states that the risk-minimizing strategy,
under appropriate conditions, solves some linear system of equations. Moreover,  using the Galtchouk-Kunita-Watanabe  decomposition of discounted cumulated payments,  we give an explicit form of this strategy in terms of coefficients which appear in the dynamics describing prices  and in terms of processes which appear in a martingale representation of discounted cumulated payments.
Existence of  this representation is  assured by a weak property of predictable representation.
There are many models in which this property holds.
 In Theorem \ref{prop:attain} we formulate the necessary and sufficient conditions for attainability of a dividend process. Since these conditions are very restrictive this theorem underline the necessity of using another methods of hedging the risk for non-attainable  dividend streams, e.g., finding of  the risk-minimizing strategies as we do here.

 In  Section 4 we will consider very flexible Markovian market model with ratings, which is  driven by
 a standard  Wiener process,  a  compensated Poisson random measure, and
 counting point processes with intensities.
 As we can see in Section 5 this
allows us to derive more explicit formulae for risk-minimizing
strategies by means of solutions to some partial integro-differential equations (PIDE's).
  We assume that information available to the market participants is modeled by a multidimensional process $(Y,C)$ given as a solution to SDE,
 where $Y=(S, R)$ with $S$ being a process of prices of tradable risky assets and $R$ represent some factors of  economic environment such as interest rates, inflation or stochastic volatility.
 The evolution
of price depends on economic conditions of market which
are described by rating system $C$ and additional non-tradable risk factor process $R$. So we assume that  dynamics of the  process $(Y,C)$ is described by SDE  \eqref{eq:SDE-gen} in which the credit rating of corporate has influence on asset prices $S$ and other non-tradable factor $R$ by
changing drift and volatility and moreover a change in credit rating
  causes a jump.
 The  money account on this market  is given by \eqref{eq:bank-acc}.
  In Theorem \ref{thm:risk-min-markov} we establish a 0-achieving risk-minimizing strategy for a general payment stream $D$ given by \eqref{eq:D} which corresponds to rating sensitive claims considered in Jakubowski and Nieweglowski \cite{jaknie2011}.
We connect this problem with finding the ex-dividend price function $v$, and
  for a sufficiently regular function $v$ we can find a  0-achieving risk-minimizing strategy for $D$ written explicitly in the terms of this function $v$.

In Section 5 we formulate and prove a Feynman-Kac type theorem for components   of a weak solution to SDE \eqref{eq:SDE-gen}. So, we connect a problem of calculation of conditional expectation of some natural functionals of these components with solution to corresponding PIDE with given boundary conditions.
Note also that if there is no component $C$ we have obtained the clasical Feynman-Kac theorem.
It is worth to notice that we solve a general problem which is of independent interest and has many applications, among others in finding the risk-minimizing strategies.

In Section 6 we discuss possibly  extensions of our results to risk-minimization problems. Moreover, we present examples of applications  of our results considering a general exponential L\'{e}vy model with stochastic volatility, a model which generalizes a regime switching model with jumps and semi-Markovian regime switching models.
In appendix we prove the result which  connects the existence  of replication strategy for a payment stream  and the form of Galtchouk-Kunita-Watanabe  decomposition of some random variable.

Our models are very general, and can be applied to the most models which appear in finance, e.g.
stochastic volatility models, regime switching Black Scholes type  models considered among  others by  Elliott, Chan and Siu \cite{ecs2005},
 Siu and Yang \cite{sy2009},
 Yao, Zhang and Zhou \cite{yzz2006}, Yuen and Yang \cite{yy2009} and also
regime switching L\'evy models studied by
 Chourdakis \cite{ch2005}, Mijatovic and Pistorius \cite{mijpis2011},
Kim, Fabozzi, Lin and Rachev \cite{kflr2012}. In our models we consider the additional factors which influence the prices, so known the results can be generalized. This justifies the assertion that our models well described a real market.

Looking at our model from point of view of insurance (so changing interpretation of abstract objects in our model) we obtain new results in  insurance, e.g. on risk minimizing hedging strategies of insurance payment processes studied by M{\o}ller \cite{mol2001}.


%

\section{Risk-minimization}
In this
section we briefly recall main definitions and theorems which allows
us to define precisely notion of risk-minimality and present results
on existence and uniqueness of risk-minimizing hedging strategies.
We will consider processes on a complete probability space
 $(\Omega, \mathcal{F}, \mathbb{P})$ with filtration $\mathbb{F} = (\mathcal{F}_t)_{t \in [\![0,T^*]\!]} $ satisfying the "usual conditions" and $\mathcal{F}=\mathcal{F}_{T^*}$.
By $\mathcal{ P }$ we denote the predictable sigma algebra on $\Omega \times  \mathbf{R}_+$.
 We consider a market with a risky
assets price process denoted by $S$ and  a bank account price process  $B$.  By $\varphi= (\phi,\eta)$ we denote
 a strategy that describes number of assets hold in
portfolio at time $t$, i.e.  number of assets invested in risky assets and in bank account, respectively.
 We specify later the measurability requirements imposed on $\varphi$.
The process $V_t(\varphi) := \phi_t^\top S_t + \eta_t
B_t$ denotes the wealth of portfolio $\varphi$ at time $t$. By
$V^*(\varphi)$ we denote the discounted wealth of the portfolio
$\varphi$, so $V^*_t(\varphi) = \phi_t^\top S^*_t + \eta_t $. We
make an assumption that $S^* :=  S_t / B_t$ is a square integrable
martingale under the measure $\mathbb{P}$. By $\langle S^*
\rangle$ we denote the matrix of predictable covariations of $S^*$,
i.e.,
\[
\langle S^* \rangle_t  = \left[
\langle S^*_i ,
S^*_j\rangle_t \right]_{i,j},
\]
where $\langle S^*_i , S^*_j\rangle_t$ is the unique predictable
proces such that $S^*_i S^*_j - \langle S^*_i , S^*_j\rangle_t$ is
a martingale. In the sequel we use convention from stochastic integration theory writing  $\int_t^u$ instead of  $\int_{]\!]t,u]\!]}$ in stochastic and Lebesgue integrals.
\begin{defn}
We say that $\varphi = (\phi,\eta)$ is an $L^2$-strategy if $\phi$
is a predictable process such that
\begin{equation}\label{eq:S-integr-cond}
 \mathbb{E} \left( \int_{0}^T \phi^{\!\top}_t d \langle S^* \rangle_t \phi_t \right) < \infty,
\end{equation}
 $\eta$ is an $\mathbb{F}$-adapted process, $V^*_t (\varphi) :=
\phi_t^\top \! S^*_t - \eta_t$ is a right continuous and square
integrable. An $L^2$-strategy $\varphi$  is
called \emph{$0$-achieving strategy}, if  $V_T (\varphi) = 0 $.
\end{defn}
We consider a contract between two parties, a seller (also
called hedger) and a buyer, which
specifies precisely the
cash-flows between these two parties. These  cash-flows are described by a
c\'{a}dl\'{a}g processes $D$, i.e. $D_t$ represents accumulated
payments (both outflows and injections of cash from the buyer) up
to time $t$. We assume additionally that $D$ matures at some
finite non-random time $T < \infty $, that is we have $D_t = D_T$
for $t \geq T$. This process $D$ is called a dividend process or a
payments stream process. Seller of this claim can actively trade
in the market according to strategy $\varphi$. Since we do not
restricted to self-financing strategies this strategy can generate during period $(0,T]$ some cost which is defined below
\begin{defn}
The \emph{cost process} of an $L^2$-strategy $\varphi=(\phi,\eta)$
associated with a dividend process $D$ is given by
\begin{equation}\label{eq:cost-D}
C^D_t (\varphi) := \int_0^t \frac{1}{B_u} d D_u + V^*_t(\varphi)
- \int_0^t \phi^\top_t d S^*_t
\end{equation}
for $0 \leq  t \leq T$, provided that $\int_{]0,\cdot]}
\frac{1}{B_u} d D_u$ is square integrable. If the process $C^D$ is a constant,
then we say that $\varphi$ finances dividend process  $D$ (or
$\varphi$ is a self-financing for $D$). If $C^D$ is a martingale, then
we say that $\varphi$ is a mean-self-financing for $D$.
\end{defn}

The cost is the sum of three components, the first describes the
cumulated discounted dividends, the second is equal to the discounted wealth of the portfolio
and the third is  equal to the minus one multiply by   the discounted gains
from trading using $\varphi$. Obviously, the result of this simple
arithmetics deserves, from the
hedger point of view, for the name accumulated cost. In this paper we take a point of view of the hedger, i.e.,
a person who is obligated to deliver payments according to a
dividend process $D$. Therefore the gains from trading strategy
are subtracted because a  negative cost is a hedger's income.
Therefore  $V_t(\phi)$ could be interpreted as the wealth of portfolio
after delivering  payments described by  $D$ at/up to time $t$. Since $D$
matures at $T$,  it is quite natural to restrict our considerations
to the $0$--achieving strategies since the hedger of
$D$ should stop trading  after $T$. For more detailed motivation of above
definition we refer to M{\o}ller
\cite{mol2001}. We stress   that we don't assume that $D$ is positive
or has positive jumps. With a strategy $\varphi$ we connect another
quantity

\begin{defn}
The \emph{risk process} $R^D(\varphi)$ of a strategy $\varphi$
associated with $D$ is defined by
\[
    R^D_t(\varphi) = \mathbb{E} \left( ( C^D_T(\varphi) - C^D_t(\varphi))^2 | \mathcal{F}_t \right)
\]
for $0 \leq t \leq T$.
\end{defn}

F\"{o}llmer and Sonderman  \cite{folson1986} proposed the following
    definition of risk-minimality:
\begin{defn}
We say that $\varphi = (\phi,\eta)$ is a \emph{risk-minimizing
strategy} (for a dividend process $D$) if for any $t \in [\![0,T]\!]$
and any strategy $\widehat{\varphi} =
(\widehat{\phi},\widehat{\eta})$ satisfying
\begin{align}
\label{eq:portfolio-terminal}
V^*_T ( \varphi ) =  V^*_T ( \widehat{\varphi} ) \ \ \ \ \ \  \mathbb{P} - a.s., \\
\label{eq:portfolio-inter} \phi_s = \widehat{\phi}_s  \text{ for } s
\leq t, \text{ and  } \eta_s = \widehat{\eta}_s  \text{ for }  s
\leq t,
\end{align}
we have $R^D_t ( \widehat{\varphi}) \geq R^D_t (\varphi)$.
\end{defn}
Sometimes we call $\varphi$ a $D$-risk-minimizing strategy. A
strategy $\widehat{\varphi}$ satisfying
\eqref{eq:portfolio-terminal} and \eqref{eq:portfolio-inter} is
called the admissible continuation of $\varphi$ at time $t$.

F\"{o}llmer and Sonderman \cite{folson1986}  noticed that the
problem of finding of risk-minimizing hedging strategy can be
solved by using the Galtchouk-Kunita-Watanabe decomposition (GKW decomposition). However,
in \cite{folson1986} only the case of a single payment at maturity
is considered. The general case of payment streams was considered
by M{\o}ller \cite{mol2001} who noticed that
the GKW decomposition of martingale defined by
\begin{equation}\label{eq:intr-risk}
V^*_t := \mathbb{E} \left( \int_0^T \frac{1}{B_u} d D_u |
\mathcal{F}_t \right), \quad 0 \leq t \leq T,
\end{equation}
is a very useful in deriving $D$-risk-minimizing hedging
strategies. We recall that if $S^*$ and $V^*$ are square
integrable martingales, then $V^*$ can be uniquely decomposed
(Galtchouk-Kunita-Watanabe) as
\begin{align}\label{eq:GKW}
V^*_t = V^*_0 + \int_0^t \phi^D_u d S^*_u + L^D_t,
\end{align}
where $L^D$ is a zero mean martingale orthogonal to $S^*$ (i.e.,
$S^* L^D$ is a martingale) and $\phi^D$ is a predictable process
satisfying integrability assumption \eqref{eq:S-integr-cond}.
 M{\o}ller \cite{mol2001} obtained formula for a  risk-minimizing
hedging strategy for a dividend process $D$ using the GKW decomposition of $V^*$:
\begin{thm}\cite{mol2001}\label{thm:risk-min-general}
There exist a unique 0-achieving risk-minimizing strategy $\varphi =
(\phi,\eta)$ for the dividend process $D$. It is of the following
form
\[
    (\phi_t, \eta_t ) = \left(\phi^D_t, V^*_t - \int_0^t \frac{1}{B_u} d D_u - \phi^D_t S^*_t
    \right),
\]
where $\phi^D$ is from GKW decomposition of $V^*$. The risk
process of risk-minimizing strategy $\varphi$ is given by $R^D_t(\varphi) =
\mathbb{E} \left(  (L^D_T - L^D_t)^2| \mathcal{F}_t \right)$.
\end{thm}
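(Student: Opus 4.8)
The plan is to prove Theorem~\ref{thm:risk-min-general} by reducing the risk-minimization problem to the Galtchouk--Kunita--Watanabe decomposition \eqref{eq:GKW} of the martingale $V^*$ defined in \eqref{eq:intr-risk}. The key structural observation is that among all admissible continuations, the $0$-achieving requirement forces the cost process of any candidate strategy to be closely tied to $V^*$. First I would compute the cost process $C^D(\varphi)$ explicitly for a general $L^2$-strategy and show that, on the one hand, a mean-self-financing $0$-achieving strategy must have cost process equal to $V^*$ up to a constant, and on the other hand that minimizing the risk $R^D_t(\varphi)$ over admissible continuations is equivalent to making the cost process a martingale (this is the classical F\"ollmer--Sondermann argument that risk-minimality implies mean-self-financing).

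The core of the argument proceeds as follows. Using the definition \eqref{eq:cost-D} of the cost process together with the candidate choice $\phi_t = \phi^D_t$ and $\eta_t = V^*_t - \int_0^t \frac{1}{B_u}\,dD_u - \phi^D_t S^*_t$, I would verify directly that $V^*_t(\varphi) = \phi^D_t S^*_t + \eta_t = V^*_t - \int_0^t \frac{1}{B_u}\,dD_u$, and hence that
\[
C^D_t(\varphi) = \int_0^t \frac{1}{B_u}\,dD_u + V^*_t(\varphi) - \int_0^t \phi^D_u\,dS^*_u = V^*_t - \int_0^t \phi^D_u\,dS^*_u = V^*_0 + L^D_t,
\]
where the last equality uses \eqref{eq:GKW}. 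Thus the cost process of the proposed strategy equals $V^*_0 + L^D$, which is a martingale, so $\varphi$ is mean-self-financing, and since $L^D$ is orthogonal to $S^*$ this is precisely the risk-minimizing property. I would also check the $0$-achieving condition $V_T(\varphi)=0$: since $D_T = \int_0^T dD_u$ and $V^*_T = \mathbb{E}(\int_0^T \frac{1}{B_u}\,dD_u \mid \mathcal{F}_T) = \int_0^T \frac{1}{B_u}\,dD_u$, we get $V^*_T(\varphi) = V^*_T - \int_0^T \frac{1}{B_u}\,dD_u = 0$, whence $V_T(\varphi) = B_T V^*_T(\varphi) = 0$.

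For optimality and uniqueness, I would argue that any admissible continuation $\widehat{\varphi}$ agreeing with $\varphi$ up to time $t$ and satisfying $V^*_T(\widehat{\varphi}) = V^*_T(\varphi)$ has a cost process whose increment $C^D_T(\widehat{\varphi}) - C^D_t(\widehat{\varphi})$ differs from that of $\varphi$ by a stochastic integral against $S^*$. Because $L^D$ is orthogonal to $S^*$, the Pythagorean identity for the conditional second moment gives $R^D_t(\widehat{\varphi}) = R^D_t(\varphi) + \mathbb{E}\big((\int_t^T (\phi^D_u - \widehat{\phi}_u)\,dS^*_u)^2 \mid \mathcal{F}_t\big) \geq R^D_t(\varphi)$, with equality forcing $\widehat{\phi} = \phi^D$ on $]\!]t,T]\!]$ and then $\widehat{\eta} = \eta$ by the terminal and measurability constraints; this yields both minimality and uniqueness. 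The final expression $R^D_t(\varphi) = \mathbb{E}((L^D_T - L^D_t)^2 \mid \mathcal{F}_t)$ follows immediately from $C^D_T(\varphi) - C^D_t(\varphi) = L^D_T - L^D_t$.

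The main obstacle I anticipate is the rigorous justification of the orthogonal decomposition step: one must ensure that the stochastic integral $\int_t^T(\phi^D_u - \widehat{\phi}_u)\,dS^*_u$ is genuinely orthogonal (in the conditional $L^2$ sense) to the increment $L^D_T - L^D_t$, which requires the integrability hypothesis \eqref{eq:S-integr-cond} on both $\phi^D$ and $\widehat{\phi}$, square-integrability of $S^*$, and the defining orthogonality $S^* L^D \in \mathcal{M}$. Handling the bookkeeping so that all cross terms vanish under $\mathbb{E}(\cdot \mid \mathcal{F}_t)$, and confirming that the class of admissible continuations is rich enough to force uniqueness, is where the care is needed; the rest is essentially a direct verification built on the GKW decomposition \eqref{eq:GKW}.
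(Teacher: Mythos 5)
The paper does not actually prove this theorem --- it is imported verbatim from M{\o}ller \cite{mol2001} --- so there is no in-paper proof to compare against. Your sketch reconstructs the standard F\"ollmer--Sondermann/M{\o}ller argument, and its core is correct: the candidate strategy has cost process $V^*_0 + L^D$, is $0$-achieving because $V^*_T = \int_0^T B_u^{-1}\,dD_u$, and for any admissible continuation the cost increment over $]\!]t,T]\!]$ equals $L^D_T - L^D_t + \int_t^T (\phi^D_u - \widehat{\phi}_u)^\top dS^*_u$ (since $V^*_t(\widehat{\varphi}) = V^*_t(\varphi)$ and $V^*_T(\widehat{\varphi}) = 0$ are frozen by the admissibility conditions), whence the Pythagorean identity and minimality follow from the strong orthogonality of $L^D$ and $S^*$. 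All of that is right.

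The one step that does not work as written is the uniqueness of $\eta$. The increment $C^D_T - C^D_t$ depends on $\widehat{\eta}$ only through the endpoint values $V^*_t(\widehat{\varphi})$ and $V^*_T(\widehat{\varphi})$, both of which are fixed by \eqref{eq:portfolio-terminal}--\eqref{eq:portfolio-inter}; so the minimization at time $t$ says nothing about $\widehat{\eta}$ on the open interval, and ``terminal and measurability constraints'' do not pin it down either --- with the conditions exactly as stated, any $\eta$ with $V^*_T(\varphi)=0$ paired with $\phi^D$ would pass the risk-minimality test. What actually determines $\eta$ is the mean-self-financing lemma you allude to in your opening paragraph: writing $R^D_t = \mathrm{Var}(C^D_T\,|\,\mathcal{F}_t) + \left(\mathbb{E}(C^D_T\,|\,\mathcal{F}_t) - C^D_t\right)^2$ and perturbing $\eta_t$ in an admissible continuation at time $t$ forces $C^D_t = \mathbb{E}(C^D_T\,|\,\mathcal{F}_t)$; this requires the M{\o}ller convention that the continuation agrees with $\varphi$ only for $\eta_s$, $s<t$ (the ``$s\le t$'' in the paper's definition appears to be a typo, and under it uniqueness of $\eta$ genuinely fails). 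Once mean-self-financing is established, $V^*_t(\psi) = V^*_t - \int_0^t B_u^{-1}\,dD_u$ for any competing minimizer $\psi$, and combined with $\psi^1 = \phi^D$ (obtained from your identity at $t=0$, where every $0$-achieving strategy is an admissible continuation up to its time-$0$ value) this yields full uniqueness. So: keep the construction and the minimality argument, but route the uniqueness of $\eta$ through the mean-self-financing property rather than through the terminal constraint.
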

Subsequently,  we consider only  the  0-achieving risk-minimizing strategies for dividend processes.
To shorten notation, we sometimes omit description
 "0-achieving" in a risk-minimizing strategy.
\begin{rem}
F\"{o}lmer and Sonderman \cite{folson1986} restrict themselves to finding
risk-minimizing strategies such that $V^*_T ( \varphi ) = X$. This is a
consequence,  as
explained by M{\o}ller \cite{mol2001}, of considering only single payments at maturity and
slightly different definition of the cost of strategy which is
independent of the claim $X$.
Note that this modification yields that for a 0-achieving risk-minimizing strategy $\varphi$ we have
\[
	V_t( \varphi)
=
B_t
\mathbb{E}\left(
\int_t^T \frac{1}{B_u} d D_u
\big|
\mathcal{F}_t
\right),
\]
i.e., the wealth of the 0-achieving risk-minimizing strategy is equal to the ex-dividend price of the claim with dividend process $D$.
\end{rem}

\section{Risk-minimizing hedging strategy on a general market with ratings}\label{sec:lmm-g}


 In this section we apply results described in the previous section to solving risk-minimization problem on  the general market with ratings. We consider a market with time horizon $T^* < \infty$ and with
$d$ primary risky assets with the price process denoted by $S$ and a money
account with the price process $B$.

The dynamics of discounted price process  $S^*_t :=  S_t / B_t$ is given by
\begin{equation}\label{eq:sde-lmm}
d S^*_t = \sigma_t d W_t  +
\int_{\mathbf{R}^n}
\!\!\!F_t(x) \widetilde{\Pi} (dx,dt) + \sum_{i,j \in \mathcal{K}
,j\neq i} {\rho^{i,j}_t} d M^{i,j}_t, \quad S^*_0 = s.
\end{equation}
Here, $W$ is an $n$ dimensional Wiener process, by $\widetilde{\Pi}$ we denote a compensated random measure
 \[
    \widetilde{\Pi}(dx , du) := \Pi (dx , du) - Q(dx,du ),
 \]
where $\Pi(dx,dt)$ is assumed to be an integer-valued random measure
on $\mathcal{B}(\mathbf{R}^n) \otimes \mathcal{B}(\mathbf{R}_+)$ with the unique compensator $Q(dx ,dt)$ which has density $\nu = (\nu_u)$ with respect to time, i.e.,
 \[
    Q(dx,du ) = \nu_u(dx)du,
 \]
 where $\nu_u$ is a L\'{e}vy measure.
Processes $M^{i,j}$ are driven by an additional source of
uncertainty, i.e., by a c\`{a}dl\`{a}g process $C$ taking values in a
finite set $\mathcal{K} = \set{1, \ldots, K}$ which can be
interpreted as a credit rating of some corporate or as a state of
economy. For each $i \in \mathcal{K}$ we define the process
\begin{align*}
H^i_t := 1_\set{i} (C_{t})
\end{align*}
 which is a c\`{a}dl\`{a}g process,
taking values in $\set{0,1}$, indicating in which state the process
$C$ is at given time $t$. In such models $\sigma$ has often a form
\[
    \sigma_t = \sum_{i \in \mathcal{K} } H^i_{t-}  \widehat{\sigma}^i_t,
\]
where $\widehat{\sigma}^i$ depends only on $S$. The same remark concerns $F$.
Moreover, we define processes $H^{i,j}$
by letting
\[
H^{i,j}_t := \sum_{0 < u \leq t} H^i_{u-} H^j_u.
\]
This process counts the number of transition from state $i$ to $j$ during
time interval $[0,t]$. Therefore $H^{i,j}$ are mutually orthogonal point processes
with jumps equal to $1$.

In this paper we make the standing assumptions:

\textbf{Assumption EI.} There exist nonnegative bounded
processes $\lambda^{i,j}$, $i,j \in \mathcal{K}$, $j \neq i $,
such that processes $M^{i,j}$ defined by
\begin{equation}\label{eq:M-ij}
    M^{i,j}_t = H^{i,j}_t - \int_{0}^{t} \lambda^{i,j}_u du
\end{equation}
are martingales.

\textbf{Assumption NCJ.} We require that the
 random measure $\Pi $ and the processes $H^{i,j}$, for $i$, $j
\in \mathcal{K}$, $i \neq j$, have no common jumps, i.e., for
every $t > 0 $ and every $b>0$,
  \begin{equation}\label{eq:hip-A}
\int_0^t \int_{ \norm{ x} > b}
 \Delta H^{i,j}_u \Pi( dx, du) =0 \quad  \mathbb{P} - a.s.
\end{equation}

\textbf{Assumption INT.} The process $\sigma$ is
predictable with values in the space of matrices of dimension $ d
\times n $, $F$ is a mapping from $\Omega \times \mathbf{R}_{+} \times \mathbf{R}^n$ with values in $\mathbf{R}^d$ which is  $\mathcal{P} \otimes \mathcal{B}(\mathbf{R}^n)$ measurable,  and $\rho^{i,j}$ are predictable processes with values
in the space of vectors of dimension $d$ satisfying
\begin{equation}\label{eq:int-cond}
\mathbb{E} \!\left( \!\int_0^{T^*} \!\!\left( \norm{\sigma_t }^2
\!\!+\!\!\int_{\mathbf{R}^n} \!\!\norm{F_t (x) }^2 \nu_t(dx)  +
\!\!\!\! \sum_{i,j \in \mathcal{K}: j \neq i } \norm{\rho^{i,j}_t}^2
\!\lambda^{i,j}_t \right) \!dt \!\right) < \infty.
\end{equation}

One of consequences of Assumption EI is that $H^{i,j}$ are counting processes with rating dependent intensities $\lambda^{i,j}$. It is worth to note that
\[
\lambda^{i,j}_t=
H^{i}_{t-} \lambda^{i,j}_t,
\quad
dt \times d\mathbb{P} \ a.e. ,
\]
 so $\lambda^{i,j}_t$ depends on rating at time $t-$ but in general it may also depend on the trajectory of rating process $C$ on the interval $[\![0, t [\![$.
Assumption INT implies that $S^*$ is a square integrable
martingale, so $\mathbb{P}$ is a martingale measure. We denote by $G$ the matrix-valued stochastic
process
\begin{equation}\label{eq:pred-cov-i}
G_t := \left( \sigma_t  ( \sigma_t )^\top  +
\int_{\mathbf{R}^n} F_t(x) (F_t(x))^\top \nu_t(dx)   + \left(
\sum_{i,j \in \mathcal{K} : j \neq i}  {\rho^{i,j}_t} \left(
\rho^{i,j}_t \right)^\top \lambda^{i,j}_t\right) \right).
\end{equation}
The process $G$ exists and is
finite by Assumption INT. One could notice that $G$ represents the density with respect to time of
predictable covariations  of $S^*$, i.e., the matrix of predictable covariations  of $S^*$ under
$\mathbb{P}$ is given by
\[
d \langle S^* \rangle_t = G_{t} dt.
\]

The theorem below is the main result of this section and states that,
under appropriate conditions, the risk-minimizing strategy can be
calculated by solving a linear system of equations. We find an explicit form of risk-minimizing strategy in terms of coefficients which appear in formula describing dynamics of prices  and in terms of processes which appear in a martingale representation of discounted cumulated payments.
\begin{thm}\label{thm:risk-min-spec}
Fix a dividend process $D$. Assume that the square integrable random variable $X := \int_0^T
\frac{1}{B_u} d D_u $ representing discounted cumulated payments up
to maturity time $T$ has the representation
\begin{equation}\label{eq:repr-H-P}
X = \mathbb{E} (X) + \int_{0}^T \delta_t^\top d W_t +
\int_{0}^T \int_{\mathbf{R}^n} J_t(x) \widetilde{ \Pi }(dt, dx) +
\sum_{i,j \in \mathcal{K} : j \neq i} \int_0^T \gamma^{i,j}_t d M^{i,j}_t.
\end{equation}
Then
there exists a  0-achieving risk-minimization strategy $\varphi = (\phi,\eta)$ for the dividend process $D$. The component $\phi$ of this strategy is the predictable version of the solution to the
linear system
\begin{align}\label{eq:GKW-LSE}
    G_t \phi_t = A_t,
\end{align}
where  a predictable process $A$ is given by the formula
\begin{align}\label{eq:Y-def}
A_t :=  \sigma_t  \delta_t + \int_{\mathbf{R}^n} F_t  (x)
J_t(x) \nu_t(dx) +
  \sum_{i,j \in \mathcal{K} : j \neq i} {\rho^{i,j}_t}  \gamma^{i,j}_t  \lambda^{i,j}_t.
\end{align}
The second component $\eta$ of strategy $\varphi$ is given by
\[
\eta_t = V^*_t - \int_0^t \frac{1}{B_u} d D_u - \phi_t^\top S^*_t.
\]
Moreover, the dynamics of $L^X$ has the form
\begin{align}
\label{eq:LX}
d L^X_{t} &= \left( \delta_t^\top - \phi_t^\top  \sigma_t  \right) d W_t  +
\int_{\mathbf{R}^n} \left( J_t(x) - \phi_t^\top  F_t (x) \right) \widetilde{ \Pi} (dt,dx)
\\ \nonumber
& + \sum_{i,j \in \mathcal{K} : j \neq i}  \left( \gamma^{i,j}_t -  \phi_t^\top
{\rho^{i,j}_t} \right)dM^{i,j}_t.
\end{align}
\end{thm}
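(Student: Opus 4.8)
The plan is to apply the general result of Theorem \ref{thm:risk-min-general} (M{\o}ller), which already guarantees existence and uniqueness of the $0$-achieving risk-minimizing strategy and expresses it through the GKW decomposition \eqref{eq:GKW} of the martingale $V^*_t := \mathbb{E}(X \mid \mathcal{F}_t)$. Thus the entire task reduces to identifying the integrand $\phi^D$ (here denoted $\phi$) and the orthogonal martingale $L^D$ (here $L^X$) concretely in terms of the data $\delta, J, \gamma^{i,j}$ supplied by the representation \eqref{eq:repr-H-P}. The key observation is that conditioning the representation of $X$ produces the martingale $V^*$ directly: since $W$, $\widetilde{\Pi}$, and the $M^{i,j}$ are all $\mathbb{F}$-martingales, taking $\mathbb{E}(\,\cdot\mid\mathcal{F}_t)$ of \eqref{eq:repr-H-P} gives $V^*_t = \mathbb{E}(X) + \int_0^t \delta_s^\top dW_s + \int_0^t\!\int_{\mathbf{R}^n} J_s(x)\widetilde{\Pi}(ds,dx) + \sum_{i,j}\int_0^t \gamma^{i,j}_s\, dM^{i,j}_s$.

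First I would postulate the ansatz $L^X := V^* - V^*_0 - \int_0^\cdot \phi_s^\top dS^*_s$ with $\phi$ defined by \eqref{eq:GKW-LSE}, and then subtract $\int_0^\cdot \phi_s^\top dS^*_s$ (using the dynamics \eqref{eq:sde-lmm} for $dS^*$) from the martingale representation of $V^*$. Collecting the $dW$, $\widetilde{\Pi}$, and $dM^{i,j}$ terms yields precisely the claimed dynamics \eqref{eq:LX} for $L^X$. Second, I would verify that this $L^X$ is genuinely orthogonal to $S^*$, i.e. that $S^* L^X$ is a martingale, which by the characterization of orthogonality amounts to showing $\langle S^*, L^X\rangle \equiv 0$. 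Computing the predictable covariation of the two integral representations componentwise — using that $W$, $\widetilde{\Pi}$ with distinct jump structure, and the mutually orthogonal $M^{i,j}$ contribute only through their respective quadratic characteristics ($dt$, $\nu_t(dx)\,dt$, and $\lambda^{i,j}_t\,dt$ by Assumption EI), and crucially that Assumption NCJ rules out common jumps between $\Pi$ and the $H^{i,j}$ — the covariation density is exactly $\sigma_t\delta_t + \int F_t(x)J_t(x)\nu_t(dx) + \sum_{i,j}\rho^{i,j}_t\gamma^{i,j}_t\lambda^{i,j}_t - G_t\phi_t = A_t - G_t\phi_t$, which vanishes by the defining equation \eqref{eq:GKW-LSE}.

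The remaining routine steps are to confirm integrability: Assumption INT and the square-integrability of $X$ together with \eqref{eq:S-integr-cond} ensure $\phi$ satisfies the integrability condition so that $\int_0^\cdot \phi_s^\top dS^*_s$ is a well-defined square-integrable martingale, hence $L^X$ is too; and that $\eta$ as defined makes $\varphi$ a $0$-achieving strategy, which follows from the formula in Theorem \ref{thm:risk-min-general} upon identifying $\phi^D = \phi$. By the uniqueness clause of the GKW decomposition, the pair $(\phi, L^X)$ so identified must coincide with the GKW pair $(\phi^D, L^D)$, so the strategy we have written down is the unique risk-minimizing one.

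I expect the main obstacle to be the orthogonality computation, and specifically the careful bookkeeping of the mixed covariation terms. One must argue that the cross terms between the Wiener part and the two jump parts vanish identically, that the cross terms between the $\widetilde{\Pi}$-integral and the $M^{i,j}$-integrals vanish — this is exactly where Assumption NCJ does the essential work, since without the no-common-jumps hypothesis the covariation of the random-measure integral against the counting-process martingales need not be zero — and that within each homogeneous block the covariation reproduces the correct term of $G_t$ and of $A_t$. A subtlety worth flagging is the passage from $\phi$ solving the linear system $dt\times d\mathbb{P}$-a.e. to a genuinely \emph{predictable} version (as stated, "the predictable version of the solution"), which one justifies by a measurable-selection argument since $G$ and $A$ are predictable and the solution set of the linear system is then predictably measurable.
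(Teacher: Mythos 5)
Your proposal is correct and follows essentially the same route as the paper: reduce to M{\o}ller's Theorem \ref{thm:risk-min-general}, identify the GKW integrand by computing the predictable covariation of the candidate $L^X$ against each component of $S^*$ (using Assumptions EI and NCJ exactly as you describe), and read off the linear system \eqref{eq:GKW-LSE}. The only point where your logical order leaves a small hole is at the very start: you \emph{define} $\phi$ as a solution of $G_t\phi_t = A_t$, which presupposes that this system is solvable even when $G_t$ is singular; the paper runs the argument in the opposite direction, deducing from the existence of the GKW decomposition (automatic for the square-integrable martingale $V^*$) that its integrand must satisfy the system, hence $A_t \in \mathrm{Im}\, G_t$, and then takes the minimum-norm (predictable) solution — you should add this one observation to justify your ansatz.
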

\begin{proof}[Proof of Theorem \ref{thm:risk-min-spec}]
 By Theorem \ref{thm:risk-min-general} we only need to find  the explicit form of  GKW decomposition of $V^*$ defined by  \eqref{eq:intr-risk}.
  The GKW decomposition of $V^*$ always exists since $V^*$ is a square integrable martingale.
 Thus, we are looking for $\phi$ and  $L^X$
such that
\begin{equation}\label{eq:gkw-dec}
d  V^*_t = \phi_t^\top d S^*_t + d L^X_t,
\end{equation}
where $L^X$ is a martingale orthogonal to the martingale $S^*$, that is
$\langle L^X, S^{*(k)} \rangle= 0$ for each $k =1 , \ldots d $.
Suppose that $\phi$ satisfies \eqref{eq:gkw-dec},
then using representations \eqref{eq:repr-H-P} and  \eqref{eq:sde-lmm}
we have
\begin{align*}
 d L^X_t &= \delta_t^\top d W_t  + \int_{\mathbf{R}^n} J_t(x)
\widetilde{ \Pi } (dt, dx) + \sum_{i,j \in \mathcal{K} : j \neq i}  \gamma^{i,j}_t d
M^{i,j}_t
\\
& - \phi_t^\top \left(
\left(\sigma_t d W_t + \int_{\mathbf{R}^n} F_t(x) \widetilde{
\Pi } (dt,dx)  \right) + \sum_{i,j \in \mathcal{K} : j \neq i}
{\rho^{i,j}_t} d M^{i,j}_t \right).
\end{align*}
Rearranging yields
\begin{align*}
d L^X_t &= \left( \delta_t^\top - \phi_t^\top  \sigma_t  \right) d W_t  +
\int_{\mathbf{R}^n} \left( J_t(x) - \phi_t^\top  F_t  (x) \right) \widetilde{ \Pi } (dt,dx)
\\
& + \sum_{i,j \in \mathcal{K} : j \neq i}  \left( \gamma^{i,j}_t -  \phi_t^\top
{\rho^{i,j}_t} \right)dM^{i,j}_t.
\end{align*}
Now we want to find $\phi$  such that $L^X$ is orthogonal to each
$S^{*(k)}$.
First notice that the angle bracket of the continuous martingale
parts of $S^{* c (k)} $ and $L^{X,c}  $ has the form
\[
d \langle  S^{* c (k)} ,  L^{X,c} \rangle_t = \left(  \sigma_t  \right)_{\!k} \left(
\delta_t^\top - \phi_t^\top
\sigma_t  \right)^{\!\!\!\top} dt,
\]
where for a matrix $E$ by $(E)_k$ we denote the operation of taking
$k$--th row from the matrix $E$. By assumption NCJ
\begin{align*}
\Delta S^{*(k)}_t  \Delta L^X_t &=
\int_{\mathbf{R}^n}\left( F_t (x) \right)_{\!k} \left( J_t(x) - \phi_t^\top   F_t( x) \right)^{\!\!\top} \Pi (dt,dx)
\\
& + \sum_{i,j \in \mathcal{K} : j \neq i} ( {\rho^{i,j}_t})_k   \left( \gamma^{i,j}_t
- \phi_t^\top   {\rho^{i,j}_t} \right)^{\!\!\!\top} \Delta H^{i,j}_t.
\end{align*}
Therefore, since $X \in L^2(\mathbb{P})$, the following condition is equivalent to the strong
orthogonality 
\begin{align*}
\mathbf{0}&= \left(  \sigma_t
\right) \left( \delta_t^\top - \phi_t^\top    \sigma_t \right)^{\!\!\top}
\\
& + \int_{\mathbf{R}^d} \left(
F_t(x) \right)\left( J_t(x) - \phi_t^\top   F_t(x) \right)^{\!\!\top} \nu_t(dx) +
\sum_{i,j \in \mathcal{K} : j \neq i}      {\rho^{i,j}_t} \left(
\gamma^{i,j}_t - \phi_t^\top     {\rho^{i,j}_t}
\right)^{\!\!\!\top} \lambda^{i,j}_t
\end{align*} where $\mathbf{0} =
(0, \ldots, 0)^\top$. This equality 
%
simplifies to
\begin{align}
\nonumber &&   \left( \sigma_t
(\sigma_t)^{\!\top} + \int_{\mathbf{R}^d} F_t (x) (F_t (x))^{\!\top}
\nu_t(dx) + \left( \sum_{i,j \in \mathcal{K} : j \neq i}
{\rho^{i,j}_t} \left( \rho^{i,j}_t \right)^{\!\!\top}   \lambda^{i,j}_t
\right) \right) \phi_t
\\
\label{eq:GKW-LSE-2}
&& =  \left( \sigma_t  \delta_t
+ \int_{\mathbf{R}^n} F_t  (x)J_t(x) \nu_t(dx) +
  \sum_{i,j \in \mathcal{K} : j \neq i} {\rho^{i,j}_t}  \gamma^{i,j}_t  \lambda^{i,j}_t
\right).
\end{align}
Using  definition of $G$ and $A$ (i.e., \eqref{eq:pred-cov-i} and \eqref{eq:Y-def})
we see that  \eqref{eq:GKW-LSE-2} is, in fact, \eqref{eq:GKW-LSE}.
Therefore, the existence of GKW decomposition implies that $A_t \in \text{Im} G_t$. Taking $\phi$ as a minimum norm solution to \eqref{eq:GKW-LSE} we have an explicit form of $\phi$ in GKW decomposition. Now, the assertion of theorem follows from Theorem \ref{thm:risk-min-general}.
\end{proof}
\begin{cor}\label{cor:D-rms}
Assume that $G_t$ is invertible for every $t \in [\![0,T]\!]$.
Then the
component $\phi$ of 0-achieving $D$-risk-minimizing strategy is given by the following
formula
\begin{equation}\label{eq:risk-min-strat}
\phi_t =  \!\left( G_t
\right)^{\!-\!1} \!\!\left(\! \sigma_t   \delta_t \!+\!
\int_{\mathbf{R}^n} \!\!\! F_t  (x) J_t(x) \nu_t(dx) +
  \!\!\!\sum_{i,j \in \mathcal{K} : j \neq i} \!\!\!{\rho^{i,j}_t}  \gamma^{i,j}_t  \lambda^{i,j}_t \!
\right)\!\!.
\end{equation}
\end{cor}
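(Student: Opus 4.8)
The plan is to read the conclusion off Theorem \ref{thm:risk-min-spec} directly, since that result already does the substantive work: it identifies the component $\phi$ of the $0$-achieving $D$-risk-minimizing strategy as the predictable version of a solution to the linear system \eqref{eq:GKW-LSE}, i.e. $G_t \phi_t = A_t$ with $A_t$ given by \eqref{eq:Y-def}. The only new ingredient is the standing hypothesis that $G_t$ is invertible for every $t \in [\![0,T]\!]$. Under this assumption the system $G_t \phi_t = A_t$ has, for each $t$, a unique solution, obtained by left-multiplying by the inverse, so I would simply write
\[
\phi_t = \left( G_t \right)^{-1} A_t
\]
and then substitute the explicit expression \eqref{eq:Y-def} for $A_t$ to recover \eqref{eq:risk-min-strat}. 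In particular, invertibility removes the need to single out the minimum-norm solution used in the proof of Theorem \ref{thm:risk-min-spec}: the solution is now unique, hence it necessarily coincides with the one furnished there.

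Two routine points would need checking to make the closed-form expression legitimate as a process. First, predictability: the matrix process $G$ is predictable, being assembled in \eqref{eq:pred-cov-i} from the predictable data $\sigma$, $F$, $\rho^{i,j}$ and $\lambda^{i,j}$; since matrix inversion is a Borel map on the open set of invertible matrices, $\left( G_t \right)^{-1}$ is again predictable, and as $A$ is predictable by \eqref{eq:Y-def}, the product $\phi_t = \left( G_t \right)^{-1} A_t$ is predictable. Thus no separate ``predictable version'' has to be selected here. Second, the integrability requirement \eqref{eq:S-integr-cond}: this is automatic, because Theorem \ref{thm:risk-min-spec} already asserts existence of a strategy whose first component solves \eqref{eq:GKW-LSE}, and under invertibility that solution is exactly $\left( G_t \right)^{-1} A_t$; alternatively, using symmetry of $G$ one notes $\phi_t^\top G_t \phi_t = A_t^\top \left( G_t \right)^{-1} A_t$ and controls this directly.

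Honestly, there is essentially no obstacle: once Theorem \ref{thm:risk-min-spec} is in hand the corollary is pure linear algebra, namely inverting $G_t$ in the identity $G_t \phi_t = A_t$. The only point worth a sentence is the consistency remark above, confirming that the unique solution produced by inversion agrees with the a priori possibly non-unique solution supplied by the theorem, which is immediate since invertibility of $G_t$ forces uniqueness.
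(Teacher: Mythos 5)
Your argument is correct and is essentially identical to the paper's own (one-line) proof: invert $G_t$ in the linear system $G_t\phi_t = A_t$ supplied by Theorem \ref{thm:risk-min-spec} and substitute the expression \eqref{eq:Y-def} for $A_t$. The extra remarks on predictability of $(G_t)^{-1}A_t$ and on uniqueness forcing agreement with the minimum-norm solution are sensible but not needed beyond what the paper states.
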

\begin{proof}
It is an immediate consequence of invertibility of $G$ and \eqref{eq:GKW-LSE}.
\end{proof}
\noindent Theorem  \ref{thm:risk-min-general} and \eqref{eq:LX} imply
\begin{cor}
The risk process of $D$-risk-minimizing strategy has the form
\begin{align*}
R^D_{t} &= \mathbb{E}\bigg( \int_t^T \norm{ \delta_u^\top - \phi_u^\top  \sigma_u  }^2 d u  +
\int_t^T \int_{\mathbf{R}^n} \left( J_u(x) - \phi_u^\top  F_u (x) \right)^2 \nu_u(dx) du
\\ \nonumber
& + \sum_{i,j \in \mathcal{K} : j \neq i}  \int_t^T  \left( \gamma^{i,j}_u -  \phi_u^\top
{\rho^{i,j}_u} \right)^2 \lambda^{i,j}_u du \bigg| \mathcal{F}_t \bigg) .
\end{align*}
\end{cor}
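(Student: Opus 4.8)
The plan is to combine the general risk-process formula of Theorem \ref{thm:risk-min-general} with the explicit dynamics of $L^X$ already recorded in \eqref{eq:LX}. By Theorem \ref{thm:risk-min-general} the risk process of the $0$-achieving risk-minimizing strategy equals $R^D_t = \mathbb{E}\big( (L^X_T - L^X_t)^2 \mid \mathcal{F}_t \big)$, so the whole task reduces to evaluating the conditional second moment of the martingale increment $L^X_T - L^X_t$.

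The first step is to convert this second moment into a predictable quadratic variation. Since $L^X$ is a square integrable martingale, $(L^X)^2 - \langle L^X \rangle$ is a martingale; combining this with $\mathbb{E}(L^X_T \mid \mathcal{F}_t) = L^X_t$ and the $\mathcal{F}_t$-measurability of $\langle L^X \rangle_t$ yields the standard identity
\begin{equation*}
\mathbb{E}\big( (L^X_T - L^X_t)^2 \mid \mathcal{F}_t \big) = \mathbb{E}\big( \langle L^X \rangle_T - \langle L^X \rangle_t \mid \mathcal{F}_t \big).
\end{equation*}
Everything therefore hinges on computing $\langle L^X \rangle$ from the representation \eqref{eq:LX}.

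Next I would split $L^X$ into the three natural pieces appearing in \eqref{eq:LX}: the continuous Wiener integral with integrand $(\delta_t^\top - \phi_t^\top \sigma_t)$, the purely discontinuous integral against $\widetilde{\Pi}$ with integrand $(J_t(x) - \phi_t^\top F_t(x))$, and the purely discontinuous integral against the family $(M^{i,j})$ with integrands $(\gamma_t^{i,j} - \phi_t^\top \rho_t^{i,j})$. The core of the argument is that these three pieces are mutually orthogonal, so that $\langle L^X \rangle$ is the sum of their individual predictable quadratic variations with no cross terms. The Wiener piece is orthogonal to both jump pieces because it is continuous while the others are pure-jump martingales. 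The two jump pieces are orthogonal to each other precisely by Assumption NCJ, which guarantees that $\Pi$ and the $H^{i,j}$ share no common jumps. Within the last piece the individual $M^{i,j}$ are mutually orthogonal because at any transition time of $C$ exactly one pair $(i,j)$ jumps, whence $\Delta H^{i,j}\,\Delta H^{k,l} = 0$ for $(i,j)\neq(k,l)$.

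Finally I would read off the three predictable quadratic variations from the standard compensator/intensity formulas: $d\langle \cdot \rangle_t = \norm{\delta_t^\top - \phi_t^\top \sigma_t}^2\, dt$ for the Wiener piece, $d\langle \cdot \rangle_t = \int_{\mathbf{R}^n} (J_t(x) - \phi_t^\top F_t(x))^2\, \nu_t(dx)\, dt$ for the $\widetilde{\Pi}$ piece (using $Q(dx,dt) = \nu_t(dx)\,dt$), and $d\langle \cdot \rangle_t = \sum_{i,j} (\gamma_t^{i,j} - \phi_t^\top \rho_t^{i,j})^2\, \lambda_t^{i,j}\, dt$ for the counting-process piece (using $d\langle M^{i,j}\rangle_t = \lambda_t^{i,j}\, dt$, which follows from \eqref{eq:M-ij} and Assumption EI). Substituting the resulting expression for $\langle L^X\rangle_T - \langle L^X\rangle_t$ into the identity above gives exactly the claimed formula. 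The main obstacle is not any single computation but the careful justification of the orthogonality relations — in particular verifying, via Assumption NCJ, that the covariation between the $\widetilde{\Pi}$-integral and the $M^{i,j}$-integrals vanishes — since it is this step that prevents spurious cross terms from contaminating $R^D_t$.
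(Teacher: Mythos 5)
Your proposal is correct and follows exactly the route the paper intends: the corollary is stated as an immediate consequence of Theorem \ref{thm:risk-min-general} (which gives $R^D_t = \mathbb{E}((L^X_T - L^X_t)^2 \mid \mathcal{F}_t)$) together with the dynamics \eqref{eq:LX}, and your computation of $\langle L^X\rangle$ via the mutual orthogonality of the Wiener, $\widetilde{\Pi}$, and $M^{i,j}$ pieces (the latter two justified by Assumption NCJ and the no-common-jumps property of the $H^{i,j}$) is precisely the standard filling-in of that one-line derivation. The compensator formulas you invoke all match the paper's setup ($Q(dx,du)=\nu_u(dx)\,du$ and $d\langle M^{i,j}\rangle_t = \lambda^{i,j}_t\,dt$ from Assumption EI), so nothing is missing.
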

We can also formulate conditions ensuring attainability of dividend processes. For definition of $\mathbb{P}$-admissibility and attainability of dividend process $D$ we refer to \cite[Def. 16.13 and 16.9]{jaknie2011}. In \cite[Prop. 16.14]{jaknie2011} it is proved that $\varphi$ is $\mathbb{P}$-admissible and replicates $D$ if and only if $V_t(\varphi) = B_t \mathbb{E} \left( \int_t^T\frac{1}{B_u} d D_u| \mathcal{F}_t\right)$.
\begin{thm}\label{prop:attain}
Let $D$ be a dividend process with representation \eqref{eq:repr-H-P}.
There exists a $\mathbb{P}$-admissible strategy $\varphi=(\phi,\eta)$ which replicates $D$ if and only if there exists a predictable process $\phi$ satisfying:
\begin{align*}
	 (\sigma_t)^\top \phi_t &= \delta_t \quad \quad \quad dt \times d\mathbb{P} \ a.e., \\
   (F_t (x))^\top \phi_t &=  J_t(x) \quad dt \times \nu_t(dx) \times d\mathbb{P} \ a.e., \\
(\rho^{i,j}_t)^\top \phi_t &= \gamma^{i,j}_t, \quad \quad \lambda^{i,j}_t dt \times d\mathbb{P} \ a.e. \quad \forall i,j \in \mathcal{K}, i\neq j.
\end{align*}
\end{thm}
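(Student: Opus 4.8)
The plan is to derive both implications from the replication criterion recalled just before the statement, namely that $\varphi=(\phi,\eta)$ is $\mathbb{P}$-admissible and replicates $D$ if and only if $V_t(\varphi)=B_t\mathbb{E}\big(\int_t^T\frac{1}{B_u}\,dD_u\,\big|\,\mathcal{F}_t\big)$, equivalently $V^*_t(\varphi)=\widetilde V_t$, where $\widetilde V_t:=\mathbb{E}\big(\int_t^T\frac{1}{B_u}\,dD_u\,\big|\,\mathcal{F}_t\big)=V^*_t-\int_0^t\frac{1}{B_u}\,dD_u$. The object that ties everything together is, for an arbitrary predictable $\phi$ obeying \eqref{eq:S-integr-cond}, the martingale
\[
N^\phi_t:=V^*_t-V^*_0-\int_0^t\phi_u^\top\,dS^*_u .
\]
Repeating verbatim the computation from the proof of Theorem \ref{thm:risk-min-spec} (inserting \eqref{eq:repr-H-P} and \eqref{eq:sde-lmm}, a step that does not use orthogonality) shows that $N^\phi$ has exactly the form \eqref{eq:LX} with $L^X$ replaced by $N^\phi$. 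I would record this identity first.

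Next I would show that $N^\phi\equiv0$ is equivalent to the three pointwise equations. By Assumption NCJ the drivers $W$, $\widetilde\Pi$ and the point processes $H^{i,j}$ have no common jumps and are mutually strongly orthogonal, so no cross terms survive and the predictable bracket splits as
\[
\langle N^\phi\rangle_T=\int_0^T\!\Big(\norm{\delta_u^\top-\phi_u^\top\sigma_u}^2+\int_{\mathbf{R}^n}\!\big(J_u(x)-\phi_u^\top F_u(x)\big)^2\nu_u(dx)+\!\!\sum_{i,j\in\mathcal{K}:j\neq i}\!\!\big(\gamma^{i,j}_u-\phi_u^\top\rho^{i,j}_u\big)^2\lambda^{i,j}_u\Big)du .
\]
Since $N^\phi$ is a square integrable martingale with $N^\phi_0=0$, we have $N^\phi\equiv0$ iff $\mathbb{E}\langle N^\phi\rangle_T=0$, and this integral of nonnegative terms vanishes iff each integrand is zero for its reference measure; reading these off gives precisely $(\sigma_t)^\top\phi_t=\delta_t$ ($dt\times d\mathbb{P}$ a.e.), $(F_t(x))^\top\phi_t=J_t(x)$ ($dt\times\nu_t(dx)\times d\mathbb{P}$ a.e.) and $(\rho^{i,j}_t)^\top\phi_t=\gamma^{i,j}_t$ ($\lambda^{i,j}_t\,dt\times d\mathbb{P}$ a.e.).

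It then remains to identify $N^\phi\equiv0$ with replication. For sufficiency, suppose a predictable $\phi$ solves the system. Using the three equalities one gets $\phi_t^\top G_t\phi_t=\norm{\delta_t}^2+\int_{\mathbf{R}^n}(J_t(x))^2\nu_t(dx)+\sum_{i,j}(\gamma^{i,j}_t)^2\lambda^{i,j}_t$, whose expected time integral equals $\mathbb{E}\big((X-\mathbb{E}(X))^2\big)$ and is therefore finite because $X\in L^2(\mathbb{P})$; thus \eqref{eq:S-integr-cond} holds. Moreover $N^\phi\equiv0$ gives $V^*_t=V^*_0+\int_0^t\phi_u^\top\,dS^*_u$, so putting $\eta_t:=\widetilde V_t-\phi_t^\top S^*_t$ yields $V^*_t(\varphi)=\widetilde V_t$, and \cite[Prop.~16.14]{jaknie2011} produces a $\mathbb{P}$-admissible strategy replicating $D$. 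For necessity, \cite[Prop.~16.14]{jaknie2011} gives $V^*_t(\varphi)=\widetilde V_t$, whence the cost process collapses to $C^D_t(\varphi)=V^*_t-\int_0^t\phi_u^\top\,dS^*_u=V^*_0+N^\phi_t$; since a replicating strategy finances $D$, the process $C^D$ is constant, forcing $N^\phi\equiv0$ and therefore the system (the relevant $\phi$ being the one carried by the strategy, whose integrability is part of admissibility). The step I expect to be the most delicate is the bracket decomposition: Assumption NCJ, together with Assumption INT and $X\in L^2(\mathbb{P})$, is exactly what is needed both to kill the cross terms in $\langle N^\phi\rangle$ and to separate the vanishing of the three integrands with respect to the correct measures.
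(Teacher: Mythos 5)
Your proof is correct and follows essentially the same route as the paper: the paper reduces the statement to Lemma \ref{lem:replication} (replication of $D$ is equivalent to $L^X\equiv 0$ in the GKW decomposition) and then reads the three conditions off the explicit dynamics \eqref{eq:LX}, which is exactly your argument with the appendix lemma inlined via the cost process and \cite[Prop.~16.14]{jaknie2011}. Your additional checks --- the bracket/isometry argument showing $N^\phi\equiv 0$ iff the three integrands vanish, and the verification of \eqref{eq:S-integr-cond} from $\phi_t^\top G_t\phi_t=\norm{\delta_t}^2+\int (J_t(x))^2\nu_t(dx)+\sum(\gamma^{i,j}_t)^2\lambda^{i,j}_t$ --- are left implicit in the paper but are exactly what its one-line deduction relies on.
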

\begin{proof}
    We prove in Lemma \ref{lem:replication} in Appendix that the existence of  $\mathbb{P}$-admissible strategy $\varphi=(\phi,\eta)$ which replicates $D$ is equivalent to the fact that the process $L^X$ is equal to zero. This fact and \eqref{eq:LX} give the assertion.
\end{proof}
If the compensator of $\Pi$ is a random measure such that all $\nu_t$ has the same finite support (e.g., if $\Pi$ is a Poisson  random measure with L\'{e}vy measure $\nu$ with finite support), then
 we can give sufficient conditions for attainability of an arbitrary square integrable dividend process $D$. To formulate these conditions we introduce
convenient notation: for vectors $a_1, \ldots, a_q$ by $[a_i]_{i=1}^q$ we denote the matrix created from theses vectors by setting them in columns, i.e.,
\[
	[a_i]_{i=1}^q
=[a_1 a_2 \ldots a_q].
\]
\begin{prop}
    Assume that  for every $t \in [\![0,T]\!]$ the measure $\nu_t$ has the same finite support, i.e.,
    \begin{align}\label{eq:support-cond}
        \text{supp } \nu_t = \set{x_1, \ldots, x_q},
        \quad q\in \mathds{N} .
    \end{align}
     Let
    \begin{align*}
    L^i_t :=\left[
     \sigma_t, \left[F_t(x_k) \right]_{k=1}^q, \left[\rho^{i,j}_t \right]_{ j \in \mathcal{K}, j\neq i}
    \right].
    \end{align*}
    If,  for every $i\in\mathcal{K}$,
     \begin{equation}\label{eq:rank-cond}
      \text{rank }(L^i_t) = n+q+K-1  \quad \text{for every } t \in [0,T]
      \quad
      \mathbb{P} \ a.s. ,
     \end{equation}
     then every square integrable dividend process $D$ is attainable.
    Necessary condition for \eqref{eq:rank-cond} is  that number of assets is sufficiently large, i.e. $d \geq n+q+K-1$.
\end{prop}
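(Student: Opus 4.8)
The plan is to derive the conclusion from the attainability criterion of Theorem~\ref{prop:attain}: a square integrable dividend process $D$ is attainable exactly when there is a predictable $\phi$ solving, up to the stated null sets, the three families $(\sigma_t)^\top\phi_t=\delta_t$, $(F_t(x))^\top\phi_t=J_t(x)$ and $(\rho^{i,j}_t)^\top\phi_t=\gamma^{i,j}_t$. Since by \eqref{eq:support-cond} each $\nu_t$ is carried by $\set{x_1,\dots,x_q}$, the second family collapses to the finitely many scalar equations $(F_t(x_k))^\top\phi_t=J_t(x_k)$, $k=1,\dots,q$, holding $dt\times d\mathbb{P}$ a.e. on $\set{\nu_t(\set{x_k})>0}$. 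Thus the whole requirement becomes, for $dt\times d\mathbb{P}$ almost every $(\omega,t)$, a finite linear system for the unknown vector $\phi_t\in\mathbf{R}^d$, and it suffices to solve it measurably.

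First I would localize this system by the current rating. Because $\lambda^{i,j}_t=H^i_{t-}\lambda^{i,j}_t$, the $(i,j)$-th constraint is void off $\set{H^i_{t-}=1}=\set{C_{t-}=i}$, while on that set it reads $(\rho^{i,j}_t)^\top\phi_t=\gamma^{i,j}_t$ for all $j\neq i$. Hence on $\set{C_{t-}=i}$ all active constraints amount to the single block equation $(L^i_t)^\top\phi_t=b^i_t$, with $L^i_t$ as in the statement and $b^i_t:=\big(\delta_t^\top,\,J_t(x_1),\dots,J_t(x_q),\,(\gamma^{i,j}_t)_{j\neq i}\big)^\top\in\mathbf{R}^{\,n+q+K-1}$. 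The hypothesis \eqref{eq:rank-cond} says precisely that $L^i_t$ has full column rank $n+q+K-1$, so $(L^i_t)^\top$ is surjective and the system is solvable for every right-hand side. A canonical solution is the minimum-norm one, $\phi^i_t:=L^i_t\big((L^i_t)^\top L^i_t\big)^{-1}b^i_t$, the inverse existing by the rank condition. On the full-rank set the map $L\mapsto L(L^\top L)^{-1}$ is continuous, hence $\phi^i$ is a Borel function of predictable data and is itself predictable; setting $\phi_t:=\sum_{i\in\mathcal{K}}H^i_{t-}\phi^i_t$ patches these selections into one predictable process that solves all three families at once, since at each $(\omega,t)$ exactly one $H^i_{t-}$ equals $1$ and $\bigcup_{i}\set{C_{t-}=i}$ exhausts $\Omega\times[\![0,T]\!]$.

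It then remains to confirm that $\varphi=(\phi,\eta)$ is $\mathbb{P}$-admissible, i.e. that \eqref{eq:S-integr-cond} holds. Multiplying each of the three families by $\phi_t^\top$ and invoking \eqref{eq:pred-cov-i} together with $d\langle S^*\rangle_t=G_t\,dt$ yields the pointwise identity $\phi_t^\top G_t\phi_t=\norm{\delta_t}^2+\int_{\mathbf{R}^n}J_t(x)^2\,\nu_t(dx)+\sum_{i,j\in\mathcal{K}:\,j\neq i}(\gamma^{i,j}_t)^2\lambda^{i,j}_t$. Integrating in $t$ and taking expectations, the right-hand side is finite because it is exactly the $L^2(\mathbb{P})$ isometry bound carried by the representation \eqref{eq:repr-H-P} of the square integrable $X$; therefore $\mathbb{E}\big(\int_0^T\phi_t^\top\,d\langle S^*\rangle_t\,\phi_t\big)<\infty$. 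By Theorem~\ref{prop:attain} the strategy $\varphi$ then replicates $D$, and since $D$ was arbitrary every square integrable dividend process is attainable.

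The stated necessary condition is immediate: as $L^i_t$ is a $d\times(n+q+K-1)$ matrix, $\operatorname{rank}(L^i_t)\le\min\set{d,\,n+q+K-1}$, so \eqref{eq:rank-cond} forces $d\ge n+q+K-1$. The only genuinely delicate point I anticipate is this admissibility step — guaranteeing that the selected $\phi$ is simultaneously predictable and square integrable, not merely pointwise solvable — but predictability follows from the explicit pseudoinverse formula and integrability follows directly from the isometry bound attached to the representation of $X$.
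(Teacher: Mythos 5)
Your proposal is correct and follows essentially the same route as the paper: reduce the attainability conditions of Theorem~\ref{prop:attain} to the finite linear system $(L^i_t)^\top\phi_t=b^i_t$ on $\set{C_{t-}=i}$, solve it by the minimum-norm (pseudoinverse) formula, which is predictable, and conclude via the rank bound for the necessity of $d\ge n+q+K-1$. You additionally verify the integrability condition \eqref{eq:S-integr-cond} via the identity $\phi_t^\top G_t\phi_t=\norm{\delta_t}^2+\int J_t(x)^2\nu_t(dx)+\sum(\gamma^{i,j}_t)^2\lambda^{i,j}_t$ and the isometry bound for $X\in L^2$ --- a step the paper leaves implicit --- and this check is correct.
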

\begin{proof}
Under the above assumptions, conditions formulated in Theorem \ref{prop:attain} transform, for every $i \in \mathcal{K}$, to the following system of finite number of equations:
\begin{align*}
(\sigma_t)^\top \phi_t &= \delta_t \quad \quad \quad dt \times d\mathbb{P} \ a.e., \\
   (F_t (x_k))^\top \phi_t &=  J(t,x_k) \quad dt \times d\mathbb{P} \ a.e. \text{ for} \quad k=1,\ldots, q, \\
(\rho^{i,j}_t)^\top \phi_t &= \gamma^{i,j}_t, \quad \quad \lambda^{i,j}_t dt \times d\mathbb{P} \ a.e. \quad \forall j \in \mathcal{K}, j\neq i,
\end{align*}
which can be written in the form
\[
    (L^i_t)^\top \phi_t = \left[
                            \delta^\top_t,
                              (J^\top(t,x_k))_{k=1}^q, ((\gamma^{i,j}_t)^\top)_{j \in \mathcal{K}, j\neq i}
                          \right]^\top.
\]
Since \eqref{eq:rank-cond} holds $\mathbb{P}$ a.s., the minimum norm solution   $\phi$ to the above system of linear equations exists and is a predictable process. Hence $\varphi = (\phi, \eta ) $ is an admissible portfolio which replicates $D$.
\end{proof}
\begin{rem}\label{rem:hedging}
a) It is also worth to mention that Theorem \ref{prop:attain} underlines the necessity of using  other
methods of hedging risk for non-attainable  dividend streams.
Thus, we consider  in this paper the risk-minimizing strategies.

b) Theorem \ref{prop:attain} also implies that condition \eqref{eq:support-cond} is to some extent  necessary for attainability of an arbitrary square integrable dividend process $D$.

c) As we have seen, the replication of arbitrary dividend process requires sufficiently large number of assets which are in some sense linearly independent. In many cases this might be considered as a very unrealistic situation. Thus, a hedger has to choose the risk sources he/she wants to hedge against. For example, if a hedger will choose to hedge against Brownian-risk (continuous-risk), then he/she will be looking
for $\phi_t$ such that only the first condition is satisfied, i.e.,
\[
	(\sigma_t)^\top \phi_t = \delta_t.
\]
Usually this corresponds to the classical delta-hedging.
Alternatively, a hedger might choose to hedge against credit-risk and then he/she will be interested in solving the following system of linear equations
\[
\left[\rho^{i,j}_t \right]^\top_{ j \in \mathcal{K}, j\neq i} \phi_t = \left[\gamma^{i,j}_t \right]^\top_{j \in \mathcal{K}, j\neq i}.
\]
This kind of strategy is called \emph{delta-hedging of credit risk}.
Note that corresponding rank conditions for solving above systems is less demanding than \eqref{eq:rank-cond}.

\end{rem}
In Theorem \ref{thm:risk-min-spec} we assume that the square integrable random variable $X = \int_0^T
\frac{1}{B_u} d D_u $ representing discounted cumulated payments up
to maturity time has certain martingale representation.
This might be considered as a very restricted assumption, however
this representation is ensured by a weak property of predictable representation of the triple ($W,$ $\widetilde{\Pi}$, $M$) (see He, Wang and Yan \cite[Def. 13.13]{hewangyan}).
\begin{defn}
We say that ($W,$ $\widetilde{\Pi}$, $M$) has a weak property of predictable representation with respect to $(\mathbb{F},\mathbb{P})$ if
every  square integrable $(\mathbb{F},\mathbb{P})$-martingale $N$ has the representation
\begin{align}\label{eq:mart-rep}
    N_t  = N_0 + \int_0^t \phi_u d W_u + \int_0^t  \int_{\mathbf{R}^n} \psi_u(x) \widetilde{\Pi}(du,dx)
    + \int_{0}^t \sum_{i,j \in \mathcal{K} : j \neq i} \xi_{u}^{i,j} d M^{i,j}_u,
\end{align}
where  $\phi_u$, $\psi_u(x)$, $\xi_{u}^{i,j}$ are predictable processes such that
\[
\mathbb{E}\left( \int_0^T |\phi_u |^2du\right) <\infty,
\quad
\mathbb{E}\left( \int_0^T  \int_{\mathbf{R}^n} |\psi_u(x)|^2  \nu_u(dx) du\right) <\infty,
\]
\[
\mathbb{E}\left( \int_0^T  | \xi_{u}^{i,j} |^2 \lambda^{i,j}_u du \right) <\infty
\quad i,j \in \mathcal{K}, i \neq j.
\]
\end{defn}
  This property   holds in many models, here we give two examples.

\begin{example}
Assume that $C$ is a time homogenous Markov chain with an intensity matrix $\Lambda$ which is independent of $W$ and $\widetilde{\Pi}$ is an independent Poisson random measure. Then a weak property of  predictable representation holds for ($W$,$\widetilde{\Pi}$,$M$) with respect to $(\mathbb{F},\mathbb{P})$, where $\mathbb{F} = \mathbb{F}^W \vee \mathbb{F}^\Pi \vee \mathbb{F}^C$.
This follows by analogous arguments as in
Becherer \cite{bech2006}.
For the convenience of the reader we repeat these arguments.
First, note that each of processes $W$, $\widetilde{\Pi}$, $M$ has the weak property of  predictable representation with respect to their own filtration and therefore by independence and strong orthogonality of $W$, $\widetilde{\Pi}$, $M$ we have martingale representation \eqref{eq:mart-rep} for martingales $N$ such that  $N_T = \I_E \I_F\I_G$ where $E \in \mathcal{F}^W_T$, $F \in \mathcal{F}^\Pi_T$, $G \in \mathcal{F}^C_T$. Finally, by standard approximation techniques, we obtain the asserted weak property of a  predictable representation of ($W$, $\widetilde{\Pi}$, $M$) with respect to $(\mathbb{F}^W \vee \mathbb{F}^\Pi \vee \mathbb{F}^C,\mathbb{P})$.
\end{example}
%
\begin{example}
Assume that on a filtered probability space  $(\Omega, \mathcal{F},\mathbb{F},\widehat{\mathbb{P}})$ we have given
 a Brownian motion $\widehat{W}$,
 a compensated integer-valued random measure $\widehat{\Pi}$ with $\widehat{\mathbb{P}}$-compensator denoted by $\widehat{\nu}_t(dx) dt$, and counting processes $H^{i,j}$ with $\widehat{\mathbb{P}}$-intensity processes denoted by $\widehat{\lambda}^{i,j}$, so
 \[
     \widehat{M}^{i,j}_t := H^{i,j}_t - \int_0^t \widehat{\lambda}^{i,j}_u du
 \]
 is a $\widehat{\mathbb{P}}$-martingale.
Assume that ($\widehat{W}$, $\widehat{\Pi}$, $\widehat{M}$)
has the weak property of predictable representation with respect to $(\mathbb{F},\widehat{\mathbb{P}})$
and that $\mathbb{P}$ is absolutely continuous with respect to $\widehat{\mathbb{P}}$ with the density process
\begin{align*}
d \eta_t &= \eta_{t-} \left( \beta_t d \widehat{W}_t + \int_{\mathbf{R}^n} (Y(t,x) - 1 )\widehat{\Pi}(dt,dx)  + \sum_{i,j \neq i} (\kappa^{i,j}_t-1) d \widehat{M}^{i,j}_t  \right), \\
\eta_0 &=1.
\end{align*}
Then ($W$,$\widetilde{\Pi}$,$M$)  has the weak property of predictable representation with respect to $(\mathbb{F},\mathbb{P})$
(see Jacod and Shiryaev \cite[Thm. III.5.24]{js1987} ),
where
\begin{align*}
W_t &:= \widehat{W}_t - \int_0^t \beta_u du,
\quad
\widetilde{\Pi}(dt,dx)  := \Pi(dt,dx)- Y(t,x)\widehat{\nu}_t(dx)dt,\\
{M}^{i,j}_t &:= H^{i,j}_t - \int_0^t  \kappa^{i,j}_u\widehat{\lambda}^{i,j}_u du, \quad i,j \in \mathcal{K}, i \neq j.
\end{align*}
\end{example}

Example 2 shows, in principle, that we can construct a model having weak property of predictable representation with mutual dependence between
$W,\Pi$ and $C$ starting from an independent ones.

\subsection{Semimartingale dividend process}
One of most the important class of dividend processes constitute processes $D$ which are semimartingales with the following decomposition
\begin{align}\label{eq:D-sem}
D_t = \xi^D \I_{t \geq  T} &+ \int_0^t g^D_u du + \int_{0}^t (\delta_u^D)^\top d W_u +
\int_{0}^t \int_{\mathbf{R}^n} J^D_u(x) \widetilde{ \Pi }(du, dx)
\\
\nonumber
& +
\sum_{i,j \in \mathcal{K} : j \neq i} \int_0^t \gamma^{D,i,j}_u d M^{i,j}_t
\end{align}
with appropriately integrable quintuple of processes  $(\xi^D, g^D, \delta^d, J^D, \gamma^D)$.  In this case finding risk minimization strategy boils down to finding martingale representation of
\begin{align}\label{eq:hatX}
\widehat{X}:= \frac{\xi^D}{B_T} + \int_0^T \frac{g^D_u}{B_u} du
\end{align}
instead of $X = \int_0^T
\frac{1}{B_u} d D_u $. Indeed, applying the previous results for $D$ yield
\begin{thm}\label{thm:risk-min-spec-2}
Fix a dividend process $D$ of the form \eqref{eq:D-sem},
where $(\xi^D, g^D, \delta^d, J^D, \gamma^D)$ satisfies
\[
\mathbb{E}\left( |\xi^D B^{-1}_T|^2 + \int_0^T | B^{-1}_u ( g^D_u + \delta^D_u )  |^2 du\right) <\infty,
\quad
\mathbb{E}\left( \int_0^T  \int_{\mathbf{R}^n} |B^{-1}_u J^D_u(x)|^2  \nu_u(dx) du\right) <\infty,
\]
\[
\mathbb{E}\left( \int_0^T  | B^{-1}_u \gamma_{u}^{D,i,j} |^2 \lambda^{i,j}_u du \right) <\infty
\qquad \rm{for} \quad i,j \in \mathcal{K}, i \neq j.
\]
Assume that the square integrable random variable $\widehat{X} := \frac{\xi^D}{B_T} + \int_0^T \frac{g^D_u}{B_u} du$ has the representation
\begin{equation}\label{eq:repr-hatX}
\widehat{X} = \mathbb{E} (\widehat{X}) + \int_{0}^T \widehat{\delta}_t^{\, \top} d W_t +
\int_{0}^T \int_{\mathbf{R}^n} \widehat{J}_t(x) \widetilde{ \Pi }(dt, dx) +
\sum_{i,j \in \mathcal{K} : j \neq i} \int_0^T \widehat{\gamma}^{\, i,j}_t d M^{i,j}_t.
\end{equation}
Then
there exists a  0-achieving risk-minimization strategy $\varphi = (\phi,\eta)$ for the dividend process $D$. The component $\phi$ of this strategy is the predictable version of the solution to the
linear system
\begin{align*}
    G_t \phi_t = \widehat{A}_t,
\end{align*}
where  a predictable process $\widehat{A}$ is given by the formula
\begin{align*}
\widehat{A}_t &:=   \sigma_t  (\widehat{\delta}_t + B_t^{-1} \delta^D_t) + \int_{\mathbf{R}^n} F_t  (x)
(\widehat{J}_t(x)  + B_t^{-1} J^D_t(x))\nu_t(dx)
\\
\nonumber
& \quad \    +
  \sum_{i,j \in \mathcal{K} : j \neq i} {\rho^{i,j}_t}  (\widehat{\gamma}^{\,i,j}_t + B_t^{-1}\gamma^{D,i,j}_t) \lambda^{i,j}_t.
\end{align*}
The second component $\eta$ of strategy $\varphi$ is given by
\[
\eta_t = V^*_t - \int_0^t \frac{1}{B_u} d D_u - \phi_t^\top S^*_t.
\]
Moreover, the dynamics of $L^X$ has the form
\begin{align*}
d L^X_{t} &= \left( (\widehat{\delta}_t + B_t^{-1}\delta^D_t )^\top - \phi_t^\top  \sigma_t  \right) d W_t  +
\int_{\mathbf{R}^n} \left( \widehat{J}_t(x) + B_t^{-1}J^D_t(x)- \phi_t^\top  F_t (x) \right) \widetilde{ \Pi} (dt,dx)
\\ \nonumber
& \quad + \sum_{i,j \in \mathcal{K} : j \neq i}  \left( \widehat{\gamma}^{\,i,j}_t + B_t^{-1} \gamma^{D,i,j}_t -  \phi_t^\top
{\rho^{i,j}_t} \right)dM^{i,j}_t.
\end{align*}
\end{thm}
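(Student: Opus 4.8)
The plan is to reduce this statement to Theorem \ref{thm:risk-min-spec} by computing $X = \int_0^T B_u^{-1}\, dD_u$ explicitly from the decomposition \eqref{eq:D-sem} and recognizing the result as a representation of the form \eqref{eq:repr-H-P}. First I would substitute \eqref{eq:D-sem} into $X$ and use linearity of the integral. Under the convention $\int_0^T = \int_{]\!]0,T]\!]}$, the terminal lump $\xi^D \I_{u \geq T}$ carries a single jump of size $\xi^D$ at $u = T$ and hence contributes $\xi^D B_T^{-1}$, while the absolutely continuous part contributes $\int_0^T B_u^{-1} g^D_u\, du$; together these two terms are exactly $\widehat{X}$ from \eqref{eq:hatX}. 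Each of the three stochastic integral parts of $D$ passes through with its integrand scaled by $B_u^{-1}$, giving
\begin{equation*}
X = \widehat{X} + \int_0^T B_u^{-1} (\delta_u^D)^\top d W_u + \int_0^T \int_{\mathbf{R}^n} B_u^{-1} J^D_u(x)\, \widetilde{\Pi}(du,dx) + \sum_{i,j \in \mathcal{K} : j \neq i} \int_0^T B_u^{-1} \gamma^{D,i,j}_u\, dM^{i,j}_u.
\end{equation*}

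Next I would insert the assumed representation \eqref{eq:repr-hatX} of $\widehat{X}$ into this identity. Since the remaining terms are mean-zero stochastic integrals, $\mathbb{E}(X) = \mathbb{E}(\widehat{X})$, and collecting the $W$-, $\widetilde{\Pi}$- and $M^{i,j}$-integrands shows that $X$ admits a representation of the form \eqref{eq:repr-H-P} with
\begin{equation*}
\delta_t = \widehat{\delta}_t + B_t^{-1}\delta^D_t, \qquad J_t(x) = \widehat{J}_t(x) + B_t^{-1} J^D_t(x), \qquad \gamma^{i,j}_t = \widehat{\gamma}^{\,i,j}_t + B_t^{-1}\gamma^{D,i,j}_t.
\end{equation*}
The integrability hypotheses imposed on the quintuple $(\xi^D, g^D, \delta^D, J^D, \gamma^D)$, together with the integrability built into \eqref{eq:repr-hatX}, guarantee that $X \in L^2(\mathbb{P})$ and that these combined integrands satisfy the integrability conditions required by Theorem \ref{thm:risk-min-spec}.

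Finally I would apply Theorem \ref{thm:risk-min-spec} with these substituted coefficients. The process $A_t$ defined in \eqref{eq:Y-def} then becomes precisely $\widehat{A}_t$, the linear system \eqref{eq:GKW-LSE} reads $G_t \phi_t = \widehat{A}_t$, and the formula for $\eta_t$ together with the dynamics \eqref{eq:LX} of $L^X$ transcribe directly into the asserted expressions. There is no serious obstacle here, since the argument is a reduction rather than a fresh computation; the only point demanding care is the handling of the terminal jump $\xi^D \I_{u \geq T}$ under the left-open right-closed integration convention, making sure the mass at $T$ is retained and produces the $\xi^D B_T^{-1}$ term rather than being discarded.
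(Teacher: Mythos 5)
Your proposal is correct and follows essentially the same route as the paper: compute $X=\int_0^T B_u^{-1}\,dD_u$ from the decomposition \eqref{eq:D-sem}, identify the non-martingale contributions as $\widehat{X}$, substitute \eqref{eq:repr-hatX} to obtain the combined representation \eqref{eq:repr-H-P-N}, and then invoke Theorem \ref{thm:risk-min-spec}. Your added care about the terminal mass at $T$ and the integrability of the combined integrands only makes explicit what the paper calls ``straightforward calculations.''
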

\begin{proof}
Straightforward calculations show that if $\widehat{X}$ has representation \eqref{eq:repr-hatX} then  $X = \int_0^T B^{-1}_u d D_u$   has the representation
\begin{align} \label{eq:repr-H-P-N}
X = & \mathbb{E} (X) + \int_{0}^T \!\!\bigg(\widehat{\delta}_t  + \frac{\delta_t^D}{B_t} \bigg)^{\!\!\top} d W_t +
\int_{0}^T \!\!\int_{\mathbf{R}^n} \bigg( \widehat{J}_t(x) + \frac{J^D_t(x)}{B_t} \bigg)\widetilde{ \Pi }(dt, dx)
\\ \nonumber
& +
\sum_{i,j \in \mathcal{K} : j \neq i} \int_0^T \!\!\bigg( \widehat{\gamma}^{\,i,j}_t + \frac{\gamma^{D,i,j}_t}{B_t}\bigg)d M^{i,j}_t.
\end{align}
Applying  Theorem \ref{thm:risk-min-spec} we obtain the assertion of theorem.
\end{proof}
Note that $\widehat{X}$ can be connected with the ex-dividend price of $D$ in the following way
\[
\pi_t(D) := B_t \mathbb{E} \left( \int_t^T \frac{1}{B_u} dD_u | \mathcal{F}_t \right)
=B_t \mathbb{E} \left( \frac{\xi^D}{B_T} + \int_t^T \frac{g^D_u}{B_u} du | \mathcal{F}_t\right)
=B_t \left(\mathbb{E} ( \widehat{X} | \mathcal{F}_t)- \int_0^t \frac{g^D_u}{B_u} du \right)
\]
The second conditional expectation resembles the one from Feynman-Kac type theorems. We will investigate this similarity in the next section where we will consider a very flexible Markovian framework. This allows us to derive explicit formulae for risk-minimizing
strategies by means of solutions to some PIDE's as we will see in Section 5.
Using \eqref{eq:repr-H-P-N} we see that the  previous results from this
section could be rewritten for the semimartingale dividend processes with
\begin{align*}
\delta_t &= \widehat{\delta}_t  +  B_t^{-1}\delta^D_t, \\
J_t(x) &=\widehat{J}_t(x) + B_t^{-1} J^D_t(x), \\
\gamma^{i,j}_t &=\widehat{\gamma}^{\,i,j}_t + B_t^{-1}\gamma^{D,i,j}_t.
\end{align*}
For instance, Theorem \ref{prop:attain} concerning attainability can be reformulated as follows
\begin{thm}\label{prop:attain-2}
Let $D$ be a dividend process given by  \eqref{eq:D-sem} for which $\widehat{X}$ has representation  \eqref{eq:repr-hatX}.
There exists a $\mathbb{P}$-admissible strategy $\varphi=(\phi,\eta)$ which replicates $D$ if and only if there exists a predictable process $\phi$ satisfying:
\begin{align*}
	 (\sigma_t)^\top \phi_t &= \widehat{\delta}_t  +  B_t^{-1}\delta^D_t\quad \quad \quad dt \times d\mathbb{P} \ a.e., \\
   (F_t (x))^\top \phi_t &=  \widehat{J}_t(x) + B_t^{-1} J^D_t(x)\quad dt \times \nu_t(dx) \times d\mathbb{P} \ a.e., \\
(\rho^{i,j}_t)^\top \phi_t &= \widehat{\gamma}^{\,i,j}_t + B_t^{-1}\gamma^{D,i,j}_t, \quad \quad \lambda^{i,j}_t dt \times d\mathbb{P} \ a.e. \quad \forall i,j \in \mathcal{K}, i\neq j.
\end{align*}
\end{thm}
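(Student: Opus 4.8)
The plan is to deduce Theorem \ref{prop:attain-2} directly from Theorem \ref{prop:attain} by means of the explicit correspondence between the martingale representation of $\widehat{X}$ and that of $X = \int_0^T B_u^{-1}\, dD_u$ already recorded in the proof of Theorem \ref{thm:risk-min-spec-2}. The whole argument is a reduction: I would not re-prove any martingale-orthogonality or replication fact, but instead translate the abstract characterization of Theorem \ref{prop:attain} into the present notation.

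First I would invoke the computation from the proof of Theorem \ref{thm:risk-min-spec-2}: under the stated integrability of the quintuple $(\xi^D, g^D, \delta^D, J^D, \gamma^D)$ the random variable $X$ is square integrable, and whenever $\widehat{X}$ admits the representation \eqref{eq:repr-hatX}, the discounted cumulated payment $X$ admits the representation \eqref{eq:repr-H-P-N}. Reading off the integrands against $dW$, against $\widetilde{\Pi}$, and against the $dM^{i,j}$ then identifies the coefficients $(\delta, J, \gamma)$ in the generic representation \eqref{eq:repr-H-P} of $X$ as
\begin{align*}
\delta_t = \widehat{\delta}_t + B_t^{-1}\delta^D_t,\qquad
J_t(x) = \widehat{J}_t(x) + B_t^{-1} J^D_t(x),\qquad
\gamma^{i,j}_t = \widehat{\gamma}^{\,i,j}_t + B_t^{-1}\gamma^{D,i,j}_t.
\end{align*}

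Next I would apply Theorem \ref{prop:attain} to $D$ written in the form \eqref{eq:repr-H-P} with these coefficients. That theorem asserts that a $\mathbb{P}$-admissible replicating strategy exists if and only if the pointwise linear system $(\sigma_t)^\top\phi_t = \delta_t$, $(F_t(x))^\top\phi_t = J_t(x)$, $(\rho^{i,j}_t)^\top\phi_t = \gamma^{i,j}_t$ admits a predictable solution $\phi$, almost everywhere with respect to the respective reference measures $dt\times d\mathbb{P}$, $dt\times\nu_t(dx)\times d\mathbb{P}$ and $\lambda^{i,j}_t\,dt\times d\mathbb{P}$. Substituting the three identities above converts this system verbatim into the three conditions displayed in Theorem \ref{prop:attain-2}, which yields the claimed equivalence.

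The only step requiring care is the legitimacy of identifying the integrands of \eqref{eq:repr-H-P-N} with the coefficients of \eqref{eq:repr-H-P}, and this is where I expect the sole (mild) technical point to lie. It rests on the uniqueness of the integrands in such a representation, which holds because $W$, $\widetilde{\Pi}$ and the family $(M^{i,j})$ are square integrable martingales that are pairwise strongly orthogonal; consequently the three integrands are determined uniquely $dt\times d\mathbb{P}$, $dt\times\nu_t(dx)\times d\mathbb{P}$ and $\lambda^{i,j}_t\,dt\times d\mathbb{P}$ almost everywhere, respectively, so the substitution is well defined and measure-consistent. Everything else is the routine bookkeeping indicated above, so I do not anticipate a genuine obstacle beyond confirming this uniqueness and the square integrability of $X$ furnished by the hypotheses on $(\xi^D, g^D, \delta^D, J^D, \gamma^D)$.
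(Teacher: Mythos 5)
Your reduction is correct and is exactly the route the paper takes: it derives the representation \eqref{eq:repr-H-P-N} of $X$ from \eqref{eq:repr-hatX} in the proof of Theorem \ref{thm:risk-min-spec-2}, reads off $\delta_t = \widehat{\delta}_t + B_t^{-1}\delta^D_t$, $J_t(x) = \widehat{J}_t(x) + B_t^{-1}J^D_t(x)$, $\gamma^{i,j}_t = \widehat{\gamma}^{\,i,j}_t + B_t^{-1}\gamma^{D,i,j}_t$, and then simply restates Theorem \ref{prop:attain} with these coefficients. Your explicit remark on the a.e.\ uniqueness of the integrands (via strong orthogonality of $W$, $\widetilde{\Pi}$ and the $M^{i,j}$) is a point the paper leaves implicit, but it is a welcome clarification rather than a deviation.
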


\section{Risk-minimization in the Markovian market models with ratings}

 In this section we specify a  Markovian market model. 
 We assume that information available to the market participants is modeled by a multidimensional process $(Y,C)$ given as a solution to some SDE in $\mathbf{R}^{d+p}\times \mathcal{K}$.
 The first $d$ components of $Y$ are assumed to be a process $S$ of prices of tradable risky assets, and the remaining $p$ components,  denoted by $R$,  represent economic environment such as interest rates, inflation or stochastic volatility. Thus
 \[
 	S = (Y^i)_{i=1}^d, \quad
  R = (Y^i)_{i=d+1}^{d+p}.
 \]
 $C$ is a stochastic process which has finite state space $\mathcal{K} =
\set{1 , \ldots , K }$ and that could be interpreted as a credit rating of corporate, so $C_u$ represents credit rating of corporate at time $u \leq T$.
On the market there is a money account with the price process, denoted by
$B$, depending on on economic conditions of market, i.e. $R$ and $C$. So $B$ is given as a unique solution to
\begin{equation}  \label{eq:bank-acc}
d B_u = \mathfrak{r}(u,R_{u-},C_{u-}) B_u du,\quad B_0 = 1,
\end{equation}
where $\mathfrak{r}$ is a measurable, deterministic and bounded function.
Also the
 evolution
of price depends on economic conditions of market which
are described by rating system and additional non-tradable risk factor process $R$.
So, we assume that our model is described by SDE in which the credit rating of corporate has influence on asset prices $S$ and other non-tradable factor $R$ by
changing drift and volatility. Moreover, a change in credit rating
from $i$ to $j$ at time $u$ causes a jump of size ${\rho^{i,j}_Y(u,Y_{u-})}$.
Therefore, the evolution of $(S,R,C)$ is given as solution of the
following SDE in $\mathbf{R}^{d+p} \times \mathcal{K}$ :
\begin{align}
  \nonumber
  d \left(
     \begin{array}{c}
       S_u \\
       R_u \\
     \end{array}
   \right)
 &=   \left(
     \begin{array}{c}
       \mu_S(u,S_{u-}, R_{u-}, C_{u-})  \\
       \mu_R(u,R_{u-}, C_{u-})  \\
     \end{array}
   \right)
   du +
   \left(
     \begin{array}{c}
       \sigma_S( u, S_{u-},R_{u-}, C_{u-})   \\
       \sigma_R( u, R_{u-}, C_{u-})   \\
     \end{array}
   \right)
   d W_u \\
   \nonumber  & + \int_{\mathbf{R}^n}
   \left(
     \begin{array}{c}
    F_S(u,S_{u-},R_{u-},C_{u-}, x ) \\
    F_R(u,R_{u-},C_{u-}, x ) \\
     \end{array}
   \right)
\widetilde{\Pi}(dx , du)
\\
 \label{eq:SDE-SR}
 & \
  + \sum_{ \substack{i,j=1 \\ j\neq i} }^K
    \left(
     \begin{array}{c}
  \rho^{i,j}_S(u,S_{u-},R_{u-}) \\
  \rho^{i,j}_R(u,R_{u-}) \\
     \end{array}
   \right)
  \I_\set{i}(C_{u-}) (d N^{i,j}_u - \lambda^{i,j}(u, S_{u-}, R_{u-}) du),
& \\
 \nonumber
 d C_u &= \sum_{ \substack{i,j=1 \\ j\neq i} }^K (j - i) \I_{i} (C_{u-}) d
 N^{i,j}_u ,
 \\
 \nonumber
 S_0 & = s, \quad R_0 = r , \quad C_0 = c \in \mathcal{K},
\end{align}
where $W$ is a standard $r$-dimensional Wiener process, $\widetilde{\Pi}(dx,du)$ is a  compensated Poisson random measure on
$\mathbf{R}^n \times [\![0,T]\!]$ with the intensity measure
$\nu(dx) du $ where $\nu$ is a L\'{e}vy measure, and
$N^{i,j}$  are counting point processes with intensities determined by
$\lambda^{i,j}$, bounded continuous functions in $(u,y)$, i.e., for the fixed  $(i,j)$, the process
\begin{align}\label{eq:mart-Mij-t}
\widetilde{M}^{i,j}_u := N^{i,j}_u - \int_0^u \lambda^{i,j}(v, Y_{v-}) dv
\end{align}
is a martingale. Note that
the martingales defined by  \eqref{eq:M-ij}  take the following form
\begin{equation}\label{eq:M-ij-markov}
    M^{i,j}_u = H^{i,j}_u - \int_0^u H^{i}_{v-} \lambda^{i,j}(v, Y_{v-}) dv,
\end{equation}
since
\begin{align}\label{eq:Hij}
    H^{i,j}_u = \int_{0}^u \I_{i} (C_{v-}) d
 N^{i,j}_v.
\end{align}
Moreover, we require that the
Poisson random measure $\Pi $ and the processes $N^{i,j}$, $i,j
\in \mathcal{K}$, $i \neq j$, have no common jumps, i.e., for
every $t > 0 $ and every $b>0$,
  \begin{equation} \label{eq:hip-A-SDE}
\int_0^u \int_{ \norm{ x} > b}
 \Delta N^{i,j}_v \Pi( dx, dv) =0 \quad  \mathbb{P} - a.s.,
\end{equation}
and for all $(i_1,j_1) \neq ( i_2,j_2)$,
 \begin{equation}
\label{eq:hip-B}
    \Delta N^{i_1,j_1}_u\Delta N^{i_2,j_2}_u =  0 \quad  \mathbb{P} - a.s.
\end{equation}
We also impose the following condition
\[
	\mu_S(u,s,r,c) = s \mathfrak{r}(u,s,r,c),
\]
which implies that discounted prices of tradable assets $S$ are local martingales under the probability $\mathbb{P}$.
The presence of $\rho^{i,j}$ in the above SDE adds extra flexibility to model, so it is very important from the point of view of possible applications.
For example, it enables us to introduce into a model a dependence of intensity of jumps $C$ at time $t$ on behavior of trajectory of process $C$ up to time $t-$.  This is the case  of semi-Markov processes where $\lambda^{i,j}$ at time $t$ depends on time which the process $C$ spends in the current state after the  last jump.
The process (say $R^1$) corresponding to this semi-Markovian dependence can be introduced in our framework by setting
\[
	d R^1_t = dt - \sum_{i,j \in \mathcal{K}: i \neq j } R^1_{t-} \I_{i}(C_{t-}) d N^{i,j}_t.
\]
So if we allow $\lambda^{i,j}$ to be a (non-constant) function of $R^1$ we obtain a semi-Markov model (see Section 6). \\
For convenience and brevity we introduce the following notation:
\begin{align*}
&Y_u := (S_u, R_u), \\
&z := (s,r), \quad
\mu_Y(u,z,c) := \left(
                  \begin{array}{c}
                    \mu_S(u,s,r,c) \\
                    \mu_R(u,r,c) \\
                  \end{array}
                \right), \quad
\sigma_Y(u,z,c) := \left(
                  \begin{array}{c}
                    \sigma_S(u,s,r,c) \\
                    \sigma_R(u,r,c) \\
                  \end{array}
                \right),
\\
&F_Y(u,z,c,x) := \left(
                  \begin{array}{c}
                    F_S(u,s,r,c,x) \\
                    F_R(u,r,c,x) \\
                  \end{array}
                \right), \quad
\rho^{i,j}_Y(u,y)
=
\left(
     \begin{array}{c}
  \rho^{i,j}_S(u,s,r) \\
  \rho^{i,j}_R(u,r) \\
     \end{array}
\right), \\
& a_Y(u,z,c) := \left(
                  \begin{array}{cc}
                    a_{SS}(u,z,c) & a_{SR}(u,z,c)\\
                    a_{RS}(u,z,c) & a_{RR}(u,z,c)\\
                  \end{array}
                \right)=
                \left(
                  \begin{array}{cc}
                    \sigma_S \sigma_S^\top(u,z,c) & \sigma_S \sigma_R^\top(u,z,c)\\
                    \sigma_R \sigma_S^\top(u,z,c) & \sigma_R \sigma_R^\top(u,z,c)\\
                  \end{array}
                \right). \quad
\end{align*}
Using this notation we can write SDE
\eqref{eq:SDE-SR}
as the following multidimensional  SDE  in $\mathbf{R}^{d+p} \times \mathcal{K}$
\begin{align}\label{eq:SDE-gen}
 d Y_u &=   \mu_Y(u,Y_u, C_u) du + \sigma_Y( u, Y_{u-}, C_{u-}) d W_u + \int_{\mathbf{R}^n} F_Y(u,Y_{u-},C_{u-}, x )
\widetilde{\Pi}(dx , du)
\\
\nonumber
& \
  + \sum_{ \substack{i,j=1 \\ j\neq i} }^K \rho^{i,j}_Y(u,Y_{u-}) \I_\set{i}(C_{u-}) (d N^{i,j}_u - \lambda^{i,j}(u, Y_{u-}) du),
\\
\nonumber
 d C_u &= \sum_{ \substack{i,j=1 \\ j\neq i} }^K (j - i) \I_{i} (C_{u-}) d
 N^{i,j}_u ,
 \\
 \nonumber
 Y_0 & = y , \quad C_0 = c \in \mathcal{K},
\end{align}
with the coefficients being
measurable functions
 $\sigma_Y( \cdot, \cdot,
\cdot ) : [\![0,T]\!] \times \mathbf{R}^{d+p} \times \mathcal{K}
\rightarrow  \mathbf{R}^{(d+p)\times r}$, $F_Y(\cdot,\cdot, \cdot, \cdot ) :
[\![0,T]\!] \times \mathbf{R}^{d+p} \times \mathcal{K} \times
\mathbf{R}^{n} \rightarrow \mathbf{R}^{d+p} $ and $\rho^{i,j}_Y(\cdot,\cdot)
: [\![0,T]\!] \times \mathbf{R}^{d+p}  \rightarrow \mathbf{R}^{d+p}$.
This
SDE is a non standard one since a driving noise depends on the
solution itself (as in Jacod and Protter \cite{jp1978}, Becherer and Schweizer \cite{bs2005}). That is, the noise
$(N^{i,j})_{i,j \in \mathcal{K} : j\neq i}$ is not given apriori, it
is also constructed.
The solution is a quintuple $(Y,C, (N^{i,j})_{i,j \in \mathcal{K} : j\neq i},W, \Pi)$ together with a filtered probability space $(\Omega, \mathbb{F}, \mathbb{P})$. This corresponds exactly to the concept of weak solutions of SDE's (see Rogers and Williams \cite{RW2000}) and uniqueness in law is suitable concept of uniqueness for weak SDE's.
On the other hand the existence and weak uniqueness of  weak solutions to SDE \eqref{eq:SDE-gen}
(started at $0$)
is also required to model a market in the proper way. Indeed,
if there is no uniqueness in law, then there is an ambiguity on a law of prices, so  we do not have a market model specified correctly.
Furthermore to establish connection between SDE's and related PIDE's (and Markov property) we
are also interested  in existence and uniqueness of solutions to the SDE \eqref{eq:SDE-gen} started at every $t \in [\![0,T]\!]$ from every $(y,c) \in \mathbf{R}^{d+p} \times \mathcal{K}$ and defined on interval $[\![t,T]\!]$.
Note that these weak solutions
$(Y^{t,y,c},C^{t,y,c}, ((N^{i,j})^{t,y,c})_{i,j \in \mathcal{K} : j\neq i}, W^{t,y,c}, \Pi^{t,y,c})_{ u \in [\![t,T]\!]}$
can be constructed on possibly different filtered probability spaces $(\Omega^{t,y,c}, \mathbb{F}^{t,y,c}, \mathbb{P}^{t,y,c})$.

The expectation with respect to $\mathbb{P}^{t,y,c}$ will be denoted by  $\mathbb{E}^{t,y,c}$.
By $(H^{i,j})^{t,y,c}$ and  $(M^{i,j})^{t,y,c}$ we denote processes defined by
\begin{align}
\nonumber
    (H^{i,j})^{t,y,c}_u & := \int_{t}^u \I_{i} (C^{t,y,c}_{v-}) d
 (N^{i,j})^{t,y,c}_v,
 \\
 \label{eq:M-ijtyc}
    (M^{i,j})^{t,y,c}_u &:= (H^{i,j})^{t,y,c}_u - \int_t^u \I_{i} (C^{t,y,c}_{v-}) \lambda^{i,j}(v, Y^{t,y,c}_{v-}) dv.
\end{align}
Sometimes, for brevity of notation, we will write $(Y,C)$ to denote $(Y^{0,y,c},C^{0,y,c})$ - a solution to SDE \eqref{eq:SDE-gen} started at $0$ from $(y,c)$ and write $\mathbb{E}$ instead of $\mathbb{E}^{0,y,c}$.
 As in \cite{jaknie2012} we can
connect
 solutions to SDE \eqref{eq:SDE-gen} with corresponding martingale problem.
Let $\mathcal{I}$ be a class of functions defined by
\begin{align*}
\mathcal{I}=&\left\{
v: [\![0,T]\!] \times \mathbf{R}^{d+p} \times
\mathcal{K} \rightarrow \mathbf{R}: v \text{ is measurable, and } \forall s \in \mathbf{R}^d \ \forall i \in \mathcal{K}
\right.
\\
&
\quad
\left.
\int_{\mathbf{R}^n}\!\!\!\!\left| v^i(
u, z + F_Y(u,z,i,x)) - v^i(u, z ) - \nabla v^i(u,z) F_Y(u,z,i,x)
 \right|\nu(dx) < \infty \right\}.
\end{align*}
Let   $\mathcal{C}^{1,2}  =
\mathcal{C}^{1,2}( [\![0,T]\!] \times \mathbf{R}^{d+p}  \times \mathcal{K})$
be the  space of all measurable functions $v : [\![0,T]\!] \times
\mathbf{R}^{d+p} \times \mathcal{K} \rightarrow \mathbf{R} $ such that
$ v(\cdot, \cdot, k ) \in  \mathcal{C}^{1,2}([\![0,T]\!] \times \mathbf{R}^{d+p}
)$  for every $ k \in \mathcal{K} $, and let $\mathcal{C}^{1,2}_c$ be a set
of functions $f \in \mathcal{C}^{1,2}$ with compact support.
Define the family of operators $(\mathcal{A}_t)_{t \in [\![0,T]\!]}$ acting on
$\mathcal{C}^{1,2} \cap \mathcal{I}$ by setting
%
\begin{align}
\label{eq:gener-SDE}&\mathcal{ A }_t v( z,c) :=
  \nabla v(u,z,c)\mu_Y(u,z,c)
+ \frac{1}{2}
 \Tr \left( a_Y(u,z,c) \nabla^2 v(u,z,c) \right)
\\ \nonumber
 & + \int_{\mathbf{R}^n}\left( v( t,y  + F_Y(u,z,c,x) ,c)
- v( t,y,c) - \nabla v(u,z,c) F_Y(u,z,c,x)
\right)\nu(dx)
\\
\nonumber & + \sum_{c' \in \mathcal{K} \setminus c } \left( v( u, z  +
\rho^{c,c'}_Y(u,z),c') - v(u,z,c) - \nabla v(u,z,c)\rho^{c,c'}_Y(u,z) \right)\lambda^{c,c'}(u,z).
\end{align}
Here, by $Tr$ we denote the trace operator, $a_Y(u,z,c) :=  \sigma_Y(u,z,c)  (\sigma_Y(u,z,c))^{\!\top}$\!, and
$\nabla^2 v$ is the matrix of second derivatives of $v$ with respect to the
components of $y$. By $\nabla v$($\nabla_S v$, $\nabla_R v$ respectively) we denote the row vector of partial derivatives  of function $v$ with respect $y$ ($s$,$r$ respectively).

\noindent We make the following assumption on coefficients of \eqref{eq:SDE-gen}: \\
\textbf{Assumption LG.}
Functions $\mu_Y$,  $\sigma_Y$, $F_Y$, $\rho^{i,j}_Y$  satisfy the following linear growth condition
  \begin{align*}\tag{LG}
  	|\mu_Y(u,z,c)|^2
	+
  	|\sigma_Y(u,z,c)|^2
  +
  \int_{\mathbf{R}^n}|F_Y(u,z,c,x)|^2 \nu(dx)
  +
  \sum_{i,j \neq i}
  |\rho^{i,j}_Y(u,z)|^2
  \leq K(1 + |z|^2),
  \end{align*}
\noindent
Assuming (LG), among others, we can prove  very important facts (see \cite{jaknie2012}) used in the sequel, such as:
\begin{enumerate}
  \item The component $Y^{t,y,c}$
   of any weak solution to SDE \eqref{eq:SDE-gen}, started from $(y,c)$ at $t \in [\![0,T]\!]$, and  constructed on some filtered probability space  $(\Omega^{t,y,c}, \mathbb{F}^{t,y,c}, \mathbb{P}^{t,y,c})$
      satisfies \begin{align*}\label{eq:sup-S2r}
        \mathbb{E}^{t,y,c} \left(\sup_{u \in [\![t,T]\!]} |Y^{t,y,c}_u|^{2} \right)< \infty.
    \end{align*}
\item $\mathcal{C}^{1,2}_c \subset \mathcal{I} $.
  \item For $v \in \mathcal{C}^{1,2}_c$ the mapping $(u,z,c) \mapsto \mathcal{A}_u v(y,c) $
  has quadratic growth.
\end{enumerate}
By applying the It\^{o} lemma and exploiting properties (1), (2) and (3) we obtain that for every $v \in \mathcal{C}^{1,2}_c$ the process  $M^v$ defined for $u \in [\![t,T]\!]$ by
\[
	M^v_u := v(Y^{t,y,c}_u,C^{t,y,c}_u) - \int_{t}^u \mathcal{A}_u v(Y^{t,y,c}_u,C^{t,y,c}_u)du
\]
is a $(\mathbb{F}^{t,y,c}, \mathbb{P}^{t,y,c})$ martingale (see \cite[Prop. 2.5 and Thm. 5.1]{jaknie2012}).
 We have, by construction, that $\mathbb{P}^{t,y,c}( Y^{t,y,c}_t = y , C^{t,y,c}_t = c) =1$, and this means that $\mathbb{P}_{t,y,c}$, the law of $(Y^{t,y,c}_u,C^{t,y,c}_u)_{ u \in [\![t,T]\!]}$ considered as the law on an appropriate Skorochod space, solves the time dependent martingale problem for $(A_u)_{u \in [\![t,T]\!]}$ started at $t$ from $(y,c)$.

Of course, in general, assumption (LG) is to weak to guarantee existence and uniqueness of solutions to  SDE \eqref{eq:SDE-gen}, therefore, to
present results as general as possible, we assume

\textbf{Assumption EUWS} (existence of  unique weak solution).
For each $(y,c) \in \mathbf{R}^{d+p} \times \mathcal{K}$ and $t \in [\![0,T]\!]$, the SDE \eqref{eq:SDE-gen} on the interval $[\![t,T]\!]$ has a unique weak solution on some filtered probability space $(\Omega^{t,y,c}, \mathbb{F}^{t,y,c}, \mathbb{P}^{t,y,c})$.
\smallskip

For some sufficient conditions ensuring that EUWS is satisfied we refer to
Kurtz \cite{kurtz2010} and Jakubowski and Niew\k{e}g\l owski \cite{jaknie2012} where the existence and weak uniqueness of SDE's is related with the existence and well-possedness of corresponding martingale problem. In particular assumption EUWS yields that martingale problem for $(\mathcal{A}_u)_{u \in [\![t,T]\!]}$  is well possed, and therefore
 $\set{ \mathbb{P}^{t,y,c}:  (s,c) \in \mathbf{R}^d \times \mathcal{K}}$
is  a Markov family (see \cite[Prop. 4.5]{jaknie2012}). So, for a bounded functional $h$, we have
\begin{align}\label{eq:row_Ma}
    \mathbb{E}^{0,y,c} \left( h \left( (Y^{0,y,c}_u, C^{0,y,c}_u )_{ u \in [\![t,T]\!] }  \right) | \mathcal{F}^{0,y,c}_t \right)
    =
    \mathbb{E}^{t,y',c'} \left( h \left( (Y^{t,y',c'}_u, C^{t,y',c'}_u )_{ u \in [\![t,T]\!] }  \right) \right)\!\!\big|_{(y',c') =  (Y^{0,y,c}_t, C^{0,y,c}_t )}
\end{align}
where equality holds  $\mathbb{P}^{0,y,c} \ a.s.$

Now  we establish  a 0-achieving risk-minimizing strategy for a rating sensitive claim (see \cite{jaknie2011}), i.e., for a payment stream $D$ given by
\begin{align}\label{eq:D}
D_t := h(Y_T, C_T)1_\set{t \geq T } + \int_{0}^t g(u,Y_{u-}, C_{u-}
) du + \sum_{i,j\neq i } \int_{0}^t  \delta^{i,j}(u, Y_{u-})
d H^{i,j}_u,
\end{align}
$0 \leq t \leq T$, where $h$, $g$, $\delta^{i,j}$ are measurable real-valued function such that
\begin{align}\label{eq:D1}
\forall t \leq T \qquad    \mathbb{E} \!\!\left( \int_t^T \frac{1}{B_u} d D_u  \right) < \infty. \quad
\end{align}
We connect this  with a problem of finding the \emph{ex-dividend price function}, i.e. a
function $v$  such that $v(t,Y_t,
C_t) = V_t$,
with the process $V$ defined by
\begin{equation}\label{eq:proc-ex-div}
V_t :=  \!B_t\mathbb{E} \!\!\left( \frac{\!h(Y_T, C_T)}{B_T} +
\!\!\int_{t}^T \!\!\frac{g(u,Y_{u-}, C_{u-} )}{B_u} du +
\!\!\sum_{i,j\neq i } \int_{t}^T \!\!\!\frac{\delta^{i,j}(u,
Y_{u-}) }{B_u}d H^{i,j}_u \bigg| \mathcal{F}_t
\!\!\right)\!.
\end{equation}
Note that the existence of such function $v$ follows immediately from the Markov property of $(Y,C)$ with respect to $\mathbb{F}$.
Under assumption EUWS, by \eqref{eq:row_Ma}, we have
\[
	V_t = v(t,Y_t, C_t),
\]
where
\begin{align}\label{eq:v-def}
v(t,y,c)
:= \!\mathbb{E}^{t,y,c} \!\bigg(
& \frac{h(Y^{t,y,c}_T, C^{t,y,c}_T)}{B^{t,y,c}_{T}   } +
\!\!\int_{t}^T \!\! \frac{g(u,Y^{t,y,c}_{u-}, C^{t,y,c}_{u-} ) }{B^{t,y,c}_{u} }du
\\
\nonumber
&
+
\!\!\sum_{i,j\neq i } \int_{t}^T \!\!\! \frac{\delta^{i,j}(u,
Y^{t,y,c}_{u-}) }{B^{t,y,c}_{u} }d H^{i,j}_u
\!\!\bigg)\!,
\end{align}
    with $B^{t,y,c}_{u} $ defined by
\begin{align*}
B^{t,y,c}_{u} = \exp\bigg(\int_t^u \!\!\!\mathfrak{r}( v, R^{t,y,c}_{v}, C^{t,y,c}_{v}  )dv\bigg).
\end{align*}
So $B^{t,y,c}$ is the unique solution, on  the interval $[\![t,T]\!]$,  of ODE
\begin{align}  \label{eq:def-B1}
    d B^{t,y,c}_{u} &= \mathfrak{r}( u, R^{t,y,c}_{u}, C^{t,y,c}_{u}  ) B^{t,y,c}_{u} du,
    \quad
    B^{t,y,c}_{t} =1.
\end{align}
Let us notice that the law of $B^{t,y,c}_{u} $ is the same as the law of $\frac{B_u}{B_t}$  on the set $\set{Y^{0,y',c'}_t=y, C^{0,y',c'}_t=c}$  for arbitrary $(y',c')$.

In general, we do not know anything about regularity of $v$. The next theorem states that for a sufficiently regular function $v$ we can find a  0-achieving risk-minimizing strategy for $D$ written explicitly in terms of this function $v$.
\begin{thm}\label{thm:risk-min-markov}
Consider the dividend  process $D$ given by \eqref{eq:D}. Assume EUWS and that:

\noindent i)
There exists $m \geq 1$
such that
    \begin{align}\label{eq:sup-S2}
        \mathbb{E} \left(\sup_{t \in [\![0,T]\!]} |Y_t|^{2m} \right)< \infty,
    \end{align}
and moreover the following growth condition for this $m$ holds
\begin{align}\label{eq:LG-D}
|h(z,i)|^2 + |g(u,z,i)|^2 + \sum_{ j \in \mathcal{K}, j \neq i} | \delta^{i,j} (u,z)|^2 \leq K (1 + |z|^{2m}).
\end{align}
ii)
There exist an ex-dividend price function  $v$ which belongs to $ \mathcal{C}^{1,2} \cap \mathcal{I}$. 

\noindent Then
there exists a  0-achieving risk-minimization strategy $\varphi = (\phi,\eta)$ for the dividend process $D$. The component $\phi$ of this strategy is the predictable version of solution to the
linear system
\[
    G(t,Y_{t-},C_{t-}) \phi_t = A(t,Y_{t-},C_{t-}),
\]	
where
\begin{align*} 
G(u,z,i) := &
a_{SS}(u,z,i)
+
\int_{\mathbf{R}^{n}} \!\!\!(F_S F_S {^{\!\!\!\top}} )(u, z , i,x) \nu(dx)
 + \!\!\!\!\sum_{j \in \mathcal{K} : j \neq i}  \!\!\!(\rho^{i,j}_S\rho^{i,j}_S{^{\!\top}}) (u, z )  \lambda^{i,j}(u,z),
\\
A(u,z,i) := & a_{SS}(u,z,i)(\nabla_{\!S} v(u,z,i))\!^\top
+ a_{SR}(u,z,i) (\nabla_{\!R} v(u,z,i))\!^\top
\\
 & + \!\!\int_{\mathbf{R}^{n}} \!\!\!\!F_S(u,z,i,x)
\!\left(  v (u,z\! + \!F_Y(u,z,i,x),i
 ) - v(u, z ,i)
\right)
 \nu(dx) \\
 & +\!\!\!\!
  \sum_{j \in \mathcal{K} : j \neq i} \!\!\!\!\rho^{i,j}_S(u,z)
\!\left(v(u, z  \!+\! \rho^{i,j}_Y(u,z),j) \!-\! v(u, z ,i) +
\delta^{i,j}(u,z) \right)
   \lambda^{i,j}(u,z)
\end{align*}
and
\[
\eta_t = \frac{v(t,Y_t,C_t)1_\set{ t < T } - \phi_t^\top S_t}{B_t}.
\]
\end{thm}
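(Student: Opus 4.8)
The plan is to reduce the statement to Theorem~\ref{thm:risk-min-spec} by producing, via the It\^o formula applied to the ex-dividend price function, the explicit martingale representation \eqref{eq:repr-H-P} of $X=\int_0^T B_u^{-1}\,dD_u$ and then matching the resulting integrands against the coefficients of the discounted price dynamics \eqref{eq:sde-lmm}. I would start from the identity that $V^*_t=\mathbb{E}(X\mid\mathcal{F}_t)$ is by construction a square-integrable martingale and that, by \eqref{eq:row_Ma} together with the definition \eqref{eq:v-def} of $v$, one has $\mathbb{E}(\int_t^T B_u^{-1}\,dD_u\mid\mathcal{F}_t)=v(t,Y_t,C_t)/B_t$, so that
\[
V^*_t=\int_0^t \frac{1}{B_u}\,dD_u+\frac{v(t,Y_t,C_t)}{B_t}.
\]
Differentiating the right-hand side is the heart of the argument. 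Because assumption~(ii) gives $v\in\mathcal{C}^{1,2}\cap\mathcal{I}$, the It\^o formula yields $dv(t,Y_t,C_t)=(\partial_t v+\mathcal{A}_t v)(t,Y_{t-},C_{t-})\,dt+dN^v_t$, where $\mathcal{A}_t$ is the generator \eqref{eq:gener-SDE} and $N^v$ is the local martingale whose three pieces are the Brownian integral $\nabla v\,\sigma_Y\,dW_t$, the $\widetilde\Pi$-integral of $v(t,Y_{t-}+F_Y,C_{t-})-v(t,Y_{t-},C_{t-})$, and the sum over transitions $\sum_{i,j}(v(t,Y_{t-}+\rho^{i,j}_Y,j)-v(t,Y_{t-},i))\,dM^{i,j}_t$. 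Since $B$ is continuous of finite variation with $d(1/B_t)=-\mathfrak{r}B_t^{-1}\,dt$, and since on $[0,T)$ one has $dD_t=(g+\sum_{i,j}\delta^{i,j}\I_i(C_{t-})\lambda^{i,j})\,dt+\sum_{i,j}\delta^{i,j}\,dM^{i,j}_t$, combining the two expansions gives the canonical special-semimartingale decomposition of $V^*$. As $V^*$ is a martingale its predictable finite-variation part must vanish (this is precisely the PIDE for $v$), and reading off the martingale part shows that \eqref{eq:repr-H-P} holds with $\delta_t^\top=B_t^{-1}\nabla v\,\sigma_Y$, $J_t(x)=B_t^{-1}(v(t,Y_{t-}+F_Y,C_{t-})-v(t,Y_{t-},C_{t-}))$, and $\gamma^{i,j}_t=B_t^{-1}(v(t,Y_{t-}+\rho^{i,j}_Y,j)-v(t,Y_{t-},i)+\delta^{i,j})$, all coefficients evaluated at $(t,Y_{t-},C_{t-})$.

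Next I would feed these integrands into Theorem~\ref{thm:risk-min-spec}. The dynamics of $S^*=S/B$ comes from the first $d$ rows of \eqref{eq:SDE-gen}: using the imposed condition $\mu_S=s\mathfrak{r}$ the drift of $S^*$ cancels, and comparing with \eqref{eq:sde-lmm} identifies the abstract coefficients as $\sigma_t=\sigma_S/B_t$, $F_t=F_S/B_t$, $\rho^{i,j}_t=\rho^{i,j}_S/B_t$, while $\lambda^{i,j}_t=\I_i(C_{t-})\lambda^{i,j}(t,Y_{t-})$. Substituting into \eqref{eq:pred-cov-i} and \eqref{eq:Y-def}, every term carries a common factor $B_t^{-2}$ which cancels from both sides of the system $G_t\phi_t=A_t$ in \eqref{eq:GKW-LSE}, and the indicator $\I_i(C_{t-})$ selects the current rating; then, using $a_{SS}=\sigma_S\sigma_S^\top$, $a_{SR}=\sigma_S\sigma_R^\top$ and $\nabla v=(\nabla_S v,\nabla_R v)$, the system collapses exactly to $G(t,Y_{t-},C_{t-})\phi_t=A(t,Y_{t-},C_{t-})$ with the stated $G$ and $A$. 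The formula for $\eta$ then follows from Theorem~\ref{thm:risk-min-spec} on noting $V^*_t-\int_0^t B_u^{-1}\,dD_u=v(t,Y_t,C_t)/B_t$ and $S^*_t=S_t/B_t$; the factor $1_{\{t<T\}}$ simply records that the ex-dividend value drops to $0$ once the terminal payment $h$ has been delivered, which is what makes $\varphi$ $0$-achieving.

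The main obstacle is the justification of the It\^o/martingale step rather than the algebra. The generator–martingale property quoted from \cite{jaknie2012} is stated only for $v\in\mathcal{C}^{1,2}_c$, so for the possibly unbounded $v$ here I would localize by a sequence of stopping times and pass to the limit, and I must confirm that $N^v$ and the dividend integrals are genuine square-integrable martingales, not merely local ones, so that the representation \eqref{eq:repr-H-P} is valid with integrands satisfying the integrability needed in Theorem~\ref{thm:risk-min-spec}. This is exactly where hypothesis~(i) enters: the moment bound \eqref{eq:sup-S2}, the growth condition \eqref{eq:LG-D} and Assumption~(LG) together control $\mathbb{E}(X^2)$ and $\mathbb{E}\int_0^T(\norm{\delta_t}^2+\int_{\mathbf{R}^n}|J_t(x)|^2\nu(dx)+\sum_{i,j}|\gamma^{i,j}_t|^2\lambda^{i,j}_t)\,dt<\infty$, which secures both the square integrability of $X$ and the uniform integrability that legitimizes the limit. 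Once this integrability is in place, the conclusion is immediate from Theorem~\ref{thm:risk-min-spec}.
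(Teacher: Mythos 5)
Your proposal is correct and follows essentially the same route as the paper: apply It\^o's formula to $v(t,Y_t,C_t)/B_t$, invoke uniqueness of the canonical decomposition of the special semimartingale $V^*$ to kill the predictable finite-variation part, read off the integrands $\delta_t$, $J_t(x)$, $\gamma^{i,j}_t$ of the representation \eqref{eq:repr-H-P}, and conclude via Theorem \ref{thm:risk-min-spec} after the common factor $B_t^{-2}$ cancels from \eqref{eq:GKW-LSE}. Your explicit attention to localizing the It\^o/martingale step (since the generator--martingale property from \cite{jaknie2012} is quoted for $\mathcal{C}^{1,2}_c$) and to the square-integrability of the integrands is a point the paper passes over more quickly, but it is a refinement of the same argument rather than a different one.
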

\begin{proof}
Let $X = \int_0^T \frac{1}{B_u} d D_u $. We find the martingale representation of $X$ of the form  \eqref{eq:repr-H-P}, and then apply Theorem \ref{thm:risk-min-spec} to obtain the asserted formulae.
Using i), the Cauchy-Schwartz inequality and the isometry formula for stochastic integrals we see that  $X \in L^2$.
Thus, $M_t := \mathbb{E}(X|\mathcal{F}_t)$ is a martingale in $\mathcal{H}^2$ - the space of square integrable martingales.
By  definition of $V$ (i.e. \eqref{eq:proc-ex-div}), $M$ can be represented as
\begin{align*}
M_t &=  \frac{V_t}{ B_t} + \int_{0}^t \frac{g(u,Y_{u-}, C_{u-}
)}{B_u} du + \sum_{i,j\neq i } \int_{0}^t  \frac{\delta^{i,j}(u, Y_{u-})}{B_u}
d H^{i,j}_u \\
&=  \frac{v(t,Y_t,C_t)}{ B_t} + \int_{0}^t \frac{g(u,Y_{u-}, C_{u-}
) + \sum_{i,j\neq i }  \delta^{i,j}(u, Y_{u-})
H^i_{u-} \lambda^{i,j}_u
}{B_u} du \\
& \quad + \sum_{i,j\neq i } \int_{0}^t  \frac{\delta^{i,j}(u, Y_{u-})}{B_u}
d M^{i,j}_u.
\end{align*}
Using ii), the It\^o formula,  \eqref{eq:SDE-gen} and \eqref{eq:def-B1} we obtain
\begin{align*}
d\frac{v(t,Y_t,C_t)}{ B_t}
&= \frac{1}{ B_t} (dv(t,Y_t,C_t) - \mathfrak{r}(t,Y_{t-},C_{t-}) v(t,Y_t,C_t) dt)
\\
&=\frac{1}{ B_t} (dM^v_t + [(\partial_t + \mathcal{A}_t- \mathfrak{r})v](t,Y_{t-},C_{t-}) dt),
%
\end{align*}
where $M^v$ is given by
\begin{align}
\label{eq:def-Mv} &M^v_t := \int_0^t \sum_{i} H^i_{u- } \nabla
v(u, Y_{u-},i)\sigma^i_Y(u,Y_{u-}) d W_u
\\
\nonumber
& + \int_0^t \sum_{i}  H^i_{u- }
 \int_{\mathbf{R}^n} (v(u , Y_{u-} +  F_Y(u,Y_{u-}, i, x ),i) - v(u , Y_{u-},i)) \widetilde{\Pi} (du,dx)
\\
\nonumber & +  \int_0^t \sum_{i,j \neq i } (v(u , Y_{u-} +
{\rho^{i,j}_Y}(u,Y_{u-}),j) - v(u , Y_{u-},i) ) H^{i}_{u-}d M^{i,j}_u .
\end{align}
Hence we obtain
\begin{align}\label{eq:spec-sem}
M_t &= v(0,Y_0,C_0) + \int_0^t \frac{1}{ B_u} (dM^v_u + \sum_{i,j\neq i }  \delta^{i,j}(u, Y_{u-}) d M^{i,j}_u) + A_t,
\end{align}
where
\begin{align}\label{eq:A}
A_t=
\int_{0}^t \frac{1}{B_u}
\left(
\sum_{i} H^i_{u-}
\left[ [(\partial_t + \mathcal{A}_u-\mathfrak{r})v](u,Y_{u-},i)  + g(u,Y_{u-}, i
) + \sum_{j\neq i }  \delta^{i,j}(u, Y_{u-})
 \lambda^{i,j}(u,Y_{u-})
 \right]
 \right)
 du.
\end{align}
We see that RHS of \eqref{eq:spec-sem} is a special semimartingale with the unique predictable finite variation part given by $A$. On the other hand, this special semimartingale is equal to $M$  which is a martingale and therefore  $A$
is equal to zero (by uniqueness of canonical decomposition of special semimartingales). This implies that
\[
X=M_T = v(0,Y_0,C_0) + \int_0^T \frac{1}{ B_u} (dM^v_u + \sum_{i,j\neq i }  \delta^{i,j}(u, Y_{u-}) d M^{i,j}_u).
\]
So, we obtain that the  predictable processes $\delta$, $J(\cdot,x)$, $\gamma_{i,j}$ in the representation \eqref{eq:repr-H-P}  are given by
\begin{align}\label{eq:mart-rep-d}
\delta_t &=
\sum_{i \in \mathcal{K}} \frac{H^i_{t-}}{B_t}\left[ (\sigma_{\!S}(t,Y_{t-},
i))^\top (\nabla_{\!\!S} v(t,Y_{t-},i))^\top + (\sigma_{\!R}(t,Y_{t-},
i))^\top (\nabla_{\!\!R} v(t,Y_{t-},i))^\top \right],
\\
\nonumber
J_t(x) &= \sum_{i \in \mathcal{K}} \frac{H^i_{t-}}{B_t}( v (t , Y_{t-} +
F_Y(t,Y_{t-},  i,x),i ) - v(t, Y_{t-},i)),
\\
\nonumber
\gamma^{i,j}_t &= B^{-1}_t\left(v(t, Y_{t-} + \rho^{i,j}_Y(t,Y_{t-}),j) - v(t,
Y_{t-},i) + \delta^{i,j}(t,Y_{t-})\right).
\end{align}
Thus the assertion follows from Theorem \ref{thm:risk-min-spec}.
\end{proof}
\begin{rem}
If in SDE \eqref{eq:SDE-SR} there is no component $R$ (i.e., $Y=S$) then, by Remark \ref{rem:hedging}.c and \eqref{eq:mart-rep-d}, the Brownian-risk can be eliminated by  a usual delta-hedging strategy which is given
by
\[
	\phi_t =(\nabla_{\!\!S} v(t,S_{t-},C_{t-}))^\top.
\]
If component $R$ appears in diffusion model, i.e., in \eqref{eq:SDE-SR} with $F_S=0$, $F_R = 0$, $\rho_S=0$ and  $\rho_R =0$,  then the Brownian-risk cannot be eliminated completely. The risk-minimizing strategy is given by
\[
	\phi_t =(\nabla_{\!\!S} v(t,Y_{t-},C_{t-}))^\top + (a^{-1}_{SS}a_{SR})(t,Y_{t-},C_{t-})(\nabla_{\!\!R} v(t,Y_{t-},C_{t-}))^\top,
\]
where $a^{-1}_{SS}$ denotes the Moore-Penrose pseudo-inverse of $a_{SS}$ (see Albert \cite{Albert1972}).
Hence, if $a_{SR}=0$, i.e., if continuous martingale parts of  $R$ and $S$ are orthogonal, then
  $\phi_t =(\nabla_{\!\!S} v(t,Y_{t-},C_{t-}))^\top$, and the risk-minimizing strategy is a usual delta hedging strategy.

\emph{Delta-hedging of credit risk} strategy is \mn{$\phi(t) = \sum_{i \in  \mathcal{K}} \I_i(C_{t-}) \phi^i_t$, where $\{ \phi^i \}_{ i \in \mathcal{K} }$} are appropriately measurable solution to the  system of linear equations:
\[
\left(
  \begin{array}{cccc}
    \rho^{i,1,1}_S & \rho^{i,1,2}_S & \cdots &  \rho^{i,1,d}_S \\
    \rho^{i,2,1}_S & \rho^{i,2,2}_S & \cdots  & \rho^{i,2,d}_S \\
    \vdots & \vdots & \ddots &  \vdots \\
    \rho^{i,i-1,1}_S & \rho^{i,i-1,2}_S & \cdots  & \rho^{i,i-1,d}_S \\
    \rho^{i,i+1,1}_S & \rho^{i,i+1,2}_S & \cdots  & \rho^{i,i+1,d}_S \\
    \vdots & \vdots & \cdots &  \vdots \\
    \rho^{i,K,1}_S & \rho^{i,K,2}_S & \ddots &  \rho^{i,K,d}_S \\
  \end{array}
\right)
\phi^i_t =
	\left(
   \begin{array}{c}
     \Delta_{i,1}v  + \delta_{i,1}\\
     \Delta_{i,2}v  + \delta_{i,2}\\
     \vdots \\
     \Delta_{i,i-1}v+ \delta_{i,i-1}\\
     \Delta_{i,i+1}v+ \delta_{i,i+1}\\
     \vdots \\
     \Delta_{i,K}v+ \delta_{i,K}\\
   \end{array}
 \right),
\]
where
\[
\Delta_{i,j}v(t,Y_{t-})
:= v(t, Y_{t-} + \rho^{i,j}_Y(t,Y_{t-}),j) - v(t,
Y_{t-},i),
\]
and $\rho^{i,j,k}_S(t,Y_{t-}) $ denotes the $k$-th coordinate of $\rho^{i,j}_S(t,Y_{t-})$.
For convenience we drop the dependence on $t$ and $Y_{t-}$ in the above system of equations.
\end{rem}
\begin{rem}\label{rem:mom-est}

a) The  conditions ensuring that \eqref{eq:sup-S2} holds are  (LG) together with  (LGm), i.e.
%
  \begin{align} \label{LGm}
    \int_\mathbf{R^n}|F_Y(u,z,i,x)|^{2m} \nu(dx)
    \leq
    K(1 + |z|^{2m}),
  \end{align}
    for some $m \geq 1$,
  see \cite[Thm. 3.2]{jaknie2012}.
  Note that if there exists a function $\mathcal{K}$ with
  \[
   \int_{\mathbf{R^n}} (|\mathcal{K}(x)|^2 + |\mathcal{K}(x)|^{2m})
   \nu(dx) < \infty
  \]
  for some $m \geq 1$,
  such that
  \[
  	\frac{|F_Y(u,z,i,x)|}{1 + |z|} \leq \mathcal{K}(x) \quad \forall z \in \mathbf{R}^{d+p},
  \] then $F$ satisfies (LG) and (LGm).

b) Note that if (LG) and \eqref{LGm} hold, and  $v \in \mathcal{C}^{1,2}$ is such that
$|\nabla^2 v(u,z,i)| \leq K (1 + |z|^{2m-2})$ for each $i \in \mathcal{K}$,
 then   $v \in \mathcal{C}^{1,2} \cap \mathcal{I}$.
\end{rem}

\section{Feynman-Kac Theorem and PIDE}

In this section we formulate and prove  Feynman-Kac Theorem for components  $(Y^{t,y,c}_u,C^{t,y,c}_u)_{u \in [\![t,T ]\!]}$ of a weak solution to SDE \eqref{eq:SDE-gen}. So, we connect calculation of conditional expectation of some interesting and important functionals of $(Y^{t,y,c}_u,C^{t,y,c}_u)_{u \in [\![t,T ]\!]}$ with solution of corresponding PIDE with given boundary conditions. We consider functionals of the form \eqref{eq:D}
with $h$, $g$, $\delta^{i,j}$ satisfying \eqref{eq:D1}.
With such functional we connect a process $V$ given by \eqref{eq:proc-ex-div}, and for $V$ we prove a Feynman-Kac type theorem.
In general context our results concerns functionals of the type
\begin{align*}
& e^{-\int_t^T \!\mathfrak{r}( v, Y^{t,y,c}_{v}, C^{t,y,c}_{v}  )dv }h(Y^{t,y,c}_T, C^{t,y,c}_T) +
\!\!\int_{t}^T \!\! e^{-\int_t^u \!\mathfrak{r}( v, Y^{t,y,c}_{v}, C^{t,y,c}_{v}  )dv}g(u,Y^{t,y,c}_{u-}, C^{t,y,c}_{u-} ) du
\\
\nonumber
&
+
\!\!\sum_{i,j\neq i } \int_{t}^T \!\!e^{-\int_t^u \!\mathfrak{r}( v, Y^{t,y,c}_{v}, C^{t,y,c}_{v}  )dv} \delta^{i,j}(u,
Y^{t,y,c}_{u-}) d H^{i,j}_u.
\end{align*}
Note also that if there is no component $C$ we have classical Feynman-Kac Theorem for jump-diffusion with killing at rate $\mathfrak{r}$.
Thus, we solve a general problem which is of independent interest and has many applications, among others in finding risk-minimizing strategies, which  will be  presented in this section.

The process $A$ given by \eqref{eq:A} is equal to zero as we see in the proof of Theorem \ref{thm:risk-min-markov}.
By right continuity of paths of $(Y,C)$ we obtain
\[
L_t := \int_0^t \frac{1}{B_u}\!\left( \!( \partial_t + \mathcal{A}_u - \mathfrak{r}) v(u,Y_{u},C_{u}) + g(u,Y_{u},C_{u}) +\sum_{j \in \mathcal{K} \setminus C_{u} } \delta^{C_{u},j}(u,Y_{u}) \lambda^{C_{u},j}(u,Y_{u}) \!\right)\!du
= 0.
\]
The mean value theorem for right continuous functions yields
\begin{align*}
 ( \partial_t + \mathcal{A}_0 - \mathfrak{r}) v(0,y,c) + g(0,y,c)
 +\sum_{j \in \mathcal{K} \setminus c } \delta^{c,j}(0,y) \lambda^{c,j}(0,y) = 0.
\end{align*}
Using similar arguments we obtain
\begin{thm}\label{thm:FK-nec}
Assume EUWS and that \eqref{eq:sup-S2} and \eqref{eq:LG-D} hold for some $m \geq 1 $. Let $v$ be a function defined by
\eqref{eq:v-def}. If $v \in \mathcal{C}^{1,2} \cap \mathcal{I}$, then
$v$ solves the following PIDE 	
\begin{align}\label{eq:PIDE}
 ( \partial_t + \mathcal{A}_t - \mathfrak{r}) v(t,y,c) + g(t,y,c) +\sum_{j \in \mathcal{K} \setminus c } \delta^{c,j}(t,y) \lambda^{c,j}(t,y)
= 0
\end{align}
on  $ ]\!] 0,T[\![ \times  \mathbf{R}^{d+p} \times \mathcal{K} $ 
with boundary condition
\begin{equation} \label{eq:PIDE-boundary}
    v(T,y,c) = h(y,c).
\end{equation}
\end{thm}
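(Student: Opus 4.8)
The plan is to repeat, for an arbitrary starting time $t$, the computation already carried out at $t=0$ in the proof of Theorem \ref{thm:risk-min-markov}, and then to extract the pointwise PIDE at $(t,y,c)$ from the vanishing of a finite-variation process by a fundamental-theorem-of-calculus argument adapted to right-continuous integrands.

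First I would fix $(t,y,c) \in ]\!]0,T[\![ \times \mathbf{R}^{d+p} \times \mathcal{K}$ and work with the weak solution $(Y^{t,y,c}, C^{t,y,c})$ on $[\![t,T]\!]$, which exists and is unique by EUWS. By the Markov property \eqref{eq:row_Ma}, the discounted accumulated-payment process $M^{t,y,c}_u := \mathbb{E}^{t,y,c}(X^{t,y,c}|\mathcal{F}^{t,y,c}_u)$ built from $v$ exactly as $M$ in the proof of Theorem \ref{thm:risk-min-markov} (now with discount factor $B^{t,y,c}$ from \eqref{eq:def-B1}) is a $\mathbb{P}^{t,y,c}$-martingale and admits the representation $M^{t,y,c}_u = v(u,Y^{t,y,c}_u,C^{t,y,c}_u)/B^{t,y,c}_u + (\text{accumulated dividends})$. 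Applying the It\^o formula to $v(s,Y^{t,y,c}_s,C^{t,y,c}_s)/B^{t,y,c}_s$, using $v \in \mathcal{C}^{1,2}\cap\mathcal{I}$ and $M^v$ as in \eqref{eq:def-Mv}, decomposes $M^{t,y,c}$ into a local martingale plus a predictable absolutely continuous part $A^{t,y,c}_u = \int_t^u \frac{1}{B^{t,y,c}_s}\Phi(s,Y^{t,y,c}_{s-},C^{t,y,c}_{s-})\,ds$, where $\Phi := (\partial_t + \mathcal{A}_t - \mathfrak{r})v + g + \sum_{j\neq c}\delta^{c,j}\lambda^{c,j}$. Since $M^{t,y,c}$ is a martingale, uniqueness of the canonical decomposition of special semimartingales forces $A^{t,y,c}\equiv 0$, so $\int_t^u \frac{1}{B^{t,y,c}_s}\Phi(s,Y^{t,y,c}_{s-},C^{t,y,c}_{s-})\,ds = 0$ for all $u \in [\![t,T]\!]$, $\mathbb{P}^{t,y,c}$-a.s.

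Next, because the paths of $(Y^{t,y,c},C^{t,y,c})$ are c\`adl\`ag, the left limits $Y^{t,y,c}_{s-},C^{t,y,c}_{s-}$ may be replaced by $Y^{t,y,c}_{s},C^{t,y,c}_{s}$ inside the integral without changing its value (the jump times form a Lebesgue-null set), giving $\int_t^u \frac{1}{B^{t,y,c}_s}\Phi(s,Y^{t,y,c}_{s},C^{t,y,c}_{s})\,ds = 0$ for all $u$. The integrand is right-continuous at $s=t$ (by the regularity of $v \in \mathcal{C}^{1,2}$, the continuity of $\lambda^{i,j}$ and of the coefficients, and the right continuity of the paths, with $C^{t,y,c}$ remaining in $c$ for a positive time a.s.), so dividing by $u-t$ and letting $u \downarrow t$ yields the value of the integrand at $s=t$. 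Using $B^{t,y,c}_t = 1$, $Y^{t,y,c}_t = y$, $C^{t,y,c}_t = c$ $\mathbb{P}^{t,y,c}$-a.s., this value is the deterministic quantity $\Phi(t,y,c)$, whence $\Phi(t,y,c)=0$, which is precisely \eqref{eq:PIDE}. The boundary condition \eqref{eq:PIDE-boundary} is immediate from \eqref{eq:v-def}: at $t=T$ both integral terms vanish, $B^{T,y,c}_T = 1$ and $(Y^{T,y,c}_T,C^{T,y,c}_T)=(y,c)$, so $v(T,y,c)=h(y,c)$.

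The main obstacle is the justification of the decomposition and its integrability rather than the final limit: one must verify that It\^o's formula applies to $v/B^{t,y,c}$ and that all three martingale pieces in \eqref{eq:def-Mv} together with the compensator integral are genuinely integrable, so that the local martingale is a true martingale and the canonical-decomposition uniqueness argument is legitimate. This is where hypotheses \eqref{eq:sup-S2}, \eqref{eq:LG-D}, the membership $v \in \mathcal{C}^{1,2}\cap\mathcal{I}$, and assumption (LG) enter: they control the moments of $Y^{t,y,c}$ and the growth of $v$, $g$, $\delta^{i,j}$, and they guarantee that the $\nu$-integral appearing in $\mathcal{A}_t v$ (see \eqref{eq:gener-SDE}) converges precisely because $v \in \mathcal{I}$.
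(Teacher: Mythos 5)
Your proposal is correct and follows essentially the same route as the paper: the vanishing of the predictable finite-variation part $A$ (by uniqueness of the canonical decomposition of the special semimartingale equal to the martingale $M$), replacement of left limits by right limits in the Lebesgue integral, and a right-continuity/mean-value argument at the initial time to extract the pointwise PIDE, with the boundary condition read off from \eqref{eq:v-def}. The only difference is that you spell out the argument for a general starting point $(t,y,c)$, which the paper compresses into ``using similar arguments'' after treating $t=0$.
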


This gives us a way to finding the ex-dividend price function $v$ and to show connection with so called Feynman-Kac type theorems. Thus to obtain $v$ we solve PIDE \eqref{eq:PIDE} and if solution is a sufficiently regular function, then $v$ has the stochastic representation
\[
v(t,Y_t,C_t) =
\!B_t\mathbb{E} \!\!\left( \frac{\!h(Y_T, C_T)}{B_T} +
\!\!\int_{t}^T \!\!\frac{g(u,Y_{u-}, C_{u-} )}{B_u} du +
\!\!\sum_{i,j\neq i } \int_{t}^T \!\!\!\frac{\delta^{i,j}(u,
Y_{u-}) }{B_u}d H^{i,j}_u \bigg| \mathcal{F}_t
\!\!\right)\!.
\]
So $v$ is indeed the ex-dividend price function.
The next theorem gives precise formulation of this idea.
\begin{thm}\label{thm:Fey-Kac}
Let  $(Y^{t,y,c}_u,C^{t,y,c}_u)_{u \in [\![t,T ]\!]}$ be the components of a weak solution to SDE \eqref{eq:SDE-gen} started at $t$ from $(y,c) \in \mathbf{R}^{d+p} \times \mathcal{K}$.
 Assume that a function $v \in \mathcal{C}^{1,2} \cap \mathcal{I}$ solves the PIDE \eqref{eq:PIDE} with the boundary
condition \eqref{eq:PIDE-boundary} and
the local martingale $\widehat{M}$ on $[\![t,T]\!]$ with the dynamic
    \begin{align}
\nonumber d \widehat{M}_u &= \sum_{i} H^i_{u- } \nabla v(u,
Y^{t,y,c}_{u-},i) \sigma_Y(u,Y^{t,y,c}_{u-}, i) d W^{t,y,c}_u
\\
\label{eq:mart-dM} & + \sum_{i} H^{i}_{u-} \int_{\mathbf{R}^n}
(v(u , Y^{t,y,c}_{u-} +  F_Y(u,Y^{t,y,c}_{u-}, i, x ),i) - v(u , Y^{t,y,c}_{u-},i))
\widetilde{\Pi }^{t,y,c}(dx,du)
\\
\nonumber & + \sum_{i,j \neq i } (v(u , Y^{t,y,c}_{u-} + {\rho^{i,j}_Y(u,
Y^{t,y,c}_{u-})},j) - v(u , Y^{t,y,c}_{u-},i) + \delta^{i,j}(u,Y^{t,y,c}_{u-})) d(M^{i,j})^{t,y,c}_u
\\
\nonumber \widehat{M}_t &= v(t,y , c )
\end{align}
is  a martingale on $[\![t,T]\!]$. Then
\begin{align}\label{eq:Fey-Kac}
v(t,y,c)
= \!\mathbb{E}^{t,y,c} \!\bigg(
& \frac{h(Y^{t,y,c}_T, C^{t,y,c}_T)}{B^{t,y,c}_{T}   } +
\!\!\int_{t}^T \!\! \frac{g(u,Y^{t,y,c}_{u-}, C^{t,y,c}_{u-} ) }{B^{t,y,c}_{u} }du
\\
\nonumber
&
+
\!\!\sum_{i,j\neq i } \int_{t}^T \!\!\! \frac{\delta^{i,j}(u,
Y^{t,y,c}_{u-}) }{B^{t,y,c}_{u} }d H^{i,j}_u .
\end{align}
\end{thm}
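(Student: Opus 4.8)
The plan is to mimic the verification argument behind classical Feynman--Kac theorems: build a discounted gains-plus-dividends process out of $v$ and the dividend coefficients, show that its local-martingale part is exactly $B^{-1}$ times $\widehat M$, and then read off the representation by taking expectations at the terminal time. Working on $(\Omega^{t,y,c},\mathbb F^{t,y,c},\mathbb P^{t,y,c})$ and suppressing the superscripts $t,y,c$, I would introduce
\[
\Xi_u := \frac{v(u,Y_u,C_u)}{B_u} + \int_t^u \frac{g(s,Y_{s-},C_{s-})}{B_s}\,ds + \sum_{i,j\neq i}\int_t^u \frac{\delta^{i,j}(s,Y_{s-})}{B_s}\,dH^{i,j}_s,
\quad u\in[\![t,T]\!].
\]
Since $v\in\mathcal C^{1,2}\cap\mathcal I$, the It\^o formula applied to $(Y,C)$ solving \eqref{eq:SDE-gen} --- this is precisely the computation already carried out in the proof of Theorem \ref{thm:risk-min-markov} --- gives
\[
d v(u,Y_u,C_u) = (\partial_t+\mathcal A_u)v(u,Y_{u-},C_{u-})\,du + dM^v_u,
\]
with $M^v$ the local-martingale part given by \eqref{eq:def-Mv} for the solution started at $t$ from $(y,c)$; here membership in $\mathcal I$ guarantees that the compensated-jump integral in $\mathcal A_u v$ is well defined.

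Next I would substitute the PIDE \eqref{eq:PIDE}, which reads $(\partial_t+\mathcal A_u)v = \mathfrak r\,v - g - \sum_{j\neq C}\delta^{C,j}\lambda^{C,j}$, and apply the product rule to $v(u,Y_u,C_u)/B_u$, using that $B$ is continuous of finite variation with $d(1/B_u) = -\mathfrak r(u,R_{u-},C_{u-})/B_u\,du$ by \eqref{eq:def-B1}. The $\mathfrak r\,v$ terms cancel, leaving
\[
d\frac{v(u,Y_u,C_u)}{B_u} = \frac{1}{B_u}\,dM^v_u - \frac{1}{B_u}\Big(g(u,Y_{u-},C_{u-}) + \sum_{j\neq C_{u-}}\delta^{C_{u-},j}(u,Y_{u-})\lambda^{C_{u-},j}(u,Y_{u-})\Big)du.
\]
Writing $dH^{i,j}_u = dM^{i,j}_u + H^i_{u-}\lambda^{i,j}(u,Y_{u-})\,du$ to split the dividend integral in $\Xi$ into its martingale and compensator parts, and noting $\sum_{j\neq C_{u-}}\delta^{C_{u-},j}\lambda^{C_{u-},j} = \sum_{i,j\neq i}H^i_{u-}\delta^{i,j}\lambda^{i,j}$, the $du$-terms cancel exactly and I am left with
\[
d\Xi_u = \frac{1}{B_u}\Big(dM^v_u + \sum_{i,j\neq i}\delta^{i,j}(u,Y_{u-})\,dM^{i,j}_u\Big) = \frac{1}{B_u}\,d\widehat M_u,
\]
the last equality being the definition \eqref{eq:mart-dM} of $\widehat M$. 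Hence $\Xi_u = v(t,y,c) + \int_t^u B_s^{-1}\,d\widehat M_s$, since $B_t=1$ and $\Xi_t = v(t,y,c)$.

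Finally, evaluating at $u=T$ and using the boundary condition \eqref{eq:PIDE-boundary}, $v(T,Y_T,C_T)=h(Y_T,C_T)$, identifies $\Xi_T$ with the functional on the right of \eqref{eq:Fey-Kac}; taking $\mathbb E^{t,y,c}$ then yields $v(t,y,c)=\mathbb E^{t,y,c}(\Xi_T)$ provided $\mathbb E^{t,y,c}\big(\int_t^T B_s^{-1}\,d\widehat M_s\big)=0$. This last point is the only real obstacle, and it is exactly what the hypothesis buys us: $\widehat M$ is assumed to be a genuine martingale on $[\![t,T]\!]$ --- a priori It\^o only produces a local martingale, because $v$ need not have compact support --- while $B_s^{-1}$ is bounded and predictable, since $\mathfrak r$ is bounded and hence $B$ is bounded above and below away from zero on $[\![t,T]\!]$. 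Therefore $\int_t^\cdot B_s^{-1}\,d\widehat M_s$ is again a true martingale vanishing at $t$, its terminal expectation is zero, and \eqref{eq:Fey-Kac} follows. The delicate step, and the reason the martingale property of $\widehat M$ is imposed as an assumption rather than derived, is precisely this passage from the local-martingale identity to the equality of expectations.
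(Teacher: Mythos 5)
Your proposal is correct and follows essentially the same route as the paper: It\^o's formula for $v(u,Y_u,C_u)/B_u$, substitution of the PIDE to cancel the drift, the split $dH^{i,j}_u = d(M^{i,j})^{t,y,c}_u + H^i_{u-}\lambda^{i,j}(u,Y_{u-})\,du$ to absorb the $\delta^{i,j}$ compensator terms, identification of the remaining stochastic integral as $\int B_u^{-1}\,d\widehat M_u$, and the assumed martingale property of $\widehat M$ to kill its expectation. The only cosmetic differences are that you package the computation into the single process $\Xi$ and evaluate $v$ at $(T,Y_T,C_T)$ directly, whereas the paper works with $v(T,Y_{T-},C_{T-})$ and then notes the absence of jumps at time $T$.
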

\begin{proof}
 Let $v$ be a function  satisfying PIDE
\eqref{eq:PIDE} with the boundary condition
\eqref{eq:PIDE-boundary}. For $v \in
\mathcal{C}^{1,2} \cap \mathcal{I} $
using the   It\^o lemma (in a similar way as in \cite[Thm. 2.4]{jaknie2012}), we
have
\begin{align}\label{eq:vTminus}
    \frac{v(T,Y^{t,y,c}_{T-},C^{t,y,c}_{T-})}{B^{t,y,c}_{T}} &=  v(t,y,c)
    +
\int_{t}^T
\!\!\!
\frac{1}{B^{t,y,c}_{u}}
\left( \partial_t+\mathcal{ A}_u -\mathfrak{r}\right)v(u,Y^{t,y,c}_{u-} ,C^{t,y,c}_{u-})  du
 + \int_{t}^T 
\frac{1}{B^{t,y,c}_{u}}
 dM^v_u,
\end{align}
where $M^v$ is given,  for $w \geq t$, by
\begin{align*}
& M^v_w := \int_t^w \sum_{i} H^i_{u- } \nabla
v(u, Y^{t,y,c}_{u-},i)\sigma_Y(u,Y^{t,y,c}_{u-}, i) d W^{t,y,c}_u
\\
\nonumber
& + \int_t^w \sum_{i}  H^i_{u- }
 \int_{\mathbf{R}^n} (v(u , Y^{t,y,c}_{u-} +  F_Y(u,Y^{t,y,c}_{u-}, i, x ),i) - v(u , Y^{t,y,c}_{u-},i)) \widetilde{\Pi}^{t,y,c} (du,dx)
\\
\nonumber & +  \int_t^w \sum_{i,j \neq i } (v(u , Y^{t,y,c}_{u-} +
{\rho^{i,j}_Y}(u,Y^{t,y,c}_{u-}),j) - v(u , Y^{t,y,c}_{u-},i) ) d (M^{i,j})^{t,y,c}_u.
\end{align*}
We have  $v(T,Y^{t,y,c}_{T},C^{t,y,c}_{T}) =  v(T,Y^{t,y,c}_{T-},C^{t,y,c}_{T-})$, since  $\Delta N^{i,j}_T =0$ for every $i,j \in \mathcal{K}$, $i \neq j$  and $\Pi(A \times \set{T}) = 0$ for every $A \in \mathcal{B}(\mathbf{R}^n)$.
%
Hence, by  \eqref{eq:vTminus} and  assumption that $v$ solves PIDE \eqref{eq:PIDE} with boundary condition \eqref{eq:PIDE-boundary}, we get
\begin{align*}
    \frac{h(Y^{t,y,c}_T,C^{t,y,c}_T)}{B^{t,y,c}_{T}}&+
\int_{t}^T
\!\!\!
\frac{1}{B^{t,y,c}_{u}}
\bigg(
g(u,Y^{t,y,c}_u,C^{t,y,c}_{u})
+   \sum_{\substack{i,j \in \mathcal{K} \\ j \neq i} }
H^i_{u- }
 \delta^{i,j}(u, Y^{t,y,c}_{u-}) \lambda^{i ,j} (u,Y^{t,y,c}_{u-})\bigg)
du \\
&
=
 v(t,Y^{t,y,c}_t,C^{t,y,c}_t)+  \int_{t}^T \!\!\!
\frac{1}{B^{t,y,c}_{u}}
 dM^v_u.
\end{align*}
Thus, using \eqref{eq:M-ijtyc}
, after  rearranging,
we obtain that this
equality takes the form
\begin{align}
\label{eq:v-1} &&     \frac{h(Y^{t,y,c}_T,C^{t,y,c}_T)}{B^{t,y,c}_{T}}
  + \int_{t}^T \!\! \frac{ g(u,Y^{t,y,c}_{u},C^{t,y,c}_{u}) }{B^{t,y,c}_{u}} du +   \sum_{i,j \neq i}
\int_t^T \frac{ \delta^{i,j}(u, Y^{t,y,c}_{u-}) }{B^{t,y,c}_{u}} d (H^{i,j})^{t,y,c}_u
\\
\nonumber
 &&=
 v(t,y,c) +  \int_{t}^T \!\!\!
\frac{1}{B^{t,y,c}_{u}}
 \left(dM^v_u
+   \sum_{i,j \neq i} \delta^{i,j}(u, Y^{t,y,c}_{u-})
  d (M^{i,j})^{t,y,c}_u \right).
\end{align}
By assumption,
\[
\widehat{M}_w = M^v_w +   \sum_{i,j \neq i}
\int_t^w
\delta^{i,j}(u, Y^{t,y,c}_{u-})
 d (M^{i,j})^{t,y,c}_u
 \]
  is  a martingale, so taking
the expectation in \eqref{eq:v-1} we obtain
\begin{align*}
\mathbb{E}^{t,y,c}  \left(
\frac{h(Y^{t,y,c}_T,C^{t,y,c}_T)}{B^{t,y,c}_{T}}
  + \int_{t}^T
    \!\!
    \frac{g(u,Y^{t,y,c}_{u},C^{t,y,c}_{u})}{B^{t,y,c}_{u}} du +   \sum_{i,j \neq i}
\int_t^T   \!\! \frac{\delta^{i,j}(u, Y^{t,y,c}_{u-}) }{B^{t,y,c}_{u} } d (H^{i,j})^{t,y,c}_u
\right)
\!=
 v(t,y,c).
\end{align*}
  This finishes the proof.
\end{proof}
In Theorem \ref{thm:Fey-Kac} above we assume that corresponding PIDE has classical solutions. In a very special case of our setting we can find sufficient conditions for this assumption to hold, e.g. see Gichman Skorochod \cite[Theorem II.2.6 and Corolary II.2.2]{gichskorIII} (see also our example \ref{sec:gen-exp-reg}).
The question about the form of sufficient conditions for existence of classical solutions to our PIDE is very interesting but it requires deeper study and therefore it is postponed to the following paper.

 Assumption that the local martingale \eqref{eq:mart-dM} is a martingale is the weakest that one can formulate to obtain stochastic representation of classical solution of PIDE.
 In the theorem below we formulate some sufficient conditions for the corresponding martingale property.
Let $\mathcal{C}_m$ be a class of functions having polynomial growth of order $m$ defined  by
\begin{align*}
\mathcal{C}_m=\left\{  v: [\![0,T]\!] \right. & \times \mathbf{R}^{d+p} \times
\mathcal{K} \rightarrow \mathbf{R}: \ v(\cdot, \cdot, i) \text{ is continuous and }  \\
 & \left.|v(u,z,i)| \leq K (1 + |z|^m) \quad  \forall i \in \mathcal{K} \right\},
\end{align*}
and $\mathcal{C}^{1,2,m}$  be  the  space of all measurable functions $v
: [\![0,T]\!] \times \mathbf{R}^d \times \mathcal{K} \rightarrow
\mathbf{R} $ such that $ v(\cdot, \cdot, i ) \in
C^{1,2}([\![0,T]\!] \times \mathbf{R}^d  )$ for every $  i \in
\mathcal{K} $ and with the derivative with respect to $z$ having a
polynomial growth of degree $m$, i.e., $ |\nabla v(u,z,i)| \leq K
(1 + |z|^m)$.
\begin{thm}\label{thm:Fey-Kac-nowy}
	Let  $(Y^{t,y,c}_u,C^{t,y,c}_u)_{u \in [\![t,T ]\!]}$ be the components of a solution to SDE \eqref{eq:SDE-gen} started at $t$ from $(y,c) \in \mathbf{R}^{d+p} \times \mathcal{K}$.
    Assume (LG), (LGm) for some natural number
    $m \geq 1$, and
    \begin{align}\label{eq:LG-delta-m}
    \left|\delta^{i,j}(u,z) \right|^2  \leq K(1 + |z|^{2m}),
    \quad
    \forall i,j \in \mathcal{K}, i \neq j.
    \end{align}
    If $v \in \mathcal{C}^{1,2,m-1} \cap \mathcal{C}_m$ solves PIDE \eqref{eq:PIDE} with boundary condition \eqref{eq:PIDE-boundary}, then \eqref{eq:Fey-Kac} holds.
\end{thm}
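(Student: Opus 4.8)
The plan is to reduce the statement to Theorem \ref{thm:Fey-Kac}, whose only non-structural hypothesis is that the local martingale $\widehat{M}$ of \eqref{eq:mart-dM} be a genuine martingale on $[\![t,T]\!]$. Since $v$ solves PIDE \eqref{eq:PIDE}, the integral term in $\mathcal{A}_t v$ converges, so $v\in\mathcal{C}^{1,2}\cap\mathcal{I}$ and Theorem \ref{thm:Fey-Kac} is applicable once the martingale property is secured. The process $\widehat{M}$ is a stochastic integral against the martingales $W^{t,y,c}$, $\widetilde{\Pi}^{t,y,c}$ and $(M^{i,j})^{t,y,c}$ with predictable, locally bounded integrands (from the continuity of $v$ and $\nabla v$), hence a local martingale; the task is therefore only to upgrade it to a true martingale. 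I would do so by proving $\widehat{M}\in\mathcal{H}^2$, i.e. $\mathbb{E}^{t,y,c}(\langle\widehat{M}\rangle_T)<\infty$, which for a local martingale suffices.

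Next I would write $\langle\widehat{M}\rangle$ as the sum of the three contributions associated with $W^{t,y,c}$, $\widetilde{\Pi}^{t,y,c}$ and the $(M^{i,j})^{t,y,c}$, whose integrands, evaluated at $z=Y^{t,y,c}_{u-}$, are
\[
\sum_i H^i_{u-}\norm{\nabla v(u,z,i)\sigma_Y(u,z,i)}^2,\qquad
\sum_i H^i_{u-}\!\int_{\mathbf{R}^n}\!\abs{v(u,z+F_Y,i)-v(u,z,i)}^2\nu(dx),
\]
\[
\sum_{i,j\neq i} H^i_{u-}\abs{v(u,z+\rho^{i,j}_Y,j)-v(u,z,i)+\delta^{i,j}(u,z)}^2\lambda^{i,j}(u,z).
\]
The aim is to dominate each integrand by $K(1+\abs{Y^{t,y,c}_{u-}}^{2m})$. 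For the Brownian term this is immediate from $\abs{\nabla v}\le K(1+\abs z^{m-1})$ (as $v\in\mathcal{C}^{1,2,m-1}$) and $\abs{\sigma_Y}^2\le K(1+\abs z^2)$ (from (LG)). For the transition term I would use $v\in\mathcal{C}_m$, the bound $\abs{\rho^{i,j}_Y}^2\le K(1+\abs z^2)$ (from (LG)), boundedness of the finitely many $\lambda^{i,j}$, and \eqref{eq:LG-delta-m} for the $\delta^{i,j}$ contribution, which together give the bound $K(1+\abs z^{2m})$.

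The main obstacle is the jump-measure term, where the crude estimate $\abs{v(z+F_Y)-v(z)}\le\abs{v(z+F_Y)}+\abs{v(z)}$ is useless because $\nu$ may carry infinite total mass near the origin. The remedy is to keep a factor $\abs{F_Y}$ by writing the increment through the fundamental theorem of calculus, $v(u,z+F_Y,i)-v(u,z,i)=\int_0^1\nabla v(u,z+\theta F_Y,i)F_Y\,d\theta$, which with $\abs{\nabla v}\le K(1+\abs z^{m-1})$ yields
\[
\abs{v(u,z+F_Y,i)-v(u,z,i)}^2\le K\abs{F_Y}^2\bigl(1+\abs z^{2m-2}+\abs{F_Y}^{2m-2}\bigr).
\]
Integrating against $\nu$ and applying $\int_{\mathbf{R}^n}\abs{F_Y}^2\nu(dx)\le K(1+\abs z^2)$ (from (LG)) to the first two summands and $\int_{\mathbf{R}^n}\abs{F_Y}^{2m}\nu(dx)\le K(1+\abs z^{2m})$ (from (LGm)) to the last bounds this integrand by $K(1+\abs z^{2m})$ as well; this is exactly where the pairing of the gradient-growth order $m-1$ with the $2m$-integrability (LGm) of $F_Y$ is essential.

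Collecting the three estimates gives $\langle\widehat{M}\rangle_T\le KT\bigl(1+\sup_{u\in[\![t,T]\!]}\abs{Y^{t,y,c}_u}^{2m}\bigr)$. Taking $\mathbb{E}^{t,y,c}$ and invoking the moment bound (the analogue of \eqref{eq:sup-S2} for the solution started at $t$ from $(y,c)$, which holds under (LG) and (LGm) by \cite[Thm. 3.2]{jaknie2012}) yields $\mathbb{E}^{t,y,c}(\langle\widehat{M}\rangle_T)<\infty$. Hence $\widehat{M}$ is a square-integrable martingale, the hypothesis of Theorem \ref{thm:Fey-Kac} is met, and \eqref{eq:Fey-Kac} follows.
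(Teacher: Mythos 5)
Your proposal is correct and follows essentially the same route as the paper's proof: reduce to Theorem \ref{thm:Fey-Kac}, show each of the three stochastic integrals in \eqref{eq:mart-dM} lies in $\mathcal{H}^2$ by dominating its predictable quadratic variation density by $K(1+|z|^{2m})$, and conclude via the moment estimate; in particular your key step for the jump-measure term (writing the increment through $\nabla v$ to retain a factor $|F_Y|$ and then pairing the order-$(m-1)$ gradient growth with (LG) and (LGm)) is exactly the paper's mean-value-theorem argument. The only cosmetic difference is in the $M^{i,j}$ term, where the paper splits the increment as $\bigl(v(z+\rho^{i,j}_Y,j)-v(z,j)\bigr)+\bigl(v(z,j)-v(z,i)\bigr)+\delta^{i,j}$ while you bound $v(z+\rho^{i,j}_Y,j)$ directly via $v\in\mathcal{C}_m$ and (LG); both yield the same $K(1+|z|^{2m})$ bound.
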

\begin{proof}

Fix  $t,y,c$.
 From Theorem \ref{thm:Fey-Kac} follows that it is sufficient to
show that each stochastic integral in formula \eqref{eq:mart-dM}, defining the local martingale $\widehat{M}$, belongs to $\mathcal{H}^2$ - the space of square integrable martingales.
We start from the integral with
respect to a Brownian motion, i.e.,
\[
	\widehat{M}^c_w :=\int_t^w \sum_{i} H^i_{u- } \nabla
v(u, Y^{t,y,c}_{u-},i)\sigma_Y(u,Y^{t,y,c}_{u-}, i) d W_u.
\]
Since $v \in C^{1,2,m-1} $ we infer, using (LG), that
\begin{align*}
\left| \nabla  v(u, z ,i) \sigma_Y (u, z , i) \sigma_Y(u, z , i)^\top
\left( \nabla v(u, z ,i) \right)^\top \right|^2
& \leq |\nabla v(u, z ,i)|^2 \norm{\sigma_Y(u, z , i) \sigma_Y(u,z,i)^\top }^2
\\
& \leq L_1 (1 + |z|^{2m-2})(1 + |z|^{2})
\\
& \leq L_2 (1 + |z|^{2m}) .
\end{align*}
This and \eqref{LGm} imply, by Remark \ref{rem:mom-est}, that
\begin{align*}
\mathbb{E}^{t,y,c}
\int_t^T  \bigg|
 \nabla  v(u, Y^{t,y,c}_{u-},i)
a_{YY}(u,Y^{t,y,c}_{t-}, i )& \left( \nabla
v(u, Y^{t,y,c}_{u-},i) \right)^\top \bigg|^2 du
\\
&
\leq
L_2 T
\mathbb{E}^{t,y,c}
\left(1+ \sup_{u \in [\![t,T]\!]} |Y^{t,y,c}_u|^{2m} \right) < \infty . 
\end{align*}
Hence, by the Doob $L^2$-inequality for local martingales applied to $M^c$ we obtain that $M^c \in \mathcal{H}^2$.
Now, we consider the second integral in \eqref{eq:mart-dM}, i.e.,
 \[
\widehat{M}^{d1}_w:=
\int_t^w \sum_{i}  H^i_{u- }
 \int_{\mathbf{R}^n} (v(u , Y^{t,y,c}_{u-} +  F_Y(u,Y^{t,y,c}_{u-}, i, x ),i) - v(u , Y^{t,y,c}_{u-},i)) \widetilde{\Pi} (du,dx).
\]
Obviously  $\widehat{M}^{d1} \in \mathcal{M}_{loc}$.
Let $[\widehat{M}^{d1}]$ be a process of quadratic variation of $\widehat{M}^{d1}$. We  have
\[
\left[\widehat{M}^{d1} \right]_T=
\int_t^T \sum_{i}  H^i_{u- }
 \int_{\mathbf{R}^n} |v(u , Y^{t,y,c}_{u-} +  F_Y(u,Y^{t,y,c}_{u-}, i, x ),i) - v(u , Y^{t,y,c}_{u-},i)|^2 \Pi (du,dx)
\]
and, as before, we use the Doob inequality to show that $\widehat{M}^{d1} \in \mathcal{H}^2$. So, it is suffices to show that $\mathbb{E}^{t,y,c} [\widehat{M}^{d1}]_T < \infty$.
Since
$v \in C^{1,2,m-1}$, we have
\begin{align*}
|v(u,z +  F_Y(u,z,i,x),i) - v(u,z,i))|
&\leq | \nabla v(u,z + \theta^{\!\top} F_Y(u,z,i,x),i )| |F_Y(u,z,i,x)|
\\
& \leq L_3 | 1 + | z + \theta^{\!\top} F_Y(u,z,i,x) |^{m-1} | |F_Y(u,z,i,x)|
\\
& \leq  L_3 | 1 + |z|^{m-1}||F_Y(u,z,i,x)| + L|F_Y(u,z,i,x)|^{m}.
\end{align*}
Hence, by (LG) and (LGm)
\begin{align*}
&\int_{\mathbf{R}^n}
|v(u,z +  F_Y(u,z,i,x),i) - v(u,z,i))|^2\nu (dx)
\\
&
\leq L_4 ( 1 + |z|^{2m-2})\int_{\mathbf{R}^n}|F_Y(u,z,i,x)|^2\nu (dx)
+ L_4 \int_{\mathbf{R}^n}|F_Y(u,z,i,x)|^{2m}\nu (dx)
\\
&\leq L_5 K ( 1 + |z|^{2m}).
\end{align*}
Using this estimation we obtain
\begin{align*}
& \mathbb{E}^{t,y,c}\int_t^T
 \int_{\mathbf{R}^n} |v(u , Y^{t,y,c}_{u-} +  F(u,Y^{t,y,c}_{u-}, i, x ),i) - v(u , Y^{t,y,c}_{u-},i)|^2 \Pi (du,dx)
\\
 &= \mathbb{E}^{t,y,c}\int_t^T
 \int_{\mathbf{R}^n} |v(u , Y^{t,y,c}_{u-} +  F(u,Y^{t,y,c}_{u-}, i, x ),i) - v(u , Y^{t,y,c}_{u-},i)|^2 \nu (dx) du
 \\
&\leq
L_6 \mathbb{E}^{t,y,c}
\left( 1 + \sup_{u \in [\![t,T]\!]}| Y^{t,y,c}_u|^{2m}\right)
<\infty.
\end{align*}
Therefore $\widehat{M}^{d1} \in \mathcal{H}^2$. Now we consider the third part, i.e.,
\begin{align*}
\widehat{M}^{d2}_w:= & \int_t^w \sum_{i,j \neq i } (v(u , Y^{t,y,c}_{u-} + {\rho^{i,j}_Y(u,
Y^{t,y,c}_{u-})},j) - v(u , Y^{t,y,c}_{u-},i) + \delta^{i,j}(u,Y^{t,y,c}_{u-})) d
M^{i,j}_u.
\end{align*}
We obviously have
\begin{align*}
&\left|v(u,z + {\rho^{i,j}_Y(u, z )},j) - v(u,z,i) +
\delta^{i,j}(u,z)\right|^2
\\
&
\leq L_7  \left(
\left|v(u,z + {\rho^{i,j}_Y(u, z )},j) - v(u,z, j)\right|^2
	+
\left|v(u,z, j) - v(u,z, i)\right|^2
+
\left|\delta^{i,j}(u,z)\right|^2 \right).
\end{align*}
The similar arguments as above, assumption (LG) and $v \in \mathcal{C}^{1,2,m-1}$ yield
\begin{align*}
\left|v(u,z + {\rho^{i,j}_Y(u, z )},j) - v(u,z, j)\right|^2
\leq L_8 (1 + |z|^{2m}).
\end{align*}
This,  \eqref{eq:LG-delta-m} and the fact that $v \in \mathcal{C}^{1,2,m-1} \cap \mathcal{C}_m$ give
\begin{align*}
\left|v(u,z + {\rho^{i,j}_Y(u, z )},j) - v(u,z, i) +
\delta^{i,j}(u,z)\right|^2
\leq L_9 (1 + |z|^{2m}).
\end{align*}
Using boundedness of
$\lambda^{i,j}$ and repeating arguments we get
\[
\mathbb{E}^{t,y,c} \int_t^T \!\!\! \left|v(u , Y^{t,y,c}_{u-}\!\!+\!
{\rho^{i,j}_Y(u, Y^{t,y,c}_{u-})},j) \!-\! v(u , Y^{t,y,c}_{u-},i) \!+\!
\delta^{i,j}(u,Y^{t,y,c}_{u-})\right|^2 H^{i}_{u-}
\lambda^{i,j}(u,Y^{t,y,c}_{u-})du < \infty,
\]
so  $\widehat{M}^{d2}$ is in $\mathcal{H}^2$, see \cite[Thm. I.4.40]{js1987}.
This finishes the proof.
\end{proof}
\section{Further generalizations and examples}
\subsection{Further generalizations}
Results of the previous sections can be extended in several directions.
One possibility is to assume that $\widetilde{\Pi}(dx,du)$ is a  compensated integer-valued random measure on
$\mathbf{R}^n \times [\![0,T]\!]$ with the intensity measure given by
\[
Q(dx,du)=K(u,Y_{u-},C_{u-},x ) \bar\nu(dx) du
\]
 where $\nu$ is a L\'{e}vy measure, 
and $K$ is nonnegative real-valued measurable function.
For such compensator we impose the following growth type conditions:

\textbf{Assumption GC-$K$.}
Functions $\sigma$, $F$, $\rho^{i,j}$  satisfy the linear growth condition
  \begin{align*}\tag{LG-$K$}
  	|\sigma(t,y,c)|^2
  +
  \int_\mathbf{R^n}|F_Y(u,z,i,x)|^2 K(u,z,i,x)\bar \nu(dx)
  +
  \sum_{j:j \neq i}
  |\rho^{i,j}_Y(u,z)|^2
  \leq L(1 + |z|^2),
  \end{align*}
  and the polynomial growth condition
  \begin{align*}\tag{LGm-$K$}
    \int_\mathbf{R^n}|F_Y(u,z,i,x)|^{2m} K(u,z,i,x)\bar \nu(dx)
    \leq
    L(1 + |z|^{2m}),
  \end{align*}
  for some constant $L > 0$ and $m \geq 1$.

Under these conditions we have moment estimate \eqref{eq:sup-S2} which can be proved analogously as in \cite{jaknie2012} (c.f., Remark \ref{rem:mom-est}a). Thus
assuming EUWS we can easily generalize results of previous section.
For example, family of generators $\mathcal{A}_t$ should be replaced by $\mathcal{A}^K_t$ defined by
\eqref{eq:gener-SDE}  with $K(u,z,i,x)\bar \nu(dx)$ instead of $\nu(dx)$.
 The  corresponding results on risk-minimization and its connection with PIDE's can be proved analogously. The only difference is in formulations in which we have to  replace $\nu(dx)$ with $K(u,z,i,x)\bar \nu(dx)$ and (LG), (LGm) with (LG-$K$), (LGm-$K$) respectively. There is a nontrivial question about conditions which ensure assumption EUWS, i.e., about the existence of unique weak solutions to SDE in such case. This question is very interesting, but we postpone it to another paper.

\subsection{Examples}
The setup considered in this paper is very general, so it contains many models well known in finance such as local volatility models and regime switching L\'{e}vy models. In this subsection we present examples which demonstrate how our theory  generalize known results in existing models.
\subsubsection{ General exponential L\'{e}vy model with stochastic volatility }
In many practical applications we are interested in models with nonnegative price processes. In majority of financial applications this requirement is achieved by imposing a linear structure on the coefficients in dynamic of $S$ (i.e. the first $d$ components of $Y$ in our model),  by taking \begin{align*}
d S^k_t &= S^k_{t-} \bigg( \mathfrak{r}(t,R_{t-},C_{t-}) du + \Sigma_k(t,R_{t-},C_{t-})\!^\top\! d W_t + \!\!\int_{\mathbf{R}^n}\!\! (e^{\Sigma_k(t,R_{t-},C_{t-})\!^\top\! x} -1) \widetilde{\Pi}( dx, dt) \\
	&\quad\quad\quad   + \!\!\!\!\sum_{i,j \in
    \mathcal{K}: j \neq i } \!\!\!\!(e^{P^{i,j}_k(t,R_{t-})} - 1 )  d M^{i,j}_t \bigg),
\\
d R_t &= \mu_R(t,R_{t-},C_{t-}) du + \sigma_R(t,R_{t-},C_{t-})\!^\top\! d W_t,
\end{align*}
where $W, \widetilde{\Pi}, M^{i,j}$ are as before with $\lambda^{i,j}$ being deterministic function. In this model we assume, for simplicity, that the coefficient of diffusion of $R$ is modulated by $C$ - the process  given in advance. We assume EWUS. This is provided if we  assume that $\Sigma_k : [\![0,T]\!] \times \mathbf{R}^p \times \mathcal{K} \rightarrow \mathbf{R}^n$  and  $P^{i,j}_k : [\![0,T]\!] \times \mathbf{R}^p \rightarrow \mathbf{R}$ are continuous
and bounded functions for all $k \in \set{1, \ldots d}$,  $\mu_R$, $\sigma_R$ satisfy linear growth and local Lipschitz conditions and
\[
\Sigma_k(t,r,c) + \Sigma_l(t,r,c) \in B
\quad
\forall k,l \in \set{1,\ldots, n},
 \]
 for all
\[
	B=\set{ v \in \mathbf{R}^{n}:     \int_{|x|>1}e^{ v^\top \!x} \nu(dx) <
    \infty
}.
\]

Let us denote by $\text{Exp}(a)$ a component-wise exponential function of vector $a$, by $\I_d$ a d-dimensional column vector of $1$.
Assume that the dividend process $D$ is given by \eqref{eq:D}
and the value function is appropriately smooth. Using Theorem \ref{thm:risk-min-markov} we obtain  that  the component $\phi$  of a  risk-minimizing hedging strategy has, on the set $\set{C_{t-} = i}$, the form
\begin{align*}
\phi_t &= \diag(S_{t-})^{-1} (\widehat{G}_t(R_{t-}, i) )^{-1} \\
& \bigg(\Sigma^\top\Sigma( t,R_{t-},i)\diag(S_{t-})  \nabla\!_{S} v( t,Y_{t-},i ) + \Sigma^\top\sigma_R( t,Y_{t-},i)  \nabla\!_{R} v( t,Y_{t-},i )
\\
&+ \int_{\mathbf{R}^n}
\left(\text{Exp}(\Sigma( t,R_{t-},i)\!^\top\!x)-\I\right)
\big(v( t,S_{t-} \!\circ\!\text{Exp}(\Sigma( t,R_{t-},i)\!^\top\!x),R_{t-}, i) - v( t,S_{t-},R_{t-}, i)\big)\nu(dx) \\
&+
\!\!\!\!
  \sum_{j \in \mathcal{K} : j \neq i} \!\!\! \left(\text{Exp}(P^{i,j}(t,R_{t-}))-\I\right)
   \big(v( t,S_{t-}\!\circ\!\text{Exp}(P^{i,j}(t,R_{t-}) ,R_{t-},j) - v( t,S_{t-},R_{t-},i ) \\
&
+ \delta^{i,j}(t,Y_{t-})\big){\lambda^{i,j}(t)}
\bigg),
\end{align*}
where
\begin{align*}
\widehat{G}_t(r,i) &= \bigg(
\Sigma^\top\Sigma( t,r,i)
+\!\!\int_{\mathbf{R}^n}
\!\!\!\!
\left(\text{Exp}(\Sigma( t,r,i)\!^\top\!x)-\I_d\right)\!\!
\left(\text{Exp}(\Sigma( t,r,i)\!^\top\!x)-\I_d\right)\!^\top\nu(dx) \\
&
\quad\quad\quad\quad\quad\;\; + \!\!\!\!\sum_{j \in K, j\neq i}
\!\!\!\!\left(\text{Exp}(P^{i,j}(t,r))-\I_d\right)\!
\left(\text{Exp}(P^{i,j}(t,r))-\I_d\right)\!^\top \lambda^{i,j}(t)
\bigg),
\end{align*}
and $(\widehat{G}_t(R_{t-}, i) )^{-1}$ is the Moore-Penrose pseudo-inverse, and  $\circ$ denotes the Hadamard product (i.e. the componentwise product of matrices).

\subsubsection{Generalization of exponential L\'{e}vy model with regime-switching }\label{sec:gen-exp-reg}

Now we consider particular model of asset prices and present how our theory works.
We consider market with money account and one risky asset.
 The
dynamic of asset price process is given by
\begin{align}\label{eq:SDE-exp-levy}
    d S_t &= S_{t-}\!\!\left( \mathfrak{r}(C_{t-}) dt + \sigma(C_{t-})\!^\top\! d W_t  + \!\!\int_{\mathbf{R}^n}\!\!
    (e^{\sigma(C_{t-})\!^\top\!x } - 1 ) \widetilde{\Pi}( dx, dt)+ \!\!\!\!\sum_{i,j \in
    \mathcal{K}: j \neq i } \!\!\!\!(e^{\rho^{i,j}} - 1 )  d M^{i,j}_t\right), \\
    \nonumber
    d C_t &= \sum_{i,j \in \mathcal{K} : j \neq i} (j-i)\I_\set{ i} (C_{t-}) d
    N^{i,j}_t ,
\end{align}
where $\sigma: \mathcal{K} \rightarrow \mathbf{R}^n$, $\rho^{i,j}
\in \mathbf{R}$, $N^{i,j}$ are independent Poisson processes with
constant intensities $\lambda^{i,j} >0 $, and $\Pi(dx,dt)$ is a
Poisson random measure with intensity measure $\nu(dx)dt$ satisfying, for some $m \geq 1$,
\begin{equation}\label{eq:sigma-2m}
    \int_{|x|>1}e^{2 m\sigma(i)\!^\top\!x} \nu(dx) <
    \infty
    \qquad
    \forall
    i \in \mathcal{K} .
\end{equation}
Our model generalize a regime switching model with jumps, extensively studied amongs others by Chourdakis \cite{ch2005}, Mijatovic and Pistorius \cite{mijpis2011} or
Kim et.al. \cite{kflr2012}  for which $\rho^{i,j} = 0$  in  \eqref{eq:SDE-exp-levy}.
Note that the coefficients of  SDE \eqref{eq:SDE-exp-levy} satisfy standard assumptions for existence of a unique strong solution, and
 the  coordinate $C$ of solution $(S,C)$ is a Markov chain
with the state space $\mathcal{K}$.
Theorem \ref{thm:risk-min-spec} yields that the component $\phi$ of the risk-minimizing strategy
for a  dividend process $D$ with representation \eqref{eq:repr-H-P} is given, on the set $\set{C_{t-}=i}$, by
\begin{equation*}\label{eq:LS-cond-ort-1D}
\phi_t = \frac{\left( \sigma(i)\!^\top\!\delta_t + \int_{\mathbf{R}^n}
(e^{\sigma(i)\!^\top\!x} -1) J_t(x) \nu(dx) +
  \sum_{j \in \mathcal{K} : j \neq i} (e^{\rho^{i,j}}-1)  \gamma^{i,j}_t  \lambda^{i,j}
\right)}{S_{t-}\widehat{G}^{i}_t},
\end{equation*}
where
\begin{equation*}
\widehat{G}^{i}_t := \left( |\sigma(i)|^2 +
\int_{\mathbf{R}^n} (e^{\sigma(i)\!^\top\!x} -1)^2\nu(dx)   + \left(
\sum_{j \in \mathcal{K} : j \neq i}  (e^{\rho^{i,j}}-1)^2\lambda^{i,j}\right) \right).
\end{equation*}
If $D$ is given by \eqref{eq:D}, the growth condition \eqref{eq:LG-D} holds, and the value function $v$ is sufficiently smooth, then by Theorem \ref{thm:risk-min-markov}, on the set $\set{C_{t-}=i}$, the component $\phi$ has the representation
\begin{align*}
\phi_t = (\widehat{G}^{i}_t)^{-1} &\left(|\sigma(i)|^2  \nabla v( t,S_{t-},i ) + \int_{\mathbf{R}^n}
(e^{\sigma(i)\!^\top\!x} -1) \frac{v( t,S_{t-}e^{\sigma(i)\!^\top\!x},i ) - v( t,S_{t-} ,i)}{S_{t-}} \nu(dx) \right.\\
&\quad+
\left.
  \sum_{j \in \mathcal{K} : j \neq i} (e^{\rho^{i,j}}-1)  \frac{(v( t,S_{t-}e^{\rho^{i,j}},j ) - v( t,S_{t-},i ) + \delta^{i,j}(t,S_{t-}))}{S_{t-}}  \lambda^{i,j}
\right),
\end{align*}
We can use Theorem \ref{thm:risk-min-markov} since \eqref{eq:sigma-2m} implies \eqref{eq:sup-S2}.
In the case of generalized exponential L\'{e}vy model described above, to find the value function $v$ we
may try to solve the system of PIDE with the corresponding terminal conditions (see Theorem \ref{thm:Fey-Kac-nowy}), which in this model takes the following form:
    \begin{align}\label{eq:gen-exp-levy-PIDE}
    & \partial_t v(t,s,c)
    +
    \nabla v(t,s,c) \mathfrak{r}(c) s
    + \frac{1}{2}
     s |\sigma(c)|^2 \nabla^2 v(t,s,c)
    \\ \nonumber
     & + \int_{\mathbf{R}^n}\left( v( t,se^{\sigma(c)\!^\top\!x},c)
    - v( t,s,c) - \nabla v(t,s,c)
    (e^{\sigma(c)\!^\top\!x}-1) \right)\nu(dx)
    \\
    \nonumber & + \sum_{c' \in \mathcal{K} \setminus c }
    \left( v( t, s e^{\rho^{c,c'}},c') - v(t,s,c) - \nabla v(t,s,c)
    (e^{\rho^{i,j}}-1)+ \delta^{c,c'}(t,s) \right)\lambda^{c,c'}
    \\ \nonumber
    &+ g(t,s,c) = 0,\\
    \nonumber
    & v(T,s,c) = h(s,c).
    \end{align}

In the case of $\rho^{c,c} =0$ and $\nu \equiv 0 $, the above system
of PIDE reduces to that  considered by Becherer and Schweizer
\cite{bs2005}, who gave sufficient conditions for existence of
bounded classical solution to this system of PIDE. Generally, the problem of existence of classical solutions to this PIDE is complex and it is well known that in some cases classical solutions to this kind of PDE's may not exist (see Norberg \cite{nor2005}).
However, under some strong assumptions on functions $h,g, \delta^{i,j}, \lambda^{i,j}$,
we can find a classical solution to PIDE system
\eqref{eq:gen-exp-levy-PIDE}.
    To see this we assume, for simplicity,  that $\mathfrak{r} \equiv 0 $. Let us  consider the function
    \begin{align}\label{eq:clas-sol}
        v(t,s,c) &:= \mathbb{E}\left(  h(S^{0,s,c}_{T-t}, C^{0,s,c}_{T-t}  )
        + \int_0^{T-t} n(S^{0,s,c}_{u-},C^{0,s,c}_{u-}) du \right),
    \end{align}
    where $n$ is defined by
    \[
        n(s,c) := g(s,c) + \sum_{c' \neq c } \delta^{c,c'}(s)
        \lambda^{c,c'},
    \]
    and $(S^{0,s,c}_{u},C^{0,s,c}_{u})$ is the  solution to SDE \eqref{eq:SDE-exp-levy} starting from $(s,c)$ at time $0$.
We claim that if  $h$ and $n$ are  $\mathcal{C}^{2}$--function  in $s$ with
bounded first and second derivatives, then $v$ given by \eqref{eq:clas-sol} is a classical
solution to PIDE system \eqref{eq:gen-exp-levy-PIDE}. First let us note using
Markov property of solution to our SDE   that
    \begin{align*}
     v(t,S^{0,s,c}_t,C^{0,s,c}_t) =   \mathbb{E}\left(  h(S^{0,s,c}_T, C^{0,s,c}_T ) + \int_t^T g(S^{0,s,c}_{u-},C^{0,s,c}_{u-}) du
     + \sum_{i,j \neq i} \int_t^T \delta^{i,j}(S^{0,s,c}_{u-}) d H^{i,j}_u
     \Big|
     \mathcal{F}_t
     \right) .
    \end{align*}

By Theorem \ref{thm:FK-nec} we know that $v$ solves  PIDE
\eqref{eq:gen-exp-levy-PIDE} provided that  the function  $v$ is in  $\mathcal{C}^{1,2} \cap \mathcal{I}$, \eqref{eq:sup-S2} and \eqref{eq:LG-D} hold.
 Assumption \eqref{eq:sigma-2m} implies, by Remark \ref{rem:mom-est}a, that \eqref{eq:sup-S2} holds for $m=1$. Moreover our assumptions on functions $h$ and $n$ implies  \eqref{eq:LG-D}.
So it is enough to show required smoothness of $v$.
 Let us notice that $S^{0,s,c}_{u}$ is
continuously differentiable with respect to $s$, and the first two
derivatives are given by
    \begin{align*}
    & D^s_u := \partial_s  S^{0,s,c}_{u} = \frac{S^{0,s,c}_{u}}{s}
    \\
     & =\exp\left( \int_{0}^t J(\sigma(C^{0,s,c}_{u-})) du + \int_{0}^t\sigma( C^{0,s,c}_{u-}) d Z_u
     + \sum_{i,j\in \mathcal{K}: j \neq i } \int_0^t  \rho^{i,j} d H^{i,j}_u\right),
    \\
   & \partial^2_s  S^{0,s,c}_{u} = 0 .
    \end{align*}
    Therefore,  $D^s_u$  is square integrable for every $u$.
By assumption,  $h$ and $n$ are function of $\mathcal{C}^{2}$ class in $s$ with
the  bounded first and second derivatives, so by the dominated
convergence theorem we have
    \begin{align*}
        \partial_s v(t,s,c) &= \mathbb{E}\left(  \partial_s h(S^{0,s,c}_{T-t}, C^{0,s,c}_{T-t}  )
        D^s_{T-t} + \int_0^{T-t} \partial_s  n(S^{0,s,c}_{u-},C^{0,s,c}_{u-}) D^s_{u} du \right),
         \\
        \partial^2_s v(t,s,c) &= \mathbb{E}\left(  \partial^2_s h(S^{0,s,c}_{T-t}, C^{0,s,c}_{T-t}  )
         (D^s_{T-t})^2 + \int_0^{T-t} \partial^2_s  n(S^{0,s,c}_{u-},C^{0,s,c}_{u-}) (D^s_{u})^2 du
         \right).
    \end{align*}
Boundedness of derivatives of $h$ and $n$, together with
independence of $D^s_u $ of $s$,     yield that these derivatives
are bounded functions in $s$.
 This implies, by Remark \ref{rem:mom-est}b, that $v \in \mathcal{I}$.
To get that $v$ is $C^1$ class in $t$
we first apply the It\^{o} lemma to function $h$ and obtain
    \begin{align*}
        v(t,s,c) = h(s,c) + \mathbb{E}\left(  \int_0^{T-t} (\mathcal{A}_e h + n)(S^{0,s,c}_{u},C^{0,s,c}_{u}) du
        \right),
    \end{align*}
    where $\mathcal{A}_e$ denotes the operator
    \begin{align}
    \nonumber
        \mathcal{A}_e h (s,c) :=
            & \frac{1}{2}
     s^2 |\sigma(c)|^2 \nabla^2 h(s,c) + \sum_{c' \in \mathcal{K} \setminus c }
     \left( h( s e^{\rho^{c,c'}},c') - h(s,c)\right)\lambda^{c,c'}
    \\ \nonumber
     & + \int_{\mathbf{R}^n}\left( h( se^{\sigma(c)\!^\top\!x},c)
    - h( s,c) - \nabla h(s,c)
    (e^{\sigma(c)\!^\top\!x}-1) \right)\nu(dx).
    \end{align}
 Since the process
    \begin{align*}
      f(u) =  (\mathcal{A}_e h + n)(S^{0,s,c}_{u}, C^{0,s,c}_{u}  )
    \end{align*}
is continuous in probability, using the Pratt theorem we see that
\begin{align*}
        v(t,s,c) = h(s,c) +  \int_0^{T-t} \mathbb{E}\left( (\mathcal{A}_e h +
        n)(S^{0,s,c}_{u},C^{0,s,c}_{u})\ \right)\  du,
    \end{align*}
so $v \in \mathcal{C}^{1,2} \cap \mathcal{I}$, which implies our claim.
{
\subsubsection{Semi-Markovian regime switching models}
Now we present how in our framework a feedback mechanism in jumps of $Y$ and intensity of jumps of $C$ gives extra flexibility in modeling. As an example we present how  semi-Markov switching processes can be embedded in our framework. Let us recall that a semi-Markov nature of $C$ is reflected in  the fact that the compensator  $\lambda_{i,j}$ of jumps from $i$ to $j$ depends on time that process $C$ spends in a current state after the last jump.
A semi-Markov regime switching process can be embedded in our framework by  considering the following SDE
\begin{align*}
    d S_t &= S_{t-}\!\!\left( r dt + \sigma(C_{t-}) d W_t  + \!\!\int_{\mathbf{R}}\!\!
    (e^{\sigma(C_{t-}) x } - 1 ) \widetilde{\Pi}( dx, dt)\right) \\
    d R_t&= dt - \sum_{i,j \in \mathcal{K} : j \neq i} R_{t-} \I_\set{ i} (C_{t-}) d
    N^{i,j}_t \\
    d C_t &= \sum_{i,j \in \mathcal{K} : j \neq i} (j-i) \I_\set{ i} (C_{t-}) d
    N^{i,j}_t ,
\end{align*}
where $W$ is a Wiener process, $\sigma(i) \geq 0$, $N^{i,j}$ are the point processes with
intensities $\lambda^{i,j}$ being functions of $R_{t-}$,
$\Pi(dx,dt)$ is a
Poisson random measure with intensity measure $\nu(dx)dt$ satisfying
\begin{align}\label{eq:SM-levy-exp}
    \int_{|x|>1}e^{2 \sigma(i) x} \nu(dx) <
    \infty
    \qquad
    \forall
    i \in \mathcal{K}.
\end{align}
Note that the coefficients of this SDE satisfy assumptions of Theorem 5.3 \cite{jaknie2012}, and if functions $\lambda^{i,j}$ satisfy assumptions of this theorem
 then there exists a solution unique in law.
Moreover, the  coordinate $C$ of solution $(S,R,C)$, is a semi-Markov chain
with the state space $\mathcal{K}$. Theorem \ref{thm:risk-min-spec} yields that the component $\phi$ of the risk-minimizing strategy
for a  dividend process $D$ with representation \eqref{eq:repr-H-P} is given, on the set $\set{C_{t-}=i}$, by
\begin{equation*}
\phi_t = \frac{\left( \sigma(i)\!^\top\!\delta_t + \int_{\mathbf{R}^n}
(e^{\sigma(i)\!^\top\!x} -1) J_t(x) \nu(dx)
\right)}{S_{t-}\widehat{G}^{i}_t},
\end{equation*}
where
\begin{equation*}
\widehat{G}^{i}_t := \left( |\sigma(i)|^2 +
\int_{\mathbf{R}^n} (e^{\sigma(i)\!^\top\!x} -1)^2\nu(dx)  \right).
\end{equation*}
If $D$ is given by \eqref{eq:D}, the growth condition \eqref{eq:LG-D} holds, and the value function $v$ is sufficiently smooth, then we can use  Theorem \ref{thm:risk-min-markov},  since \eqref{eq:SM-levy-exp} implies \eqref{eq:sup-S2}. So, we have, on the set $\set{C_{t-}=i}$,
\begin{align*}
\phi_t = & (\widehat{G}^{i}_t)^{-1} \Big(|\sigma(i)|^2  \nabla v( t,S_{t-},R_{t-},i )  \\ &   \quad +\int_{\mathbf{R}^n}
(e^{\sigma(i)\!^\top\!x} -1) \frac{v( t,S_{t-}e^{\sigma(i)\!^\top\!x},R_{t-},i ) - v( t,S_{t-}, R_{t-} ,i)}{S_{t-}}  \nu(dx) \Big).
\end{align*}
In the case of semi-Markovian regime switching exponential L\'{e}vy model described above, one of the way of finding the value function $v$
is to solve the corresponding system of PIDE with the  terminal conditions
(see Theorem \ref{thm:Fey-Kac-nowy}).



\section{Appendix}
In this appendix we prove the results which is outside the main scope of our paper, but we need it in the proof of Theorem \ref{prop:attain}. This result  connects the form of GKW decomposition of $  \int_0^T \frac{1}{B_u} d D_u $  with the existence  of replication strategy for a payment stream  $D$.
\begin{lem}\label{lem:replication}
The following conditions are equivalent:
\begin{enumerate}
  \item There exist a $\mathbb{P}$-admissible strategy $\varphi$ which replicates a payment stream  $D$.
  \item In the GKW decomposition \eqref{eq:GKW} of $V^*$   we have $L^X \equiv 0$.
\end{enumerate}
\end{lem}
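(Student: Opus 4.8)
The plan is to reduce both conditions to statements about the square integrable martingale $V^*$ of \eqref{eq:intr-risk} and then to invoke uniqueness of the Galtchouk--Kunita--Watanabe decomposition \eqref{eq:GKW}. The bridge I would use is \cite[Prop. 16.14]{jaknie2011}: a strategy $\varphi=(\phi,\eta)$ is $\mathbb{P}$-admissible and replicates $D$ if and only if $V_t(\varphi)=B_t\,\mathbb{E}\big(\int_t^T B_u^{-1}\,dD_u\mid\mathcal{F}_t\big)$. Dividing by $B_t$ and writing $V^*_t=\int_0^t B_u^{-1}\,dD_u+\mathbb{E}\big(\int_t^T B_u^{-1}\,dD_u\mid\mathcal{F}_t\big)$, this is equivalent to the discounted identity
\[
V^*_t(\varphi)=V^*_t-\int_0^t\frac{1}{B_u}\,dD_u,
\]
which I would record first and use in both directions.

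For $(1)\Rightarrow(2)$, assume $\varphi$ is $\mathbb{P}$-admissible and replicates $D$. Replication entails that $\varphi$ is self-financing for $D$, so its cost process \eqref{eq:cost-D} is constant; hence $V^*_t(\varphi)+\int_0^t B_u^{-1}\,dD_u=V^*_0(\varphi)+\int_0^t\phi_u^\top\,dS^*_u$. Substituting the displayed replication identity into the left-hand side yields $V^*_t=V^*_0(\varphi)+\int_0^t\phi_u^\top\,dS^*_u$, i.e. $V^*$ is a constant plus a stochastic integral against $S^*$ with identically zero $S^*$-orthogonal part. Comparing with \eqref{eq:GKW} and appealing to uniqueness of the GKW decomposition forces $\phi=\phi^D$ and $L^X\equiv0$.

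For $(2)\Rightarrow(1)$, suppose $L^X\equiv0$, so that \eqref{eq:GKW} collapses to $V^*_t=V^*_0+\int_0^t\phi^D_u\,dS^*_u$. I would take the strategy $\varphi=(\phi,\eta)$ supplied by Theorem \ref{thm:risk-min-general}, with $\phi=\phi^D$ and $\eta_t=V^*_t-\int_0^t B_u^{-1}\,dD_u-(\phi^D_t)^\top S^*_t$, whose discounted wealth is exactly $V^*_t(\varphi)=V^*_t-\int_0^t B_u^{-1}\,dD_u$. Substituting into \eqref{eq:cost-D} gives $C^D_t(\varphi)=V^*_t-\int_0^t(\phi^D_u)^\top\,dS^*_u=V^*_0$, a constant, so $\varphi$ is self-financing for $D$; the integrability of $\phi^D$ inherited from the GKW decomposition together with square integrability of $V^*(\varphi)$ make $\varphi$ an admissible $L^2$-strategy. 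Since $V^*_t(\varphi)=\mathbb{E}\big(\int_t^T B_u^{-1}\,dD_u\mid\mathcal{F}_t\big)$, equivalently $V_t(\varphi)=B_t\,\mathbb{E}\big(\int_t^T B_u^{-1}\,dD_u\mid\mathcal{F}_t\big)$, \cite[Prop. 16.14]{jaknie2011} gives that $\varphi$ is $\mathbb{P}$-admissible and replicates $D$ (and in particular $V_T(\varphi)=0$, since $V^*_T=X$).

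The one genuinely delicate step is the appeal to uniqueness of the GKW decomposition: everything hinges on showing that the representation of $V^*$ obtained from a replicating strategy has vanishing $S^*$-orthogonal component, and this is precisely where self-financing for $D$ enters. The remainder is bookkeeping — matching the reformulated replication identity with \eqref{eq:GKW} and checking the integrability requirements of a $\mathbb{P}$-admissible $L^2$-strategy — and presents no real difficulty once \cite[Prop. 16.14]{jaknie2011} is available.
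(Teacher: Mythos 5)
Your proposal is correct and follows essentially the same route as the paper: both directions rest on the characterization $V_t(\varphi)=B_t\,\mathbb{E}\big(\int_t^T B_u^{-1}\,dD_u\mid\mathcal{F}_t\big)$ from \cite[Prop.~16.14]{jaknie2011} combined with uniqueness of the GKW decomposition, the only difference being that you spell out explicitly the two steps the paper delegates to \cite[Lem.~16.7]{jaknie2011} and \cite[Lem.~16.17]{jaknie2011} (rewriting the self-financing condition in discounted terms, and constructing the replicating strategy from a trivial orthogonal part).
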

\begin{proof}
$\Longleftarrow$
    Follows immediately from \cite[Lem. 16.17]{jaknie2011}.

$\Longrightarrow$
    Assume that $\varphi$ is a $\mathbb{P}$-admissible strategy which replicates $D$.
    By \cite[Prop. 16.14]{jaknie2011}  we have that
    \[
        V_t(\varphi) = B_t \mathbb{E} \left( \int_t^T \frac{1}{B_u} d D_u |
\mathcal{F}_t \right).
    \]
    Hence
    \[
        \frac{V_t(\varphi)}{B_t} - \int_0^t \frac{1}{B_u} d D_u =  \mathbb{E} \left( \int_0^T \frac{1}{B_u} d D_u |
\mathcal{F}_t \right).
    \]
    Using \cite[Lem. 16.7]{jaknie2011}  we obtain
    \[
        V_0(\varphi) + \int_0^t \phi_u^\top d S^{*}_u =  V^*_t.
    \]
\end{proof}

\end{document}